\newcommand{\Img}{\mathrm{{Im}}\,}
\newcommand{\ord}{\mathrm{rank}}
\newcommand{\set}[2][]{{\left\{#2\right\}}^{#1}}
\newcommand{\cupdot}{\mathrel{\dot{\cup}}}
\newcommand{\bools}[1]{\set[#1]{0,1}}
\newcommand{\ceil}[1]{\left\lceil{#1}\right\rceil}
\newcommand{\floor}[1]{\left\lfloor{#1}\right\rfloor}
\newcommand{\abs}[1]{\left|{#1}\right|}
\newcommand{\size}[1]{\abs{#1}}
\newcommand{\condset}[2]{\set{#1 \; \left| \;#2 \right. }}
\newcommand{\restrict}[1]{| _{#1}}
\newcommand{\inveps}{\frac{1}{\varepsilon}}
\newcommand\eps{\varepsilon}
\newcommand{\poly}{\mathrm{poly}}
\newcommand{\ct}[1]{\#_{#1}}
\newcommand{\less}{\mathrel{<\hspace{-5pt}\raisebox{1.5pt}{\scalebox{0.4}{E}}}}
\newcommand{\back}{\textsc{Back}}
\newcommand{\class}[1]{\ensuremath{\mathbf{{#1}}}}
\renewcommand{\P}{\class{P}}
\newcommand{\E}{\class{E}}
\newcommand{\EXP}{\class{EXP}}
\newcommand{\FP}{\class{FP}}
\newcommand{\OLtP}{\class{OL_2^{\class{\scriptscriptstyle P}}}}
\newcommand{\prOLtP}{\class{prOL_2^{\class{\scriptscriptstyle P}}}}
\newcommand{\LtP}{\class{L_2^{\class{\scriptscriptstyle P}}}}
\newcommand{\prLtP}{\class{prL_2^{\class{\scriptscriptstyle P}}}}
\newcommand{\SigmatP}{\class{\Sigma_2^{\class{\scriptscriptstyle P}}}}
\newcommand{\StP}{\class{S_2^{\class{\scriptscriptstyle P}}}}
\newcommand{\OtP}{\class{O_2^{\class{\scriptscriptstyle P}}}}
\newcommand{\NOtP}{\class{NO_2^{\class{\scriptscriptstyle P}}}}
\newcommand{\StE}{\class{S_2^{\class{\scriptscriptstyle E}}}}
\newcommand{\OtE}{\class{O_2^{\class{\scriptscriptstyle E}}}}
\newcommand{\LtE}{\class{L_2^{\class{\scriptscriptstyle E}}}}
\newcommand{\NP}{\class{NP}}
\newcommand{\ONP}{\class{ONP}}
\newcommand{\RP}{\class{RP}}
\newcommand{\PH}{\class{PH}}
\newcommand{\Ppoly}{\class{P/poly}}
\newcommand{\MA}{\class{MA}}
\newcommand{\ZPP}{\class{ZPP}}
\newcommand{\BPP}{\class{BPP}}
\newcommand{\BPPp}{\class{BPP_{path}}}
\newcommand{\SBP}{\class{SBP}}
\newcommand{\AM}{\class{AM}}
\newcommand{\prNP}{\class{prNP}}
\newcommand{\prONP}{\class{prONP}}
\newcommand{\prOMA}{\class{prOMA}}
\newcommand{\prOtP}{\class{prO_2^{\class{\scriptscriptstyle P}}}}
\newcommand{\prStP}{\class{prS_2^{\class{\scriptscriptstyle P}}}}
\newcommand{\prMA}{\class{prMA}}
\newcommand{\OprMA}{\class{OprMA}}
\newcommand{\prSBP}{\class{prSBP}}
\newcommand{\prAM}{\class{prAM}}
\newcommand{\prBPPp}{\class{prBPP_{path}}}
\newtheorem{THEOREM}{Theorem}
\newtheorem{COROLLARY}[THEOREM]{Corollary}
\newtheorem{theorem}{Theorem}[section]
\newtheorem{conjecture}[theorem]{Conjecture}
\newtheorem{corollary}[theorem]{Corollary}
\newtheorem{definition}[theorem]{Definition}
\newtheorem{lemma}[theorem]{Lemma}
\newtheorem{observation}[theorem]{Observation}
\newtheorem{remark}[theorem]{Remark}
\newtheorem{proposition}[theorem]{Proposition}
\newcommand{\problem}[1]{\ensuremath{\textsc{#1}}}
\newcommand{\SAT}{\problem{SAT}}
\newcommand{\ehardtt}{\problem{$\eps$\text{-}Hard-tt}}
\newcommand{\Avoid}{\problem{Avoid}}
\newcommand{\PRG}{\problem{PRG}}
\newcommand{\LOP}{\problem{LOP}}
\newcommand{\MIN}{\problem{MIN}}
\newcommand{\SSE}{\problem{SSE}}
\newcommand{\AP}{\problem{ApproxCount}}
\newcommand{\WSSE}{\problem{WSSE}}
\newcommand{\CircuitSAT}{\problem{CircuitSAT}}
\newcommand{\YES}{\problem{\tiny YES}}
\newcommand{\NO}{\problem{\tiny NO}}
\newcommand{\bsize}{\mathsf{Size}}
\newcommand{\PTIME}{\P}
\newcommand{\eqdef}{:=}
\newcommand{\N}{\mathbb{N}}
\title{Upper and Lower Bounds for the Linear Ordering Principle%
}
\author{%
Edward A. Hirsch\thanks{Department of Computer Science, Ariel University, Israel. This research was conducted with the support of the State of Israel, the Ministry of Immigrant Absorption, and the Center for the Absorption of Scientists. Email: {\tt edwardh@ariel.ac.il}} \and Ilya Volkovich\thanks{Computer Science Department, Boston College, Chestnut Hill, MA. 
    Email: {\tt ilya.volkovich@bc.edu}}%
}
\begin{document}
\maketitle

\begin{abstract}
     Korten and Pitassi (FOCS, 2024) defined a new\footnote{\label{foo:l2p}Note that this notation had been used in the past \cite{Sch83} for a very different class, which has been apparently forgotten after that.} complexity class $\LtP$ as the polynomial-time Turing closure of the Linear Ordering Principle (a total function extending finding the minimum of an order \cite{CK98} to the case where the order is not linear). They put it between $\MA$ (Merlin--Arthur protocols) and $\StP$ (the second symmetric level of the polynomial hierarchy).
     
     In this paper we sandwich $\LtP$ between $\P^\prMA$ and $\P^\prSBP$.
     (The oracles here are promise problems, and $\SBP$ is the only known class between $\MA$ and $\AM$.)
     The containment in $\P^\prSBP$ is proved via an iterative process
     that uses a $\prSBP$ oracle to estimate the average order rank of a subset and find the minimum of a linear order.

     Another containment result of this paper is $\P^\prOtP\subseteq \OtP$
     (where $\OtP$ is the input-oblivious version of $\StP$).
     These containment results altogether have several byproducts:
     \begin{itemize}
         \item We give an affirmative answer to an open question posed by Chakaravarthy and Roy (Computational Complexity, 2011)
        whether $\P^{\prMA}\subseteq \StP$, thereby settling the relative standing of the existing (non-oblivious) Karp–Lipton–style collapse results of \cite{CR11} and  \cite{Cai2007},
        
        \item We give an affirmative answer to an open question of Korten and Pitassi
        whether a Karp--Lipton--style collapse
        can be proven for $\LtP$,
        
        \item We show that the Karp--Lipton--style collapse to $\P^\prOMA$ 
        is actually better than both known collapses to $\P^\prMA$
        due to Chakaravarthy and Roy (Computational Complexity, 2011) and to $\OtP$
        also due to Chakaravarthy and Roy (STACS, 2006).
        Thus we resolve the controversy between previously incomparable Karp--Lipton collapses
        stemming from these two lines of research.
     \end{itemize}

\pagebreak

\tableofcontents

\pagebreak

    \end{abstract}

%======================================================================================================
\section{Introduction}
The seminal theorem of Richard M. Karp and Richard J. Lipton \cite{KarpLipton80} connected non-uniform and uniform complexity by demonstrating a collapse of the Polynomial Hierarchy assuming $\NP$ has polynomial-size Boolean circuits. This collapse has since been very instrumental in transferring lower bounds against Boolean circuits of \emph{fixed-polynomial}\footnote{That is, for any $k \in \N$, the class contains a language that cannot be computed by Boolean circuits of size $n^k$, i.e. a language outside of $\bsize[n^k]$.} size to smaller classes of the Polynomial Hierarchy. Since then, these results were strengthened in many ways
leading to ``minimial''
complexity classes that have such lower bounds and to which the Polynomial Hierarchy collapses.

%------------------------------------------------------------------------------------------------------
\subsection{Background}

%---
\subsubsection{Classes Based on Symmetric Alternation}

An important notion in this context is that of \emph{symmetric alternation}. Namely, one of the best collapses was based on the following idea (\cite{Cai2007}, attributed to Sengupta): if polynomial-size circuits for $\SAT$ exist, two provers (defending the answers `yes' and `no', respectively) send such circuits to a polynomial-time bounded verifier who can, in turn, use them to verify membership in any language in $\PH$. The corresponding class $\StP$ \cite{Canetti96,RS98} was thus shown to have fixed-polynomial circuit lower bounds. \textit{(In Section~\ref{sec:defs} we provide formal definitions
for all less known classes we use.)}

Indeed, since $\NP \subseteq \StP$, if $\SAT$ requires superpolynomial circuits, we are done. Otherwise, the Polynomial Hierarchy, which is known to contain ``hard'' languages (that is, for every $k \in \N$, $\PH \not \subseteq \bsize[n^k]$) by Kannan's theorem \cite{Kannan82}, collapses to $\StP$ and so do these hard languages. This technique has been known as a \emph{win-win argument} in the literature \cite{Kannan82,BCGKT96,KW98,Vinodchandran05, Cai2007,Santhanam09,CR11,Volkovich14c,IKV23a}. 
Chen et al. \cite{CMMW19} prove that there is a bidirectional relationship between fixed-polynomial lower bounds and Karp--Lipton--style theorems. 
In the linear-exponential regime,
while the win-win argument can be extended to obtain \emph{superpolynomial} lower bounds for $\StE$ (the linear-exponential version of $\StP$), it falls short of achieving truly \emph{exponential} lower bounds, as it encounters the so-called “half-exponential” barrier (see \cite{MVW99}).

Upon further inspection, one can observe that the presumed polynomial-size circuits for $\SAT$  do not actually depend on the input \emph{itself}, but rather on its \emph{length}. Based on this observation, the collapse was deepened to the \emph{input-oblivious} version of $\StP$, called $\OtP$ \cite{CR06}. Yet, since $\OtP$ is not known and, in fact, \emph{not believed\/} to contain $\NP$, the fixed-polynomial lower bounds
do not (immediately) carry over to $\OtP$. 

This state of affairs remained unchanged for about fifteen years until a significant progress was made when Kleinberg et al. \cite{KKMP21} initiated the study of total functions beyond $\class{TFNP}$. While Karp--Lipton's theorem has not been improved, lower bounds against $\bsize[n^k]$ were pushed down to $\OtP$ \cite{GLV24} and $\LtP$ \cite{KP24}, a new$^\text{\ref{foo:l2p}}$ important class which we describe in more detail below. At the same time, truly exponential lower bounds were established for $\StE$ \cite{li2023symmetric} (as it turns out, $\StE=\OtE$ \cite{GLV24}) and $\LtE$ \cite{KP24}.

An important feature of these new results was that they were based on reducing finding a hard function to a total search problem. Namely, the works of Korten \cite{korten2022hardest} and Li \cite{li2023symmetric} reduced the question to the so-called \emph{Range Avoidance} problem: given a function $f\colon\{0,1\}^n\to\{0,1\}^m$ with $m>n$, represented by a Boolean circuit, find a point outside its image. This problem is known in the bounded arithmetic community as the \emph{dual Weak Pigeonhole Principle} (d\problem{WPHP})
and has notable implications in proof complexity (see e.g. \cite{PWW88,JerdWPHP} and the survey \cite{KraGen25}). In \cite{chen2023symmetric,li2023symmetric}, Range Avoidance has been reduced to symmetric alternation. Subsequently, Korten and Pitassi \cite{KP24} reduced Range Avoidance to the \emph{Linear Ordering Principle}: given an implicitly described ordering relation, either find the smallest element or report a breach of the linear order axioms (for the case of a linear order this problem is known as \problem{MIN}
in the bounded arithmetic community \cite{CK98}).
A polynomial-time Turing closure of this principle gives rise to a new class $\LtP\subseteq\StP$: a version of $\StP$ where 
the two provers provide points of a polynomial-time verifiable linear order on binary strings of a certain length
(each point starting with the corresponding answer $0$ or $1$),
and the prover that provides the smaller element wins.

%---
\subsubsection{Classes Based on Merlin-Arthur Protocols}

In a parallel line of research, the same questions were considered
for classes based on Merlin--Arthur proofs: Santhanam \cite{Santhanam09}
has shown fixed-polynomial lower bounds for \emph{promise problems} possessing such proofs
(i.e. the class $\prMA$).
In \cite{CR11}, Chakaravarthy and Roy have shown a Karp--Lipton--style collapse and thus fixed-polynomial size lower bounds for the class $\P^\prMA$.
In particular, they presented a new upper bound for $\StP$ by showing that $\StP\subseteq\P^\prAM$.
Nonetheless, the relationship between $\P^\prMA$ and the classes of 
symmetric alternation (including $\StP$, $\OtP$, and then-unknown $\LtP$) remained open.

Combining their upper bound for $\StP$ with a result of \cite{AKSS95}, that
$\NP\subseteq\Ppoly$ implies an ``internal collapse'' $\MA=\AM$\footnote{The internal collapse goes through for the promise versions of the classes as well.},  \cite{CR11} concluded that the Polynomial Hierarchy collapses all the way to $\P^\prMA$. 
Subsequently, by applying the win-win argument, they obtained fixed-polynomial bounds for $\P^\prMA$, which (unlike $\prMA$) is a class of languages. It is to be noted though that 
since $\prMA$ is not a class of languages --- while $\P^\prMA$ is, there is no {\itshape immediate\/} way to carry any lower bound against $\prMA$ over to $\P^\prMA$: it is not clear how to leverage (even) Turing reductions to construct a specific language consistent with a given promise problem.

Babai, Fortnow, and Lund \cite{BFL91} prove that if $\EXP\subseteq\Ppoly$,
then $\EXP=\MA$. Although this is a much larger class, the proof has the advantage
that it does not relativize. More collapses in the exponential regime have been proved since then \cite{IKW02, BuhrmanHomer92}, and the win-win argument yields superpolynomial lower bounds for some of them: $\class{MAEXP} \nsubseteq \Ppoly$ \cite{BFT98}.

%------------------------------------------------------------------------------------------------------
\subsection{Promise Problems as Oracles}
\label{sec:promise}

An important note is due on the use of a promise problem
as an oracle, because the literature contains several
different notions for this.
The collapse result of Chakaravarthy and Roy that we use follows
the \emph{loose oracle access}
mode adopted in \cite{CR11}. 
Namely, oracle %requests to
queries outside of the promise set are allowed
and no particular behaviour of the computational model
defining the promise class is expected on such queries.
At the same time, the answer of the (base) machine using this oracle
must be correct \textit{irrespective\/} of the oracle's answers
to such queries and no assumption is made on
the internal consistency of the answers outside of the promise set.

For deterministic polynomial-time oracle machines
this approach is equivalent to querying any language
\emph{consistent} with the promise problem, that is,
a language that contains all the ``yes'' instances and does not contain the ``no'' instances of the promise problem.
One can also extend it to complexity classes: 
a class $\mathcal{C}$ of languages is consistent with 
a class $\mathcal{D}$ of promise problems
if for every problem $\Pi\in\mathcal{D}$ there is a language $L\in\mathcal{C}$
consistent with $\Pi$.
The equivalence between the approaches follows from the works of \cite{GS88,BF99},
where the latter approach has been adopted. Nonetheless, we include a formal proof of this equivalence in Section~\ref{subsec:loose} for the sake of completeness
and for further reference.

%------------------------------------------------------------------------------------------------------
\subsection{Our Contribution}

%---
In this paper we prove the inclusions $\P^\prMA\subseteq \LtP\subseteq\P^\prSBP$ and $\P^\prOtP\subseteq \OtP$,
which not only give new upper and lower bounds for $\LtP$,
but also demonstrate that the Karp--Lipton--collapse
to $\P^\prOMA$ is currently the best one both for symmetric--alternation--based and Merlin--Arthur--based classes of languages.

\subsubsection{A New Lower Bound for $\LtP$ and the Strongest Non-Input-Oblivious Karp--Lipton Collapse}

%%%
Two open questions regarding symmetric alternation have been stated explicitly:
\begin{itemize}
    \item whether $\P^\prMA$ is contained in $\StP$ \cite{CR11} (note that for these two classes
    Karp--Lipton style theorems have been proved by \cite{CR11} and by Samik Sengupta [unpublished], respectively),
    \item whether a Karp--Lipton--style theorem holds for $\LtP$ \cite{KP24}.
\end{itemize}
In this paper we resolve both these questions affirmatively by showing the following containment. (Recall that $\LtP \subseteq \StP$.)

\begin{THEOREM}
\label{THM:main1}
$\P^\prMA\subseteq \LtP$.    
\end{THEOREM}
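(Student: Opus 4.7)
My plan combines three ingredients. First, since $\LtP$ is the polynomial-time Turing closure of the Linear Ordering Principle (LOP), it is closed under polynomial-time Turing reductions: $\P^\LtP=\LtP$ is immediate from the definition, as a $\P$-machine with an $\LtP$-oracle can be unfolded into a $\P$-machine that directly queries LOP. Second, by the loose-oracle-access equivalence recalled in Section~\ref{sec:promise}, to establish $\P^\prMA\subseteq\P^\LtP$ it suffices to show that $\LtP$ is \emph{consistent} with $\prMA$ in the sense of that section: for every $\Pi\in\prMA$ there must be a language $L\in\LtP$ with $\Pi_{YES}\subseteq L$ and $\Pi_{NO}\cap L=\emptyset$. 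The main technical content of the theorem lies in producing such an $L$.

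To construct the consistent language I would adapt the Korten--Pitassi proof of $\MA\subseteq\LtP$ at the level of the MA verifier rather than of the language. Fix $\Pi\in\prMA$ with verifier $V$, amplified so that on promise-YES inputs some Merlin witness achieves acceptance probability $\geq 1-2^{-\poly(n)}$ and on promise-NO inputs every witness accepts with probability $\leq 2^{-\poly(n)}$. For each input $q$, I would build a polynomial-time verifiable linear order $R_q$ whose elements are prefixed by a candidate answer bit $b\in\{0,1\}$: $(1,w,\ldots)$ certificates pair a Merlin witness $w$ with Sengupta-style auxiliary data (shifts), while $(0,\ldots)$ certificates provide shifts intended to falsify all witnesses. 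The ordering between a YES-certificate and a NO-certificate would be decided by running $V$ on the appropriately shifted randomness, with residual ties broken lexicographically so that $R_q$ is a genuine linear order on \emph{every} input $q$. On $q\in\Pi_{YES}$, amplification guarantees some $(1,w^{*},\ldots)$ certificate dominates every $(0,\cdot)$ certificate, forcing the minimum to start with $1$; on $q\in\Pi_{NO}$ the symmetric argument forces the minimum to start with $0$. Setting $L_V=\{q:\text{the first bit of the minimum of }R_q\text{ equals }1\}$ yields a language decided by a single LOP query, hence in $\LtP$, and by construction consistent with $\Pi$. On out-of-promise inputs $L_V$ still takes some deterministic value, which is all loose oracle access requires.

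Putting everything together: given $L\in\P^\prMA$ decided by a $\P$-oracle machine $M$ using a $\prMA$-oracle $\Pi$, construct $L_V\in\LtP$ as above and observe that the loose-oracle-access equivalence lets us replace $\Pi$ by $L_V$ in $M$ without changing the decided language. Then $L\in\P^{L_V}\subseteq\P^\LtP=\LtP$. The main obstacle in this plan is guaranteeing that $R_q$ is a bona-fide linear order on \emph{every} input $q$, not merely on those satisfying the MA promise: the Sengupta-style shift comparison is antisymmetric and total but not automatically transitive once the MA gap degrades. The plan is to totalise the order by bundling sufficient shift data into each certificate and breaking all residual ties lexicographically, then to re-examine the amplified error analysis to verify that the minimum of $R_q$ still encodes the correct MA verdict on promise inputs. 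Once this bookkeeping is in place, the rest of the argument reduces to composition of $\P$-reductions.
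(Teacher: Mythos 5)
Your outer framing is right and matches the paper: $\P^{\LtP}=\LtP$, loose oracle access is equivalent to access through a consistent language, so it suffices to exhibit, for each $\Pi\in\prMA$, a language in $\LtP$ consistent with $\Pi$. But the technical core of your plan diverges from the paper's proof, and the step you yourself flag as ``the main obstacle'' is precisely where the approach breaks down. The Korten--Pitassi proof of $\MA\subseteq\LtP$ is \emph{not} a Sengupta-style construction of a linear order from the MA verifier; it is a derandomization argument, and the paper's proof of Theorem~\ref{THM:main1} follows that route. Concretely: the $\P$ machine first uses its $\LtP$ oracle to compute a pseudorandom generator $G=(g_1,\dots,g_m)$ fooling circuits of size $s(n)$, via the chain $\PRG\to\Avoid\to\LOP$ (Propositions~\ref{prop:PRG2Avoid} and~\ref{prop:Avoid2LOP}). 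Assuming perfect completeness for the amplified verifier $A$, each $\prMA$ query is then replaced by the plain $\NP$ query $\exists w\,\bigwedge_i A(x,w,g_i)$, which is answered correctly on the promise set (on \textsc{no} instances the PRG guarantee keeps the accepted fraction of the $g_i$ below $1$), and $\NP\subseteq\LtP$. Out-of-promise queries are valid $\NP$ queries whose answers do not matter. No linear order is ever built from the MA verifier itself.

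The gap in your version is that the shift-based comparison between certificates is a polynomial-time verifiable \emph{tournament} (total and antisymmetric) but not transitive, and Definition~\ref{def:LtPalt} demands a genuine strict linear order on \emph{all} of $\{0,1\}^{s(n)}$ for \emph{every} input $q$, promise or not. Lexicographic tie-breaking repairs only ties, not cycles; ``bundling more shift data'' does not make a non-transitive comparison transitive. The distinction between a verifiable tournament (which suffices for $\StP$) and a verifiable linear order is exactly what separates $\StP$ from $\LtP$ and is the raison d'\^etre of the class, so this is not bookkeeping one can defer --- as written, your construction would at best place the consistent language in $\StP$, not in $\LtP$. To close the gap you would either need a genuinely new way to linearize the Sengupta game, or you should switch to the derandomization route the paper takes, where the only linear orders used are the ones already supplied by the reduction $\Avoid\to\LOP$.
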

Combining this theorem with a result of Chakaravarthy and Roy \cite{CR11} that $\NP\subseteq\Ppoly$ implies the collapse $\PH=\P^{\prMA}$, we obtain a Karp--Lipton--style collapse theorem for $\LtP$, thus resolving an open question posed in \cite{KP24}.

\begin{COROLLARY}
\label{COR:main2}
    If\/ $\NP\subseteq\Ppoly$,
    then\/ $\PH=\LtP=\P^{\prMA}$.
\end{COROLLARY}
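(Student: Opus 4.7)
The plan is to derive this collapse by chaining together three inclusions that, under the hypothesis $\NP\subseteq\Ppoly$, form a cycle. First, I would invoke the Chakaravarthy--Roy collapse result \cite{CR11}: under this hypothesis, the Polynomial Hierarchy collapses all the way down to $\P^{\prMA}$, i.e.\ $\PH=\P^{\prMA}$. This provides the upper bound $\PH\subseteq\P^{\prMA}$.

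Next I would invoke Theorem \ref{THM:main1}, the new containment proved in this paper, which unconditionally gives $\P^{\prMA}\subseteq\LtP$. Combined with the previous step this already shows $\PH\subseteq\LtP$. To close the loop, I would use the standard containments $\LtP\subseteq\StP\subseteq\PH$: the first was established by Korten and Pitassi \cite{KP24} and is noted in the excerpt, while the second is well known since symmetric alternation sits inside the second level of the hierarchy. Chaining everything yields
\[
\PH\subseteq\P^{\prMA}\subseteq\LtP\subseteq\StP\subseteq\PH,
\]
forcing equality throughout, and in particular $\PH=\LtP=\P^{\prMA}$ as desired.

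There is no real technical obstacle here: once Theorem \ref{THM:main1} and the Chakaravarthy--Roy collapse are in hand, the corollary is a short bookkeeping argument. The one subtlety worth flagging is that $\prMA$ is a class of promise problems, so ``$\P^{\prMA}$'' must be interpreted via the loose-oracle-access convention discussed in Section \ref{sec:promise}; both the Chakaravarthy--Roy collapse and Theorem \ref{THM:main1} are formulated under this convention, so every inclusion in the chain refers to the same object and no translation between conventions is required.
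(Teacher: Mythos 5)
Your proposal is correct and matches the paper's own argument: the paper likewise obtains $\PH=\P^{\prMA}$ from \cite{CR11}, applies Theorem~\ref{THM:main1} to get $\P^{\prMA}\subseteq\LtP$, and closes the cycle via $\LtP\subseteq\StP\subseteq\PH$. Your remark about keeping the loose-oracle-access convention consistent across the chain is exactly the right subtlety to flag and is handled the same way in the paper.
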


Together with the result of \cite{Cai2007}, it
lines up all known non-input-oblivious 
classes for which a Karp--Lipton--style collapse has been shown:
\[\P^\prMA\subseteq\LtP\subseteq\StP\subseteq\ZPP^\NP\subseteq\SigmatP.\]
%%%
%---
\subsubsection{A New Upper Bound for $\LtP$}
%%%
Another important result of this work is a new upper bound on $\LtP$:
we prove that $\LtP\subseteq\P^\prSBP$.
The best known upper bound prior to our result followed from \cite{KP24,CR11}:
$\LtP\subseteq\StP\subseteq\P^\prAM$.

\begin{THEOREM}
\label{THM:main3}
$\LtP\subseteq\P^\prSBP$.
\end{THEOREM}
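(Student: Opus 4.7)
The plan is to show the stronger statement that the Linear Ordering Principle itself lies in $\FP^{\prSBP}$: given a polynomial-time verifiable linear order $<$ on $\{0,1\}^N$, I will exhibit a deterministic polynomial-time oracle machine that outputs the $<$-minimum using $\prSBP$ queries. (On input a relation that is not a linear order the machine may instead return a local violation, detectable alongside the descent.) Since $\LtP$ is the polynomial-time Turing closure of the Linear Ordering Principle and $\P^{\P^{\prSBP}}=\P^{\prSBP}$, the theorem will follow.

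The first ingredient is \emph{high-precision approximate counting}. A single $\prSBP$ query on a polynomial-time verifiable set $A$ and threshold $T$ decides whether $|A|\geq 2T$ or $|A|\leq T$, yielding a factor-$2$ approximation of $|A|$. Replacing $A$ by its $k$-fold Cartesian power reduces the multiplicative gap to $2^{1/k}$, so taking $k=\poly(N)$ produces a $(1+1/\poly(N))$-approximation. Applied to the polynomial-time verifiable sets $\{z:z<y\}$ and $\{(y,z)\in S\times\{0,1\}^N : z<y\}$, this yields same-precision approximations of the rank $R(y)\eqdef|\{z:z<y\}|$ of any element $y$ and of the average rank $\mathrm{AR}(S)\eqdef\frac{1}{|S|}\sum_{y\in S} R(y)$ of any polynomial-time verifiable subset $S\subseteq\{0,1\}^N$.

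Next, the \emph{iterative descent}. Starting from any $x_0$, I construct $x_0,x_1,\ldots$ with strictly decreasing ranks by invoking a subroutine that, given $x$ with $R(x)>0$, returns $y<x$ with $R(y)\leq R(x)/2\cdot(1+o(1))$. Since $R(x)\in\{0,\ldots,2^N-1\}$, $O(N)$ iterations drive the rank to $0$ and produce the minimum. The subroutine builds $y$ bit-by-bit: with current prefix $p$, set $T_p=\{z:z<x,\, z\text{ extends } p\}$ and use the $\prSBP$ oracle (padding each set with one dummy element so that the factor-$2$ gap reliably separates $|T_{pb}|=0$ from $|T_{pb}|\geq 1$) to estimate both $\mathrm{AR}(T_{p0})$ and $\mathrm{AR}(T_{p1})$, and extend $p$ by the bit whose nonempty branch carries the smaller estimate. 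Because $\mathrm{AR}(T_p)$ is the convex combination of $\mathrm{AR}(T_{p0})$ and $\mathrm{AR}(T_{p1})$ weighted by $|T_{p0}|$ and $|T_{p1}|$, the minimum of the two children is at most the parent; so the true AR along the chosen path is monotonically nonincreasing up to the approximation slack. After $N$ bit-steps the set $T_p$ collapses to a singleton $\{y\}$ with $\mathrm{AR}(T_p)=R(y)$, and comparing with the initial value $\mathrm{AR}(T_\emptyset)=(R(x)-1)/2$ yields $R(y)\leq R(x)/2\cdot(1+o(1))$, strictly smaller than $R(x)$ as needed.

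The main obstacle will be controlling the compounding approximation error across the $N$ bit-level decisions and the $O(N)$ outer iterations. This is addressed by fixing the amplification parameter $k$ to a sufficiently large polynomial in $N$ so that each approximate bit-choice inflates the AR bound by at most $(1+1/N^3)^2$, keeping the cumulative blow-up over all $O(N^2)$ queries safely below any prescribed constant. Every outer iteration then shrinks the rank by a factor strictly less than $1$, and the whole descent terminates in polynomially many $\prSBP$ queries.
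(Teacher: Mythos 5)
Your proposal is correct and follows essentially the same route as the paper's proof: amplify a single $\SSE$-type $\prSBP$ query via Cartesian/tensor powers to get $(1+1/\poly)$-factor approximate counting, use it to estimate average order ranks, descend bit-by-bit to the branch with smaller estimated average rank (exploiting the convex-combination identity), and iterate the resulting constant-factor rank reduction $O(N)$ times. The minor differences (starting from $\{z:z<x\}$ rather than $\{z:z\le x\}$, and padding to detect empty branches rather than querying $(C_b,1)$ directly) are cosmetic.
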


\noindent In summary, our two new inclusions (Theorems~\ref{THM:main1} and~\ref{THM:main3}) yield the ``normal'' (non-input-oblivious) part of 
Figure~\ref{fig:main}.

\begin{figure}
\begin{center}
\begin{tikzpicture}[scale=1.5]
\node(MA) at (-3,0) {$\MA$};    
\node(SBP) at (3.3,0) {$\SBP$};    
\node(AM) at (5,0) {$\AM$};    
\node(PprMA) at (-2.5,1) {$\P^\prMA$};    
\node(LtP) at (1.5,1) {$\LtP$};
\node(PprSBP) at (3.8,1) {$\P^\prSBP$};    
\node(StP) at (3,2) {$\StP$};    
\node(PprAM) at (5.5,1) {$\P^\prAM$};    
\draw[->](MA) to (SBP);
\draw[->](SBP) to (AM);
\draw[->](PprMA) to node[above]{\textcolor{blue}{\footnotesize Theorem~\ref{THM:main1}}}  (LtP);
\draw[->](LtP) to node[above]{\textcolor{blue}{\phantom{Wi}\footnotesize Theorem~\ref{THM:main3}}}  (PprSBP);
\draw[->] (LtP) to (StP);
\draw[->](PprSBP) to (PprAM);
\draw[->](StP) to (PprAM);
\draw[->](MA) to (PprMA);
\draw[->](SBP) to (PprSBP);
\draw[->](AM) to (PprAM);
% input-oblivious
\textcolor{green!50!black}{
\footnotesize
\node(PprOMA) at (-3.5,3) {$\P^\prOMA$};    
\node(POLtPprONP) at (-1.5,3) {$\P^{\OLtP,\prONP}_{\vphantom{\OLtP}}$};
\node(OtPLtP) at (0.5,3) {$\OtP\cap\LtP$};
\node(PprOtP) at (2.5,4) {$\P^\prOtP$};
\node(OtP) at (2.5,3) {$\OtP$};
\draw[->](PprOMA) to node[below]{\textcolor{blue}{\scriptsize Cor.$\,$\ref{cor:OLtP}}}  (POLtPprONP);
\draw[->](POLtPprONP) to node[below]{\textcolor{blue}{\scriptsize Cor.$\,$\ref{cor:OLtP}}} (OtPLtP);
\draw[->](POLtPprONP) to (PprOtP);
\draw[double](PprOtP) to node[right] {\small\textcolor{blue}{$\,$\footnotesize Theorem~\ref{THM:main5}}} (OtP);
\draw[->](OtPLtP) to (OtP);
% connection
\draw[->](OtPLtP) to (LtP);
\draw[->](PprOMA) to (PprMA);
\draw[->](OtP) to (StP);
}
\end{tikzpicture}
\end{center}
\caption{Containments of classes based on Merlin--Arthur protocols and on symmetric alternation.}\label{fig:main}
\end{figure}

\pagebreak
Switching to the linear-exponential regime,
in \cite{KP24}, Korten and Pitassi have shown that $\LtE$ --- the exponential version of $\LtP$ --- 
contains a language of circuit complexity $2^n/n$.
By translation, our upper bound scales as $\LtE \subseteq \E^\prSBP${\ }\footnote{As is turns out, this class is also equal $\E^\prBPPp$. See Section \ref{sec:int:approx} for more details.}.
As a corollary we obtain a new circuit lower bound for the class $\E^\prSBP$.
To the best of our knowledge, the strongest previously established bound for this class was ``half-exponential''
that followed from the bound on $\class{MAEXP}$ \cite{MVW99}. % BFT98 against P/poly

\begin{COROLLARY}
\label{COR:main4}
$\E^\prSBP$ contains a language of circuit complexity $2^n/n$.
\end{COROLLARY}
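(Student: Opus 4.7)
The plan is to combine two ingredients: the circuit lower bound of Korten and Pitassi \cite{KP24} for $\LtE$, and a scaled-up version of Theorem~\ref{THM:main3}. First I would recall the Korten--Pitassi result, which gives a language in $\LtE$ of circuit complexity at least $2^n/n$; this is our source of hardness. The goal is then to show that this language sits inside $\E^\prSBP$, so that hardness transfers to the ambient class.

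The key step is to upgrade the containment $\LtP \subseteq \P^\prSBP$ of Theorem~\ref{THM:main3} to its linear-exponential analogue $\LtE \subseteq \E^\prSBP$. My plan is to replay the same iterative procedure verbatim, with the time budget of the base machine multiplied by $2^{O(n)}$ and the polynomial-size ``proofs'' of the linear order replaced by exponential-size ones: the length of the purported minimum element of the order is now $2^{O(n)}$, and the polynomial-time verifier for the ordering relation becomes a linear-exponential-time verifier. Each iteration of the procedure from Theorem~\ref{THM:main3} invokes the $\prSBP$ oracle on instances whose size is polynomial in the length of the encoded order elements, hence now of size $2^{O(n)}$; and the number of iterations remains polynomial in that length, so still $2^{O(n)}$. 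The total running time stays within $\E$.

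Given $\LtE \subseteq \E^\prSBP$, the corollary is immediate: the Korten--Pitassi witness language inherits the same $2^n/n$ circuit lower bound inside $\E^\prSBP$. I would remark in passing that the same argument also places the language in $\E^\prBPPp$, as noted in the footnote accompanying the statement.

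The main obstacle I anticipate is purely bookkeeping: one must verify that the $\prSBP$ oracle calls generated by the scaled iteration still obey the promise (i.e.\ the approximate-counting gap on which $\prSBP$ is defined is preserved after scaling), and that the loose-oracle-access conventions from Section~\ref{sec:promise} behave as expected when the oracle instances are of exponential size. Neither is substantive, but both should be stated explicitly to certify that the scaling is ``free''. No new technique beyond the one developed for Theorem~\ref{THM:main3} is required.
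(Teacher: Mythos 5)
Your proposal matches the paper's argument exactly: the paper also obtains Corollary~\ref{COR:main4} by combining the Korten--Pitassi $2^n/n$ lower bound for $\LtE$ with the padded/translated containment $\LtE\subseteq\E^\prSBP$ obtained by scaling Theorem~\ref{THM:main3} to the linear-exponential regime. The bookkeeping you flag (exponential-size oracle instances, preservation of the $\SSE$ promise gap) is indeed all that needs checking, and the paper likewise treats the translation as routine.
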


It is to be noted that Corollary \ref{COR:main4} could be viewed as an \emph{unconditional} version of a result of Aydinlio\~glu et al. \cite{AGHK11} as it recovers and strengthens their conclusion.  
In particular, 
\cite{AGHK11} have shown the following\footnote{The corresponding claims in \cite{AGHK11} were stated using slightly different terminology.}:  
if $\P^\NP$ is consistent with $\prAM$ or even $\prSBP$, then 
$\E^\NP$ contains a language of circuit complexity $2^n/n$.  
Indeed, given the premises we obtain that $\E^\prSBP \subseteq \E^{\P^\NP} =
\E^\NP$ from which the claim follows directly by Corollary \ref{COR:main4}.

\subsubsection{Aggregation of $\prOtP$ queries and the Strongest Input-Oblivious Karp--Lipton Collapse}

Theorem~\ref{THM:main1} shows that $\P^{\prMA}$ is currently the smallest non-input-oblivious class  for which a Karp--Lipton--style collapse is known. On the other hand, such a collapse was also shown for $\OtP$ \cite{CR06}, which is input-oblivious. However, since the precise relationship between $\OtP$ and $\P^\prMA$ remains unknown, one may ask: what is the strongest Karp--Lipton--style collapse? Our next result assists in navigating this question.

\begin{THEOREM}
\label{THM:main5}
$\P^\prOtP \subseteq \OtP$.
\end{THEOREM}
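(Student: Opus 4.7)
To prove $\P^\prOtP \subseteq \OtP$, let $L \in \P^\prOtP$, witnessed by a polynomial-time oracle machine $M$ and a promise problem $\Pi \in \prOtP$ with oblivious YES/NO advice $y_\Pi(n), z_\Pi(n)$ and verifier $V_\Pi$. Let $p(n)$ be a polynomial upper bound on the length of queries $M$ makes on inputs of length $n$. The plan is to construct an $\OtP$ decider for $L$ whose advice aggregates the $\prOtP$ advice over all query lengths:
\[
y_L(n) = \bigl(y_\Pi(n')\bigr)_{n' \leq p(n)}, \qquad z_L(n) = \bigl(z_\Pi(n')\bigr)_{n' \leq p(n)}.
\]
Both are length-only, hence oblivious. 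The $\OtP$ verifier $V_L(x,Y,Z)$ simulates $M(x)$, answering each oracle query $q$ with $V_\Pi(q, Y_{|q|}, Z_{|q|})$ and accepting iff $M$ accepts.

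Correctness rests on the universality of the $\prOtP$ advice: an honest $y_\Pi(n')$ forces $V_\Pi(q, y_\Pi(n'), \cdot) = 1$ on every YES instance of length $n'$, and dually $z_\Pi(n')$ forces $V_\Pi(q, \cdot, z_\Pi(n')) = 0$ on every NO instance. Hence when the YES prover of the $\OtP$ game is honest, the simulated oracle agrees with $\Pi$ on every YES query; symmetrically, when the NO prover is honest, the simulated oracle agrees on every NO query.

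The principal obstacle is that an adversarial prover can still skew the simulated oracle on queries of the opposite type (for example, flipping a NO instance of $\Pi$ to YES when only the YES prover is honest), potentially derailing $M$'s adaptive computation onto a wrong path. To handle this, I would preprocess $M$ using $\prOtP$'s closure under complement: for every original query $q$ of $M$, the preprocessed machine $M'$ also queries $\bar\Pi(q)\in\prOtP$, so that for each query there are two simulated answers of which the honest prover's advice forces at least one to the correct value. Arranging $M'$'s downstream use of the answer pairs as a \emph{monotone} Boolean combination (via the De Morgan rewriting of $M$'s oracle-dependent control flow) then guarantees that any adversarial deviation pushes $M'$'s output only in the direction favoring the honest prover, yielding the correct decision.

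The most delicate point is keeping $M'$ of polynomial size while monotonizing $M$'s inherently adaptive computation: a naive tree unrolling would double the work at each adaptive step, causing exponential blowup. I expect to resolve this by interleaving each query pair $(\Pi(q_i), \bar\Pi(q_i))$ with a monotone ``if-then-else'' gate of the form $(\Pi(q_i) \wedge \mathrm{cont}_1) \vee (\bar\Pi(q_i) \wedge \mathrm{cont}_0)$ along the single path actually taken, while appealing to the loose-oracle semantics of $\P^\prOtP$ to treat any inconsistent simulated pair as a permissible gap answer so that one adaptive traversal suffices.
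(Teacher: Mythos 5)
Your setup is right and you have correctly isolated the crux: with advice $y_L(n)$ carrying only the YES-certificates and $z_L(n)$ only the NO-certificates, the adversarial prover controls one slot of the inner verifier, so on a query $q\in\Pi_\NO$ (say, with the YES prover honest) the simulated answer $V_\Pi(q,Y_{|q|},Z_{|q|})$ can be flipped to $1$. This is an error \emph{inside} the promise set, so the loose-access semantics of $\P^\prOtP$ does not excuse it --- and that is exactly where your proposed repair breaks down. Monotonizing $M$ cannot work in general: the acceptance of an adaptive oracle machine is an arbitrary, non-monotone function of its oracle answers, the De Morgan/if-then-else rewriting forces you to evaluate both continuations at every adaptive step (the exponential blowup you worry about is real and is not avoided by "one adaptive traversal"), and the appeal to loose semantics to discard inconsistent answer pairs is circular, since the pairs that go wrong are precisely the in-promise ones. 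So the proposal as written has a genuine gap in its final third.

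The fix in the paper keeps $M$ completely untouched and instead makes each \emph{individual} simulated answer correct on the promise set. Each prover supplies the \emph{full} certificate vector --- both $w^{(1)}_{n'}$ and $w^{(0)}_{n'}$ for every query length $n'\le p(n)$ --- so that the honest prover alone already determines the right answer to every in-promise query via the self-check $c:=A(q,u^{(1)},u^{(0)})$ (resp. $d:=A(q,v^{(1)},v^{(0)})$). Since the verifier does not know which prover is honest, it combines the self-checks with the cross-checks $a:=A(q,u^{(1)},v^{(0)})$ and $b:=A(q,v^{(1)},u^{(0)})$ and answers $\ell=(a\wedge c)\vee(b\wedge d)$. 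A four-line case analysis shows that whichever prover is honest forces $\ell$ to the correct value on $\Pi_\YES\cup\Pi_\NO$ (e.g.\ an honest $u^{(0)}$ zeroes out both $b$ and $c$, killing both disjuncts), and queries outside the promise set are harmless by the definition of loose access. Your aggregation-over-lengths idea survives verbatim; what is missing is doubling the advice to carry both certificate polarities and replacing monotonization by this per-query voting formula.
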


We note that the ``non-promise'' version of this inclusion, i.e. $\P^\OtP \subseteq \OtP$, was already established in \cite{CR06}. However, this result does not carry over to the promise case.
A similar phenomenon arises in the non-input-oblivious analogue of this question: while we know that  $\P^\StP\subseteq\StP$ \cite{CR06}, it still remains open whether $\P^\prStP\subseteq\StP$.

With this tool in hand, we can identify and show the strongest Karp--Lipton--style collapse
that currently known. Namely,
the collapse can be extended to $\P^\prOMA \subseteq \P^\prMA$, where $\prOMA$ is the input-oblivious version of $\prMA$ and therefore is contained in $\prMA$.
At the same time, Theorem~\ref{THM:main5} allows us to show that $\P^\prOMA$ is also contained in $\OtP$,
thus making $\P^\prOMA$ smaller than both $\P^{\prMA}$ and $\OtP$!\\

Theorem~\ref{THM:main5} also allows us to refine the chain of containments between $\P^\prOMA$ and $\OtP$ by introducing a newly defined class, $\OLtP$, an input-oblivious analogue of $\LtP$ (see Figure~\ref{fig:main}). Note that this chain proceeds via $\P^{\ONP,\OLtP}$ rather than $\OLtP$ itself, since $\OLtP$ is not guaranteed either to share the convenient closure properties of $\LtP$ or to contain $\ONP$.
Still we prove that
$$\P^\prOMA\subseteq\P^{\OLtP,\prONP}\subseteq\OtP\cap\LtP.$$
That is, $\P^{\OLtP,\ONP}$ can serve as an input-oblivious analogue of $\LtP$ ($=\P^{\LtP,\NP}$ \cite{KP24}).

%------------------------------------------------------------------------------------------------------
\subsection{Our Techniques}
%---
\subsubsection{Approximate Counting, Set Size Estimation and $\SBP$}
\label{sec:int:approx}

Computing the number of accepting paths of a given non-deterministic Turing machine is a fundamental problem captured by the ``counting'' class $\mathbf{\#\P}$. Yet, this class appears to be too powerful since, by Toda's Theorem \cite{Toda91}, even a single query to it suffices to decide any language
in the polynomial hierarchy, $\PH\subseteq\P^{\mathbf{\#\P[1]}}$!
Given that, it is natural to explore approximations.  To this end, one can consider the problem of \emph{Approximate Counting}  ($\AP$ for short) which refers to the task of \emph{approximating} the number of accepting paths
(within a constant factor). 
Equivalently, this problem can be framed as approximating the size of a set $S$ represented as
the set of satisfying assignments of a Boolean circuit $C$. 
Previously, it was shown that this task could be carried out by a randomized algorithm using an $\NP$ oracle ($\class{FBPP}^\NP$) \cite{Stockmeyer85,JVV86} 
and by a deterministic algorithm using a $\prAM$ oracle ($\FP^\prAM$) \cite{Sipser83,GS86}. Shaltiel and Umans \cite{ShaltielU06} show how to accomplish this task in $\FP^\NP$, 
yet under a derandomization assumption.
We note that all of these algorithms can be implemented using parallel (i.e. non-adaptive) oracle queries. That is, in $\class{FBPP}^\NP_{\parallel},\FP^\prAM_{\parallel}$ and $\FP^\NP_{\parallel}$, respectively.
In \cite{Jer2007}, Je\v{r}\'abek studied approximate counting in the context of bounded arithmetic and reduced to it many problems in various complexity classes.
Approximate counting has also recently attracted considerable attention in the quantum literature \cite{OS18, AKKT20, AR20, MAD25}. \\

The decision version of the problem is to distinguish between two constant-factor \emph{estimates} of the set size. For concreteness, consider the following problem called \emph{set-size estimation} (or $\SSE$ for short):
Given a set $S$ (via a Boolean circuit $C$) and an integer $m$ with the \emph{promise} that either $\size{S} \geq m$ or $\size{S} \leq m/2$,  our goal is to decide which case holds.
Interestingly, this problem is complete for the class $\SBP$\footnote{Strictly speaking, the problem is complete for $\prSBP$, the corresponding class of promise problems, and is therefore hard for $\SBP$.} introduced in \cite{BGM06} by B{\"{o}}hler et al.
as a relaxation of the class $\BPP$ to the case when the acceptance probability is not required to be bounded away from $0$.
This relaxation, as was shown in \cite{BGM06}, yields additional power:  $\MA \subseteq \SBP \subseteq \AM$. 
The class $\SBP$, which stands for \emph{small bounded-error polynomial-time}, thus sits strictly between the two fundamental classes based on Arthur--Merlin protocols, yet its definition is not based on these protocols. Moreover, $\SBP$ remains the only known natural class that lies between $\MA$ and $\AM$.

In terms of upper bounds, in a seminal paper $\cite{GS86}$,
Goldwasser and Sipser have exhibited an Arthur--Merlin protocol not only for this problem,
but also for the case when the set $S$ is represented by a \emph{non-deterministic} circuit\footnote{A non-deterministic circuit $C(x,w)$ accepts $x$ if there exists a witness $w$ for which $C(x,w) = 1$.}! %It turns out that t
This more general version of the problem ($\WSSE$, where the set $S$ is given via a non-deterministic circuit), is complete for the class $\prAM$. (See Definition \ref{def:SSE/WSSE} for the formal definition of the problems; in fact, the factor-of-two gap in the estimates is arbitrary and 
can be replaced by any positive constant.) In fact, Goldwasser--Sipser's protocol proves the containment 
 both for languages ($\SBP \subseteq \AM$) and for promise problems ($\prSBP \subseteq \prAM$). At the same time,
it is important to highlight  the distinction between the two versions of the problem --- i.e. for the ``standard'' ($\SSE$) vs non-deterministic ($\WSSE$) Boolean circuits --- which appears to be (at the very least) non-trivial. Notably, the work of \cite{BGM06} established  an oracle separation between $\SBP$ and $\AM$. \\

On a similar note, by combining some of the previous techniques, we observe that $\AP$ can be carried out in $\FP^\prSBP$ rather than $\FP^\prAM$, and, in fact, even in $\FP^\prSBP_{\parallel}$. 
Given this observation, it is natural to study the computational power of $\P^{\AP}$, that is, deterministic algorithms with oracle access to $\AP$.
Indeed, an immediate corollary of the above is that $\P^{\AP} = \P^{\prSBP}$ and $\P^{\AP}_\parallel = \P^{\prSBP}_\parallel$. At the same, O'Donnell and Say \cite{OS18} previously showed that $\P^{\AP}_\parallel = \BPPp$, a complexity class defined earlier by Han et al. \cite{HHT97}.
One can think of $\BPPp$ as a version of $\BPP$ in which different computational paths (of the same probability) may have different lengths.
Incidentally, it was established in  \cite{BGM06} that $\BPPp$  can be obtained from $\BPP$ via the so-called ``posts{\nobreak}election'' and that 
$\SBP \subseteq \BPPp$ (and resp. $\prSBP \subseteq \prBPPp$).  
Putting all together, one arrives at the following three clusters of complexity classes associated with approximate counting:
$$ \colorbox{blue!15!white}{$\SBP\vphantom{\P^{\prSBP}_\parallel}$}  \subseteq \colorbox{green!30!white}{$\BPPp = \P^{\AP}_\parallel = \P^{\prSBP}_\parallel$} \subseteq  
\colorbox{yellow!40!white}{$\vphantom{\P^{\prSBP}_\parallel}\P^{\AP} = \P^{\prSBP} = \P^{\prBPPp}$}, $$

\noindent where both inclusions are believed to be strict.

%---
\subsubsection{Approximate Counting and the Order Rank Approximation}
\label{sec:approx}
The upper bound $\LtP\subseteq\P^\prSBP$ is obtained
by developing a process that, given an arbitrary element
in a linearly ordered set, rapidly converges to the set's minimum.

%%%
\paragraph{Approximate counting using a $\prSBP$ oracle.}
We show how to deterministically approximate the number of satisfying assignments of a  Boolean circuit, given oracle access to $\prSBP$ (i.e. in $\FP^\prSBP$), using parallel queries. Our algorithm is based on 
$\SBP$ amplification that was used in \cite{BGM06,Volkovich20}.
A crucial observation is that,
as we need a multiplicative approximation (up to the factor $1+\varepsilon$),
it suffices to place the desired number between two
consecutive powers of two; the correct place then could be found
by either querying a $\prSBP$ oracle $O(n/\varepsilon)$
times in parallel or (using binary search) $O(\log_2(n/\varepsilon))$ times sequentially.
This result could be on independent interest. 
See Lemma \ref{lem:SBP-set-approx} for the formal statement.

%%%
\paragraph{Approximating the order rank w.r.t.~a linear order.}
The \emph{order rank} of an element $\alpha$ of a linearly ordered set $U$ is the number of elements in
this set that are strictly less than $\alpha$
(in particular, $\alpha$ is the minimum if and only if $\ord(\alpha) = 0$).
We can extend this definition to non-empty subsets $S \subseteq U$, 
where $\ord(S)$ is the average order rank of  elements in $S$.

We reduce the problem of approximately comparing the average order ranks of two sets to approximate counting.
To see how, consider a strict linear order $\less$ implicitly defined on $U=\{0,1\}^n$
using a Boolean circuit $E$,
and observe that for a non-empty subset $S\subseteq U$,
the average order rank of $S$ is exactly
the size of the set of pairs
$\{(\upsilon,\alpha)\in U\times S\;|\;\upsilon\less\alpha\}$
divided by the size of $S$.
Hence, this task can be carried out using a $\prSBP$ oracle.

%%%
\paragraph{An upper bound for the Linear Ordering Principle.}
As was mentioned, we develop a process that, given an arbitrary element in a linearly ordered set $U = \{0,1\}^n$, rapidly converges to the set's minimum. 

Given an element $\alpha \in U$, we first define the set $S$ as the set of all the elements less or equal to $\alpha$. Formally, $S \eqdef \condset{x}{x \leq \alpha}$. Observe that $\ord(S) = \ord(\alpha)/2$. We then iteratively partition $S$ into two disjoint sets, starting from $i=1$:
$$ S_0 = \condset{x \in S}{x_i = 0} \; \text{ and } S_1 = \condset{x \in S}{x_i = 1}.$$
By averaging argument, $\min \set{ \ord(S_0), \ord(S_1) } \leq \ord(S)$. We then take $S$ to be the subset ($S_0$ or $S_1$) with the smaller order rank and continue to the next value of $i$. That is, we fix the bits of the elements of $S$ one coordinate at a time. Therefore, once $i=n$, our ``final'' set $S$ contains \emph{exactly\/} one element $\beta$ and thus at that point $\ord(S) = \ord(\beta)$. On the other hand, as the order rank of the ``initial'' $S$ was $\ord(\alpha)/2$ and the overall order rank could only decrease, we obtain that  $\ord(\beta) \leq \ord(\alpha)/2$.
We can then invoke the same procedure this time with $\beta$ as its input. As the there are $2^n$ elements in $U$, this process will converge to the set's minimum after invoking the procedure at most $n$ times, given \emph{any\/} initial element.

The algorithm described above requires computing (or at least comparing) the average order ranks of two sets. Our analysis demonstrates that a procedure for approximate comparison, developed before, is sufficient for the implementation of this idea (though the factor at each step will be a little bit less than $2$).

\subsubsection{Derandomization in {\LtP}}
In \cite{KP24}, Korten and Pitassi show that $\MA\subseteq\LtP$.
The inclusion $\P^\prMA\subseteq\LtP$ essentially follows their argument with the additional observation that since $\LtP$ is a syntactic class, not only it  \emph{contains}\/ $\MA$ (as was shown) 
but it is also \emph{consistent}\/ with $\prMA$. 
Thus one can first construct a pseudorandom generator using
an $\LtP$ oracle \cite{korten2022hardest,KP24} and then leverage it
to fully derandomize the $\prMA$ oracle not just in $\prNP$, but actually in $\NP\subseteq\LtP$!
Therefore, $\P^\prMA \subseteq \P^\LtP = \LtP$.

We also observe that since the pseudorandom generator
depends only on the input length (and not the input itself), 
derandomization also helps settling the relations
between input-oblivious classes to a certain extent.
The main difference is that unlike their non-oblivious counterparts, they do not posses all the desired properties w.r.t. Turing closure and natural containments. However,
Theorem~\ref{THM:main5} (its technique
is described below) eventually helps us building
the chain from $\P^\prOMA$ to $\OtP$
in two different ways: both directly 
and through derandomization and intermediate classes using $\OLtP$.

%--
\subsubsection{Input-Oblivious Symmetric Alternation}

A Karp--Lipton--style collapse to $\P^\prOMA$ follows from \cite{CR11}
by combining several previously known techniques.
However, is this collapse stronger than the known collapse to $\OtP$ \cite{CR06}?
The inclusion $\prOMA\subseteq\prOtP$ can be transferred
from a somewhat similar statement that was proven in \cite{CR06};
however, in order to prove $\P^\prOMA\subseteq\OtP$
we need also the inclusion $\P^\prOtP\subseteq\OtP$,
which seems novel. The main idea is that the two provers
corresponding to the oracle give their input-oblivious certificates
prior to the whole computation, and the verification algorithm
performs a cross-check not only between the certificates of \textbf{different}
provers but also between the certificates of the \textbf{same} prover,
which allows us to simulate all oracle queries to $\prOtP$ in a single
$\OtP$ algorithm.
Indeed, our approach is made possible by the input-oblivious nature of the computational model: while the 
oracle queries may be adaptive
and not known in advance (due to potential queries outside of the promise set),
the certificates are universal for the whole computation and nothing else is required.

%------------------------------------------------------------------------------------------------------

%%%

\subsection{Organization of the Paper}

The paper is organized as follows: In Section~\ref{sec:defs} we give the necessary definitions and also discuss oracle access modes in a more formal way.
Section~\ref{sec:PprSBP} contains the proof of Theorem \ref{THM:main3} -- a new upper bound on $\LtP$. Section~\ref{sec:bestKL}
contains the proof of Theorem \ref{THM:main5}
which implies that collapse to $\P^\prOMA$ subsumes both collapses
to $\P^\prMA$ and to $\OtP$. Section~\ref{sec:PprMA} contains the proof of Theorem \ref{THM:main1} answering the open questions of Korten and Pitassi \cite{KP24} and Chakaravarthy and Roy \cite{CR11}. We also include a version of this theorem
for input-oblivious classes.
We discuss
multiple research directions
in Section~\ref{sec:OQ}.

%======================================================================================================
\section{Definitions}\label{sec:defs}
%------------------------------------------------------------------------------------------------------
\subsection{Classes of Promise Problems as Oracles}\label{subsec:loose}

Before defining complexity classes, we first clarify what we mean by oracle access to classes of promise problems. 
A \emph{promise problem} is a relaxation of (the decision problem for) a language.

\begin{definition}[promise problem]%, consistency]
$\Pi= (\Pi_{\YES}, \Pi_{\NO})$ is a \emph{promise problem} if\/ $\Pi_{\YES} \cap \Pi_{\NO} = \emptyset$. 
\end{definition}

Similarly to 
\cite{CR11}, 
when an oracle is described as a promise problem,
we use \emph{loose access} to the oracle.
The outer Turing machine is allowed to
    make queries outside of the promise set,
    and the oracle does not need to conform to
    the definition of the promise oracle class for such queries.
    However, the outer Turing machine 
    must return the correct answer
    \textit{irrespective\/} of oracle's behavior for queries
    outside of the promise set
    in particular, the oracle does not need to be consistent
    in its answers to the same query. Let us now formulate it in a more precisely to avoid misunderstanding:
%======================================================================================================
\paragraph{Loose Oracle Access vs Access Through a Consistent Language}
    One must be very careful when arguing about oracle access to promise classes.
    Several modes of access have been considered in the literature including
    guarded access \cite{LR94}, loose access (e.g., \cite{CR11,HS15}), and access through
    a consistent language \cite{GS88,BF99}.

    In this paper we use the same mode of access as in  \cite{CR11}. Fortunately, when the underlying computational model is
    a polynomial-time Turing machine [with a specific polynomial-time alarm clock],
    loose access is equivalent to access through a consistent language.
    In order to make this connection very clear,
    below we define both of them rigorously and show the equivalence.
    Note that for other computational devices, such as those
    corresponding to $\OtP^\bullet$ or $\StP^\bullet$,
    the connection between the modes is much less clear.

\newcommand\subsetsim{\mathrel{%
  \ooalign{\raise0.2ex\hbox{$\subset$}\cr\hidewidth\raise-0.8ex\hbox{\scalebox{0.9}{$\sim$}}\hidewidth\cr}}}

\paragraph{Access through a consistent language.} The following definition is from \cite{GS88}.
\begin{definition}[consistency]
Consider a promise problem
$\Pi= (\Pi_{\YES}, \Pi_{\NO})$, where $\Pi_{\YES} \cap \Pi_{\NO} = \emptyset$. 
We say that a language $O$ is \emph{consistent} with\/ $\Pi$,
if\/ $\Pi_\YES\subseteq O$ and $\Pi_\NO\subseteq\overline{O}$.
Let us denote it as $O\subsetsim\Pi$.
Note that the containment of $O$ outside of $\Pi_{\YES} \cup  \Pi_{\NO}$ can be arbitrary.

\end{definition}

In one line of the literature \cite{GS88,BF99}, % and \cite{CR11}, 
when an oracle is described as a promise problem,
the following mode of access is used (\emph{access through a consistent language}):
the outer Turing machine queries
a language $L$ consistent with this promise problem.
In particular, it is allowed to
make queries outside of the promise set.
However, the outer Turing machine 
must return a correct answer
for every possible choice of $L$.

Let us formalize it in the simplest case of $\P^\Pi$:
\begin{definition}
    A language $L$ is in $\P^\Pi$ according to the consistent language mode,
    if there is a deterministic oracle Turing machine $M^\bullet$
    with polynomial-time alarm clock stopping it in time $p(n)$
    such that for every $O\subsetsim\Pi$ and for every input $x$,
    $M^O(x)=L(x)$. 
\end{definition}
One can observe that for every such language $O$, $\PTIME^{\Pi} \subseteq \PTIME^{O}$,
therefore
the following holds as well, and in fact can be considered as an alternative definition.
\begin{definition}[promise problems as oracles] 
\label{def:promise ref}
$\PTIME^{\Pi}  %(resp. \; \NP^{\Pi}) 
\eqdef  \bigcap \limits _{
%O \text{ is consistent with } \Pi
\substack{\Pi_\YES\subseteq O\\\Pi_\NO\subseteq\overline{O}}
} \PTIME^O$.  %\; (resp. \; \NP^{O})$.
\end{definition}
For more details and discussion see e.g. \cite{GS88,BF99}. 

\paragraph{Loose access.}
In the loose access mode, one considers oracle as a black box
(physical device) that is only guaranteed to work correctly
on the promise set. Outside this set it can produce any answer,
in particular, it can give different answers to the same query,
either during one deterministic computation,
or for different witnesses if the computational model uses them,
or for different inputs.

Let us define it rigorously in the simplest case of $\P^\Pi$:
\begin{definition}
    A language $L$ is in $\P^\Pi$ according to the loose access mode,
    if there is a deterministic oracle Turing machine $M^\bullet$
    with polynomial-time alarm clock stopping it in time $p(n)$
    such that for every input $x$
    $M^\bullet(x)$ produces the answer $L(x)$ 
    if the oracle answers ``yes'' on queries in $\Pi_Y$ and
    answers ``no'' on queries in $\Pi_N$.
    (No hypothesis is made on its behaviour outside
    $\Pi_Y\cup\Pi_N$, in particular, its answers
    may be inconsistent for the same query.)
\end{definition}

\paragraph{Equivalence for $\P^\Pi$.}
Fortunately, it is very easy to verify the equivalence of these two
definitions in the case of $\P^\Pi$.
\begin{proposition}
    Let $\Pi$ be a promise problem.
    A language $L$ belongs to $\P^\Pi$ according to the loose access mode
    if and only if it belongs to $\P^\Pi$ according to access through a consistent language.
\end{proposition}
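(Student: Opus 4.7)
The easy direction, loose $\Rightarrow$ consistent, requires no real argument: any language $O \subsetsim \Pi$ specifies in particular a promise-respecting oracle strategy (in fact, a consistent one), and this is exactly the kind of oracle against which loose access already guarantees correctness. Substituting $O$ for the abstract device yields $M^O(x)=L(x)$ for every $O \subsetsim \Pi$, which is the definition of consistent access.

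The substantive content is in the converse, consistent $\Rightarrow$ loose, where an adversarial device may return different answers to the same off-promise query. My plan is to exhibit, given a machine $M^\bullet$ witnessing $L \in \P^\Pi$ in the consistent sense, a modified machine ${M'}^\bullet$ witnessing the same in the loose sense. The construction is the standard caching simulation: ${M'}^\bullet$ runs $M^\bullet$ step-by-step and maintains a dictionary of queries already asked together with the answers received. Whenever $M^\bullet$ would issue a query $q$ that already appears in the dictionary, ${M'}^\bullet$ suppresses the actual oracle call and feeds the recorded answer to $M^\bullet$ instead. The dictionary can be maintained in polynomial overhead, so $M'$ remains a deterministic polynomial-time oracle machine, and by construction each distinct query is issued to the external oracle at most once.

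Next I would verify that ${M'}^{\mathcal{O}}(x) = L(x)$ for every promise-respecting loose oracle $\mathcal{O}$. Enumerate the distinct queries $q_1, \ldots, q_m$ that $M'$ actually issues to $\mathcal{O}$ during its execution on $x$, and let $\alpha_j \eqdef \mathcal{O}(q_j)$ (a single answer per distinct query, since repeats are served from the cache). Define a language $O$ by $O(q_j) \eqdef \alpha_j$ for $j = 1,\ldots,m$, set $O$ to agree with $\Pi$ on $\Pi_\YES \cup \Pi_\NO$, and extend $O$ arbitrarily elsewhere. Because $\mathcal{O}$ respects the promise, $O \subsetsim \Pi$. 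A straightforward step-by-step induction then shows that the computations of ${M'}^{\mathcal{O}}(x)$ and $M^O(x)$ proceed identically: at every step, the answer delivered to the simulated $M^\bullet$ is the same, whether it originates from a fresh oracle call (first occurrence), from the cache (subsequent occurrence), or from $O$ directly (which answers the same way every time, by virtue of being a function). By the consistent-access hypothesis, $M^O(x) = L(x)$, hence ${M'}^{\mathcal{O}}(x) = L(x)$, as required.

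The only point that needs care is precisely the one the cache dissolves: an inconsistent loose adversary could in principle answer the same off-promise query differently at different times, pushing $M^\bullet$ along a computation trace that no single consistent language $O$ can realize, which is the only reason the two access modes could possibly diverge. Once ${M'}^\bullet$ forces consistency by memoization, every loose trace is reproducible by a consistent one, and no further subtlety should appear. The argument is tight enough that the same wrapper strategy can be used to streamline any place later in the paper where a switch between the two oracle-access conventions is invoked.
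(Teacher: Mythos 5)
Your proposal is correct and follows essentially the same route as the paper's proof: the easy direction by specialization, and the converse via a memoization wrapper whose execution against any loose oracle is reproduced exactly by a single consistent language (the paper takes $L_x=\Pi_\YES\cup Y_x$ where $Y_x$ records first-time ``yes'' answers, which is just a concrete instance of your $O$). The only point worth stating explicitly is that your $O$ is well-defined, i.e.\ the recorded answers $\alpha_j$ on queries inside $\Pi_\YES\cup\Pi_\NO$ cannot conflict with the requirement $O\subsetsim\Pi$, precisely because a loose oracle is assumed to answer correctly on the promise set.
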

\begin{proof}
    The ``only if'' direction is trivial since a consistent language
    is a particular case of a black box.
    For the ``if'' direction, consider $M^\bullet$ from the definition
    of access through a consistent language.
    To make it tolerant to the loose access mode,
    maintain a table of oracle answers,
    and if $M^\bullet$ attempts to repeat
    the same query, use the answer from the table instead.
    The definition of the machine will not change if
    its oracle is a language, thus it will still produce
    the same answer. On the other hand, 
    its behaviour on input $x$ when given a \emph{blackbox} oracle
    is the same as its behaviour when
    given the following \emph{language} $L_x$ as an oracle:
    $L_x=\Pi_Y\cup Y_x$, where $Y_x$ is the set of
    all queries for which the blackbox oracle
    gave the answer ``yes'' when asked for the first time.
    Since $L_x$ is consistent with $\Pi$,
    $M^{L_x}(x)=L(x)$ according to the definition of
    access through a consistent language.
\end{proof}

%------------------------------------------------------------------------------------------------------
\subsection{Problems $\Avoid$ and $\LOP$}
\begin{definition}[$\Avoid$, Range Avoidance, \cite{KKMP21,korten2022hardest}]\ \\
$\Avoid$ is the following total search problem.\\
\emph{Input:} circuit $C$
with $n$ inputs and $m>n$ outputs.\\
\emph{Output:} $y\in\{0,1\}^{m}\setminus\Img C$.
\end{definition}

\begin{remark}
This problem, in a slightly different formulation, is known in the bounded arithmetic community as the \emph{dual Weak Pigeonhole Principle} (d\problem{WPHP}).
See e.g. \cite{KraGen25} for more details.    
\end{remark}

\noindent Korten \cite{korten2022hardest} have shown that for a stretch of $n+1$
this problem is equivalent to a stretch of $O(n)$
(and, of course, vice versa) under $\P^\NP$ reductions. \\

The following definition is due to Korten and Pitassi \cite{KP24}. It extends the search problem $\MIN$ introduced in \cite{CK98}$,$ to the setting where the input relation is not necessarily a linear order.

\begin{definition}[$\LOP$, Linear Ordering Principle, \cite{KP24}]\ \\
$\LOP$ is the following total search problem.\\
\textbf{Input:} ordering relation $\less$
given as a Boolean circuit $E$
with $2n$ inputs.\\
\textbf{Output:}
either the minimum for $\less$ (that is, $x$ such that 
$\forall y\in\{0,1\}^n\setminus\{x\}\ x\less y$)
or a counterexample, if $\less$ is not a strict linear order.
A counterexample is either a pair satisfying $x\less y\less x$ or a triple satisfying $x\less y\less z\less x$.
\end{definition}

%------------------------------------------------------------------------------------------------------
\subsection{Complexity Classes}
The following two definitions have been suggested by Korten and Pitassi
who also proved their equivalence \cite{KP24}.

\begin{definition}[$\LtP$ via reductions]\label{def:LtPred}
A language $L\in\LtP$ if it can be reduced to $\LOP$ using a $\P^{\NP}$-Turing reduction.
(Polynomial-time Turing reductions and polynomial-time many-one reductions have the same effect,
as proved in \cite{KP24}.)
\end{definition}

The following is an alternative definition of $\LtP$, which was shown in \cite{KP24}
to be equivalent.

\begin{definition}[$\LtP$ via symmetric alternation]\label{def:LtPalt}
A language $L\in\LtP$ if there is a ternary relation $R$ computable in time $s(n)$, where $s$ is a polynomial,
\[R\subseteq\{0,1\}^n\times\{0,1\}^{s(n)}\times\{0,1\}^{s(n)},\]
denoted $R_x(u,v)$ for $x\in\{0,1\}^n$, $u,v\in\{0,1\}^{s(n)}$,
such that, for every fixed $x$, it defines a linear order on $s(|x|)$-size strings such that:
\begin{itemize}
    \item for every $x\in L$, the minimal element of this order starts with bit 1,
    \item for every $x\notin L$, the minimal element of this order starts with bit 0.
\end{itemize}
\end{definition}

It is immediate that the latter version of the definition is a particular case
of the definition of $\StP$ \cite{Canetti96,RS98}:
\begin{definition}\label{def:StP}
A language $L\in\StP$ if there is a polynomial-time computable ternary relation 
$R\subseteq\{0,1\}^n\times\{0,1\}^{s(n)}\times\{0,1\}^{s(n)}$,
denoted $R_x(u,v)$ for $x\in\{0,1\}^n$, $u,v\in\{0,1\}^{s(n)}$,
such that:
\begin{itemize}
    \item for every $x\in L$, there exists $w^{(1)}$ such that $\forall v\ R_x(w^{(1)},v) = 1$,
    \item for every $x\notin L$, there exists $w^{(0)}$ such that $\forall u\ R_x(u,w^{(0)}) = 0$.
\end{itemize}
\end{definition}

We now formally define the class $\OtP$ \cite{CR06}, which is the input-oblivious version of $\StP$.
Since we will need also a promise version of it, we start with defining this generalization.

\begin{definition}
\label{def:prOtP}
    A promise problem $\Pi=(\Pi_\YES,\Pi_\NO)$ belongs to\/ $\prOtP$
    if there is a polynomial-time deterministic Turing machine $A$ such that\/
    for every $n\in\mathbb{N}$, there exist $w^{(0)}_n$, $w^{(1)}_n$ (called \emph{irrefutable certificates}) that satisfy for every $x\in\{0,1\}^n$:
    \begin{itemize}
        \item If $x\in\Pi_\YES$, then for every $v$, $A(x,w^{(1)}_n,v)=1$,
        \item If $x\in\Pi_\NO$, then for every $u$, $A(x,u,w^{(0)}_n)=0$.
    \end{itemize}
    No assumption on the behaviour of $A$ is made outside the promise set
    except that it stops (accepts or rejects) in polynomial time.

$\OtP$ is the respective class of languages (that is, it corresponds
to the case of\/ $\Pi_\YES=\overline{\Pi_\NO}$).
\end{definition}

We remind the definition of another oblivious promise class.
\begin{definition}
\label{def:prOMA}
    A promise problem $\Pi=(\Pi_\YES,\Pi_\NO)$ belongs to\/ $\prOMA$
    if there is a polynomial-time deterministic Turing machine $A$
    and,    for every $n\in\mathbb{N}$, there exists $w_n$
    (a witness that serves for every positive instance of length $n$),
    that satisfy
    the following conditions for every $x\in\{0,1\}^n$:
        \begin{itemize}
        \item If $x\in\Pi_\YES$, then 
            $\forall r\ A(x,r,w_n)=1$,
        \item If $x\in\Pi_\NO$, then
            $\forall w\ \Pr_r [A(x,r,w)=1]<1/2$.
        \end{itemize}
\end{definition}        

\medskip
Finally, we define also an input-oblivious version of {\LtP}.

\begin{definition}\label{def:OLtP}
A language $L$ belongs to $\OLtP$ if there is a
polynomial $p$,
polynomial-time deterministic Turing machine $V$
computing a ternary predicate $\{0,1\}^n\times\{0,1\}^{p(n)}\times\{0,1\}^{p(n)}$  (we use the notation $V_x(u,v)$ to denote its result), and two sequences of length-$p(n)$ bit strings
$(y_n)_{n\in\mathbb{N}}$,
$(z_n)_{n\in\mathbb{N}}$
such that
\begin{itemize}
    \item for every $x$, $V_x$ is a strict linear order (define $u<_xv$ iff $V_x(u,v)=1$),
    \item $\forall x\in L\cap\{0,1\}^n\quad1y_n=\min <_x$,
    \item $\forall x\in \overline{L}\cap\{0,1\}^n\quad0z_n=\min <_x$.
\end{itemize}
\end{definition}
\begin{remark}
    One may further require that not only the minimal element but also the entire ordering, for inputs of the same lengths, to coincide. Our results still hold true under this definition as well, although it is unclear whether the two definitions are equivalent or which will ultimately prove more useful.
\end{remark}
\begin{remark}
    Note that unlike $\NP\subseteq\LtP$,
    it is unclear whether $\ONP\subseteq\OLtP$.
    Therefore, when we need both these input-oblivious classes,
    we need to specify both oracles.
\end{remark}

We also define $\SBP$ for the sake of self-completeness.
\begin{definition}[\cite{BGM06}]
\label{def:SBP}
A language $L$ is in $\SBP$ if there exist $\varepsilon > 0$, $k,\ell \in \N$ and a polynomial-time computable predicate $B(x,r)$ such that
\begin{itemize}
\item If $x \in L$ then $\implies \Pr_{r}[B(x,r) = 1] \geq (1 + \varepsilon) \cdot \frac{1}{2^{n^k}} $,  
\item If $x \not \in L$  then $\implies \Pr_{r}[B(x,r) = 1] \leq (1 - \varepsilon) \cdot \frac{1}{2^{n^k}}$. 
\end{itemize}
where $n = \size{x}$ and $r$ is uniformly distributed on $\{0,1\}^{n^\ell}$.
\end{definition}
%======================================================================================================
\section{$\LtP\subseteq\P^{\prSBP}$}
\label{sec:PprSBP}

In this section we prove Theorem \ref{THM:main3}.
Our proof strategy is as follows:
Given a point in a linear order,
we aim to move ``down the order'' (i.e. towards ``smaller'' points).
At each stage we will skip over a constant fraction
of the points remaining on our way to the minimum.
In order to find the next point,
we will employ a binary-search-like procedure
to determine the bits of the desired point, one coordinate at a time. 
Here is where our $\prSBP$ oracle
comes into play:
At each step, we look at the remaining set of points
partitioned into two subsets: the points where the appropriate bit is 0 and where that bit
is 1, and select the subset with the (approximately) smaller
average order rank. 

Before we proceed with the main algorithm,
we show how to approximate the size of a set using a $\prSBP$ oracle.
This procedure could be of independent interest. 

%------------------------------------------------------------------------------------------------------
\subsection{Approximate Counting}

In this section we observe that one can approximate deterministically the number of satisfying assignments for a Boolean circuit using a $\prSBP$ oracle (i.e. $\FP^\prSBP$).
Previously, it was shown that this task could be carried out by a randomized algorithm using an $\NP$ oracle ($\class{FBPP}^\NP$) \cite{Stockmeyer85,JVV86} % parallel queries or not?
and by a deterministic algorithm using a $\prAM$ oracle ($\FP^\prAM$) \cite{Sipser83,GS86}. Note that queries to $\prSBP/\prAM$
can be thought of as queries to specific promise problems.

\begin{definition}[Set-Size Estimation, $\SSE$ and $\WSSE$]
\label{def:SSE/WSSE}
Let $C$ be a Boolean circuit and $m\ge1$ be an integer given in binary representation. 
Then\/ $\SSE \eqdef (\SSE_{\YES}, \SSE_{\NO})$, where
\begin{eqnarray*}
\SSE_{\YES} &=& \condset{(C,m)}{\ct{x}C(x) \geq m} , \\
\SSE_{\NO} &=& \condset{(C,m)}{\ct{x}C(x) \leq m/2}. 
\end{eqnarray*}
If $C$ is a non-deterministic circuit, we denote the corresponding problem by $\WSSE$.
\end{definition}

These two promise problems are complete for promise classes $\prSBP$ and $\prAM$, respectively.
This is what is proved essentially in \cite{BGM06} and \cite{GS86} and formulated explicitly in \cite{Volkovich20}.

\begin{lemma}[Implicit in \cite{BGM06}]
\label{lem:SSE}
$\SSE$ is\/ $\prSBP$-complete.
\end{lemma}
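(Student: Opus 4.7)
The plan is to prove both directions of the completeness claim for $\SSE$ with respect to $\prSBP$, using the natural \textquotedblleft acceptance probability as a set size\textquotedblright{} correspondence.

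First, for the containment $\SSE\in\prSBP$, I would design a probabilistic machine that, on input $(C,m)$ with $C$ taking $n$ input bits, samples a uniformly random $x\in\{0,1\}^n$ and accepts iff $C(x)=1$. The acceptance probability is exactly $\ct{x}C(x)/2^n$. Picking the threshold $\alpha(n)\eqdef m/2^n$, which is computable in polynomial time from the binary representation of $m$, we get $\Pr[\text{accept}]\geq\alpha(n)$ on yes-instances and $\Pr[\text{accept}]\leq\alpha(n)/2$ on no-instances, matching the defining gap of $\prSBP$.

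For hardness, I would take an arbitrary $\Pi\in\prSBP$ witnessed by a polynomial-time probabilistic machine $M$ using $r(n)$ random bits with threshold $\alpha(n)$. Given an input $x$ of length $n$, I would construct (in polynomial time) a Boolean circuit $C_x$ with $r(n)$ inputs that, on input $\rho$, simulates $M(x;\rho)$ and outputs its decision. Then $\ct{\rho}C_x(\rho)=\Pr[M(x)=1]\cdot 2^{r(n)}$. Setting $m\eqdef\lceil \alpha(n)\cdot 2^{r(n)}\rceil$, the $\prSBP$-promise translates directly to the $\SSE$-promise: on yes-instances the count is at least $m$, on no-instances it is at most $m/2$. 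This is a polynomial-time many-one reduction from $\Pi$ to $\SSE$, proving $\prSBP$-hardness.

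A small subtlety to address is arithmetic: $\alpha(n)$ may be dyadic rational with exponentially small value, so one must be careful that $m$ (and its comparison against $m/2$) is represented correctly in binary and that rounding does not destroy the factor-$2$ gap; since the gap is a full factor of $2$, a safe integer truncation preserves both inequalities. Apart from that bookkeeping, the proof is essentially mechanical, so I do not anticipate a serious obstacle; the only conceptual point is recognizing that $\SBP$'s gap condition is literally a set-size-estimation statement once one views random strings as elements of the universe and accepting paths as the satisfying assignments of the simulation circuit.
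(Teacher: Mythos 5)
Your proof is correct and is exactly the standard argument; the paper itself gives no proof of this lemma, citing it as implicit in B\"ohler--Gla{\ss}er--Meister and explicit in Volkovich (2020), and your two directions (uniform sampling of inputs for membership, the circuit over the random/nondeterministic choices with threshold $m=\lceil\alpha(n)\cdot 2^{r(n)}\rceil$ for hardness) are the intended ones. The only point worth pinning down is the rounding: taking the \emph{ceiling} is the right choice, since then yes-instances give an integer count $\geq m$ and no-instances give a count $\leq \alpha(n)2^{r(n)}/2\leq m/2$, so the factor-two gap is preserved exactly; a floor/truncation would need a slightly more careful check on the no-side.
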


\begin{lemma}[Implicit in \cite{GS86}]
\label{lem:WSSE}
$\WSSE$ is\/ $\prAM$-complete.
\end{lemma}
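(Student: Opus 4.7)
The plan is to prove containment $\WSSE\in\prAM$ and $\prAM$-hardness separately.

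For containment, I would adapt the Goldwasser--Sipser set lower-bound protocol to the non-deterministic setting. Given $(C,m)$ with $C$ a non-deterministic circuit describing $S=\{x:\exists w\ C(x,w)=1\}$, Arthur chooses $k$ with $2^{k}$ slightly larger than $m$ and samples a random pairwise-independent hash $h\colon\{0,1\}^n\to\{0,1\}^k$; Merlin responds with a pair $(x,w)$; Arthur accepts iff $h(x)=0^k$ and $C(x,w)=1$. The non-determinism costs nothing, since Merlin simply sends the certificate $w$ alongside $x$, and Arthur's check remains a polynomial-time deterministic evaluation. A standard second-moment/Bonferroni analysis for pairwise-independent hashing shows that the acceptance probability is at least some constant $c_1$ in the YES case ($|S|\ge m$) and at most some $c_2<c_1$ in the NO case ($|S|\le m/2$). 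A constant number of rounds of parallel repetition then amplifies this gap to the standard $\prAM$ thresholds.

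For $\prAM$-hardness, given any promise problem $\Pi=(\Pi_\YES,\Pi_\NO)\in\prAM$ with verifier $V(x,r,w)$, $r\in\{0,1\}^{p(n)}$, $w\in\{0,1\}^{q(n)}$, amplified so that $\Pr_r[\exists w\ V(x,r,w)=1]\ge 2/3$ for $x\in\Pi_\YES$ and $\le 1/3$ for $x\in\Pi_\NO$, define the non-deterministic circuit $C_x(r,w):=V(x,r,w)$ with $x$ hard-coded, and set $m:=\lceil (2/3)\cdot 2^{p(n)}\rceil$. Then the number of $r$ accepted by $C_x$ is at least $m$ when $x\in\Pi_\YES$ and at most $m/2$ when $x\in\Pi_\NO$, so $x\mapsto (C_x,m)$ is the desired polynomial-time many-one reduction to $\WSSE$.

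The only step requiring care is the quantitative side of the membership direction, namely pinning down the constants $c_1,c_2$ so that a single round of the protocol exhibits a visible gap before amplification. Since the analysis is purely about the hash function and inherits directly from the original $\cite{GS86}$ argument --- with Merlin's message being the single point at which non-determinism enters --- no new technical ideas are needed beyond observing that the original protocol was, in effect, already doing this even if stated for deterministic circuits.
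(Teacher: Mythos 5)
Your proof is correct and is exactly the standard argument: the paper itself gives no proof of this lemma, citing it as implicit in \cite{GS86} (and explicit in \cite{Volkovich20}), and what you reconstruct --- the Goldwasser--Sipser set lower-bound protocol with Merlin supplying the non-deterministic witness alongside the hashed point for membership, plus the hard-coded-verifier reduction $x\mapsto(C_x,m)$ for $\prAM$-hardness --- is precisely that folklore argument. The only details left implicit (choosing $2^k$ a suitable constant multiple of $m$, truncating $S$ to size $m$ in the Bonferroni bound, and threshold-based parallel repetition to amplify a constant gap in a public-coin protocol) are all standard and go through.
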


In particular, these complete problems showcase that $\prSBP \subseteq \prAM$.
The following lemma implies that $\AP \in \FP^{\prSBP}_{\parallel}$ and, in fact, provides a slightly stronger result in the form of one-sided approximation.

\begin{lemma}
\label{lem:SBP-set-approx}
There exists a deterministic algorithm
that given a Boolean circuit $C$ on $n$ variables and
a rational number $\eps > 0$
outputs an integer number $t$ satisfying % so far with (1-\eps)
$$\ct{x}C \leq t \leq 4^{\eps/3} \cdot \ct{x}C \leq (1 + \eps)
\ct{x}C $$ 
in time polynomial in $n$, the size of $C$ and $\inveps$,
making non-adaptive oracle queries to $\SSE$. 
\end{lemma}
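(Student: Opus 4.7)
The plan is to amplify the multiplicative gap of the $\SSE$ oracle by taking $k$-fold conjunctions. Let $C^{(k)}(y_1,\dots,y_k) := \bigwedge_{j=1}^{k} C(y_j)$, a circuit on $kn$ inputs with $\ct{y}C^{(k)} = N^k$, where $N := \ct{x}C$. Fix $k := \lceil 6/\eps\rceil$. A single $\SSE$ query on $C^{(k)}$ with threshold $2^i$ distinguishes $N^k \ge 2^i$ from $N^k \le 2^{i-1}$, hence pins $N$ to a multiplicative window of size $2^{1/k} \le 2^{\eps/6}$ -- comfortably inside the target slack $4^{\eps/3}$.

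First I would dispose of the trivial case $N=0$ by a single query $\SSE(C,1)$ (whose NO-promise is exactly $\{N=0\}$), returning $t=0$ in that case. Assuming $N\ge 1$, I issue in parallel the queries $\SSE(C^{(k)}, 2^i)$ for $i=0,1,\dots,kn+1$. By the $\SSE$ promise, answer $i$ is forced to YES whenever $N^k \ge 2^i$ and to NO whenever $N^k \le 2^{i-1}$; the oracle is free to answer arbitrarily only on indices $i$ with $2^{i-1} < N^k < 2^i$, and since $N^k$ is a positive integer there is at most one such $i$. Let $i^*$ be the largest $i$ that returned YES; a short case analysis yields $2^{i^*-1}\le N^k \le 2^{i^*+1}-1$, so $2^{(i^*-1)/k}\le N < 2^{(i^*+1)/k}$, and the integer $t := \lceil 2^{(i^*+1)/k}\rceil$ satisfies $t\ge N$ and $t/N \le 2^{2/k} + 2^{-(i^*-1)/k} \le 2^{\eps/3} + 2^{-(i^*-1)/k}$.

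The main obstacle is the additive $+1$ introduced by the ceiling: when $N$ is of order $O(1/\eps)$, the residual term $2^{-(i^*-1)/k}$ need not be absorbed by the slack $2^{2\eps/3}-2^{\eps/3}$, and so the bound $t \le 4^{\eps/3}N$ can fail (already visible when $N=1$). To fix this I detect the small-$N$ regime with $O(\log(1/\eps))$ auxiliary parallel queries $\SSE(C, 2^j)$ on the un-amplified circuit; if these certify $N\le K$ for a threshold $K=\Theta(1/\eps)$, I re-amplify with a larger exponent $k'=O(K)$ chosen so that consecutive integer candidates $1,2,\dots,K$ land in distinct (non-ambiguous) buckets $[2^{(j-1)/k'}, 2^{j/k'}]$; running the same parallel procedure on $C^{(k')}$ then determines $N$ exactly and I output $t=N$.

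All queries above depend only on $(C,\eps)$, are computed in advance, and are issued non-adaptively, so the whole procedure lies in $\FP^{\prSBP}_{\parallel}$, which, via Lemma~\ref{lem:SSE}, is the class asserted by the statement.
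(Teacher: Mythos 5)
Your core mechanism is the same as the paper's: amplify the factor-$2$ gap of $\SSE$ by taking the $k$-fold AND $C^{\otimes k}$ (so that $\ct{x}C^{\otimes k}=(\ct{x}C)^k$), locate $(\ct{x}C)^k$ between consecutive powers of two with a non-adaptive geometric grid of queries, and take a $k$-th root, with $k=\Theta(1/\eps)$. The difference is in the rounding, and it costs you a genuine complication that the paper avoids. By outputting $t=\lceil 2^{(i^*+1)/k}\rceil$ you pick up an additive $+1$, which is not absorbed by the multiplicative slack when $\ct{x}C=O(1/\eps)$; you correctly spot this (it already fails at $\ct{x}C=1$) and patch it with a separate small-count branch that determines $\ct{x}C$ exactly via a larger power $k'=O(1/\eps)$ and thresholds $(m+1)^{k'}$. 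That patch is sound and keeps everything non-adaptive, but it is avoidable. The paper instead takes $i$ to be the \emph{smallest} threshold index with $(\hat C,2^i)\notin O$ and outputs $t=\lfloor 2^{i/k}\rfloor$: since $(\hat C,2^i)\notin O$ forces $(\ct{x}C)^k<2^i$ and $\ct{x}C$ is an integer, $\ct{x}C<2^{i/k}$ already gives $\ct{x}C\le\lfloor 2^{i/k}\rfloor=t$ with no additive loss, while $(\hat C,2^{i-1})\in O$ gives $(\ct{x}C)^k>2^{i-2}$ and hence $t\le 2^{i/k}<4^{1/k}\ct{x}C$. In other words, rounding \emph{down} to an integer and using integrality of $\ct{x}C$ makes the lower bound free and confines all the error to the benign direction, so no case split on the magnitude of $\ct{x}C$ is needed. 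Your argument, with the patch fully written out (choice of $K$, verification that $2^{2/k}+2^{-(i^*-1)/k}\le 4^{\eps/3}$ once $\ct{x}C>K$, and the check that $(m+1)^{k'}\ge 2m^{k'}$ for all $m\le K$), is a valid proof; it is just a longer road to the same place.
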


\noindent The result appears to follow from a combination of previous techniques (and may be considered ``folklore''). For completeness, we now provide its proof. We begin with a definition and a useful inequality. \\

For a unary relation $R(x)$, we denote the number of elements in $R$ as $\ct{x} R(x) \eqdef \size{\condset{x}{x \in R}}$. We will also abbreviate this notation to $\ct{x}R$.
For $k \in \N$, we define $R^{\otimes k}$ -- the $k$-th \emph{tensor power of $R$} as $$R^{\otimes k}((x_1,\ldots,x_k)) \eqdef R(x_1) \wedge R(x_2) \wedge \ldots \wedge R(x_k),$$
where $x_1, \ldots, x_k$ are $k$ disjoint copies of the argument $x$ of $R$, respectively. 

\begin{observation}
\label{obs:tensor}
 Then $\ct{x}(R^{\otimes k}) = ( \ct{x} R)^k$. 
\end{observation}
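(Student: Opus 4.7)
The plan is to observe that $\ct{x}(R^{\otimes k})$ counts tuples $(x_1,\ldots,x_k)$ for which the conjunction $R(x_1)\wedge\cdots\wedge R(x_k)$ holds, and that this count factorizes because the $k$ copies are over disjoint variables and the conjuncts are therefore independent constraints.

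More concretely, I would argue as follows. By the definition of $R^{\otimes k}$, the set it defines equals the Cartesian product $\{x_1 : R(x_1)\}\times\cdots\times\{x_k : R(x_k)\}$, where the $i$-th factor is exactly the set counted by $\ct{x}R$. Hence
\[
\ct{x}(R^{\otimes k}) \;=\; \size{\{x_1 : R(x_1)\}\times\cdots\times\{x_k : R(x_k)\}} \;=\; \prod_{i=1}^{k}\size{\{x_i : R(x_i)\}} \;=\; (\ct{x}R)^k,
\]
where the middle equality is the elementary product rule for the cardinality of a Cartesian product, and the last equality uses that each of the $k$ factor sets is a relabeled copy of $\{x : R(x)\}$ and therefore has the same size.

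Since the statement is essentially the product rule in disguise, I do not anticipate any real obstacle; the only thing worth being careful about is to emphasize that the $x_i$'s are genuinely disjoint copies of the argument (as made explicit in the definition of $R^{\otimes k}$ preceding the observation), so that the conjuncts do not share variables and the counts multiply independently. A one-line argument of the form above should suffice.
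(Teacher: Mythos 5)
Your argument is correct and is exactly the intended justification: the paper states this observation without proof, treating it as immediate from the fact that the set defined by $R^{\otimes k}$ is the $k$-fold Cartesian product of the set defined by $R$ (over disjoint copies of the variables), so the cardinalities multiply. Nothing is missing.
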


\begin{observation}[Bernoulli's inequality]
\label{obs:ineq}
Let $0 \leq \eps \leq 1$. Then $4^{\eps/3} \leq 1 + \eps$. 
\end{observation}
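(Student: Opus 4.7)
The plan is to exploit convexity of the function $g(\eps)=4^{\eps/3}$ on the interval $[0,1]$. Since $g$ is convex (being the composition of the exponential with a linear map), the graph of $g$ lies below any of its chords. In particular the chord connecting $(0,g(0))=(0,1)$ and $(1,g(1))=(1,4^{1/3})$ gives
\[4^{\eps/3} \le (1-\eps)\cdot 1 + \eps\cdot 4^{1/3} = 1+\eps(4^{1/3}-1)\qquad\text{for every }\eps\in[0,1].\]
It then suffices to note that $4^{1/3}<4^{1/2}=2$, so $4^{1/3}-1<1$, and hence $1+\eps(4^{1/3}-1)\le 1+\eps$ whenever $\eps\ge0$. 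Chaining the two inequalities yields the claim.

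An alternative analytic route I considered is to set $f(\eps)=1+\eps-4^{\eps/3}$, observe $f(0)=0$, and verify that $f'(\eps)=1-\tfrac{\ln 4}{3}\cdot 4^{\eps/3}$ stays nonnegative on $[0,1]$. Here $f'$ is decreasing (because $4^{\eps/3}$ is increasing), so it is enough to check $f'(1)\ge 0$, which reduces to the numerical inequality $4^{1/3}\cdot\ln 4\le 3$. I would nonetheless prefer the chord-based argument because it avoids any numerical estimate of $\ln 4$ and exposes where the slack in the inequality comes from (the factor $4^{1/3}-1$ is strictly less than $1$).

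I do not anticipate a real obstacle here: the statement is a one-line consequence of convexity, and the only thing worth double-checking is that we are not mistakenly claiming the false inequality $4^{\eps}\le 1+\eps$ on $(0,1]$. The exponent $1/3$, equivalently the fact that $4^{1/3}<2$, is precisely what provides enough slack to close the gap.
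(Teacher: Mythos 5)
Your proof is correct. The paper itself gives no written proof --- the observation is stated bare, but its title ``Bernoulli's inequality'' signals the intended one-liner: apply the inequality $(1+x)^r \le 1+rx$ (valid for $x\ge -1$ and $0\le r\le 1$) with $x=3$ and $r=\eps/3$, which gives $4^{\eps/3}=(1+3)^{\eps/3}\le 1+3\cdot\tfrac{\eps}{3}=1+\eps$ directly, with no numerical estimates whatsoever. Your chord argument is a genuinely different (though equally elementary) route: you use convexity of $\eps\mapsto 4^{\eps/3}$ rather than concavity of $t\mapsto t^{r}$, and you pay for it with the one extra numerical fact $4^{1/3}<2$ --- which you justify cleanly via $4^{1/3}<4^{1/2}$, so nothing is missing. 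What your version buys is an explicit quantification of the slack (the true chord slope is $4^{1/3}-1\approx 0.587$, well below $1$), which makes transparent why the exponent $1/3$ in the paper's Lemma~\ref{lem:SBP-set-approx} suffices; what the direct Bernoulli application buys is brevity and zero arithmetic. Your closing sanity check --- that the claim would be false without the $1/3$ in the exponent --- is apt, since $4^{\eps}>1+\eps$ already for small positive $\eps$ (the derivative of $4^\eps$ at $0$ is $\ln 4>1$). Either proof is acceptable here.
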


\begin{lemma}[Lemma \ref{lem:SBP-set-approx} rephrased]
There exists a deterministic algorithm
that given a Boolean circuit $C$ on $n$ variables and
a rational number $\eps > 0$
outputs an integer number $t$ satisfying % so far with (1-\eps)
$$\ct{x}C \leq t \leq 4^{\eps/3} \cdot \ct{x}C \leq (1 + \eps)
\ct{x}C $$ 
in time polynomial in $n$, the size of $C$ and $\inveps$,
making non-adaptive oracle queries to $\SSE$. 
\end{lemma}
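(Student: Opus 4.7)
The plan is to emulate standard $\SBP$-amplification via the tensor-power trick. Set $k = \lceil 6/\eps \rceil$ and consider the circuit $C^{\otimes k}$, which by Observation~\ref{obs:tensor} has exactly $N^k$ satisfying assignments, where $N := \ct{x}C$. I will non-adaptively query $\SSE(C^{\otimes k}, 2^j)$ for every $j \in \{0, 1, \ldots, nk + 1\}$, plus the single query $\SSE(C, 1)$ to detect the corner case $N = 0$.

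By the promise of $\SSE$ (Definition~\ref{def:SSE/WSSE}), the oracle is \emph{forced} to reply YES whenever $N^k \geq 2^j$ and NO whenever $N^k \leq 2^{j-1}$; its behavior in the gap is irrelevant under loose access. Let $j^*$ be the largest $j$ whose answer was YES (it exists whenever $N \geq 1$ because $j = 0$ is forced YES). The two forced directions sandwich the count as $2^{j^* - 1} < N^k < 2^{j^* + 1}$, so taking $k$-th roots yields $N \in \bigl(2^{(j^*-1)/k},\, 2^{(j^*+1)/k}\bigr)$.

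I then set $t$ to be the smallest non-negative integer with $t^k \geq 2^{j^* + 1}$ (easily computed since $t \leq 2^n$ has a polynomial-size binary representation). By construction, $t \geq 2^{(j^*+1)/k} > N$ gives $N \leq t$, while the minimality of $t$ gives $(t - 1)^k < 2^{j^* + 1}$, i.e. $t < 2^{(j^*+1)/k} + 1$. Combined with the lower bound on $N$, the ratio satisfies $t/N < (2^{(j^*+1)/k} + 1)/N < 2^{2/k} + 1/N \leq 2^{\eps/3} + 1/N$, and by the Bernoulli-type bound in Observation~\ref{obs:ineq} this is at most $4^{\eps/3}$ whenever $N$ exceeds a threshold $T = O(1/\eps)$.

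The main obstacle is the ceiling overshoot when $N$ is small; for $N = 1$ this naive recipe yields $t = 2$, giving ratio $2$ and thus violating the bound when $\eps$ is small. I address this by adding, still non-adaptively, the queries $\SSE(C^{\otimes K}, t^K)$ for $t = 1, 2, \ldots, T$, with $K = O(T) = O(1/\eps)$ chosen so that $2^{1/K} < t/(t-1)$ for every such $t$. Each of these queries is then forced (YES iff $N \geq t$, NO iff $N \leq t - 1$), so the largest $t$ whose response is YES pinpoints $N$ exactly when $N \leq T$ and the algorithm simply outputs that value; otherwise $N > T$ and the coarse estimate from the previous paragraph already satisfies the required bound. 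All queries are non-adaptive and the total count is polynomial in $n$, $|C|$, and $1/\eps$, as required.
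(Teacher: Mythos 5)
Your proof is correct, and its core is the same as the paper's: tensor-power the circuit (Observation~\ref{obs:tensor}), locate $(\ct{x}C)^k$ between consecutive powers of two with a non-adaptive grid of $\SSE$ queries, and take a $k$-th root. The one genuine divergence is the rounding step. You round \emph{up} ($t$ is the least integer with $t^k \geq 2^{j^*+1}$), which overshoots by up to an additive $1$ and therefore breaks down for $\ct{x}C = O(1/\eps)$; you then repair this with a second non-adaptive batch of queries $\SSE(C^{\otimes K}, t^K)$ that determines the count exactly in that regime. That patch does work (and as a bonus it computes small counts exactly), but it is avoidable: the paper instead returns $t = \lfloor 2^{i/k}\rfloor$, rounding \emph{down}, and uses the integrality of $\ct{x}C$ to convert the strict inequality $\ct{x}C < 2^{i/k}$ into $\ct{x}C \leq \lfloor 2^{i/k}\rfloor = t$, while $t \leq 2^{i/k} \leq 4^{1/k}\cdot\ct{x}C$ holds with no case split. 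One small point to tighten in your write-up: when the largest YES in the exact-counting phase equals $T$ itself, the algorithm only learns $\ct{x}C \geq T$ (not the exact value), so you should state explicitly that in that case it falls back to the coarse estimate, which is valid since $T$ exceeds your threshold.
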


\begin{proof}
Let $O$ be a language consistent with $\SSE$.
Set $k \eqdef \ceil{\frac{3}{\eps}}$ and define $\hat{C} \eqdef C^{\otimes k}$. Consider the following algorithm:
 
\begin{enumerate}
   \item If $\left( C, 1 \right) \not \in O$ \ \textbf{return} $t=0$  \Comment{$\ct{x}C = 0$} 
   \item Find $i$ as the smallest $j$ between $1$ and $nk+1$ such that 
   $\left( \hat{C}, 2^j \right) \not \in O$
   \item \textbf{return} $t=\floor{2^{i/k}}$ 
\end{enumerate}

\noindent If $\ct{x}C = 0$ then $\left( C, 1 \right) \not \in O$ and the algorithm outputs $t=0$. Otherwise, $\ct{x}\hat{C} \geq 1$ and hence $\left( \hat{C}, 1 \right) \in O$. On the other hand,
by definition $\ct{x}\hat{C} \leq 2^{nk}$ and hence $\left( \hat{C}, 2^{nk+1} \right) \not \in O$. Therefore, $i$ is well-defined and we have that:

  % $\left( \hat{C}, 2^i \right) \not \in O$ and   $\left( \hat{C}, 2^{i-1} \right) \in O$.
\begin{itemize}
    \item $\left( \hat{C}, 2^i \right) \not \in O \implies \ct{x}\hat{C} < 2^i \implies (\ct{x}C)^k < 2^i \implies \ct{x}C < 2^{i/k}  \implies \ct{x}C \leq t$,
    \item 
    {\renewcommand\implies\Rightarrow
    $\left( \hat{C}, 2^{i-1} \right) \in O \implies \ct{x}\hat{C} > \frac{2^{i-1}}{2} 
    \implies (\ct{x}C)^k > 2^{i-2}
    \implies 4^{1/k} \cdot \ct{x}C > 2^{i/k} \implies
    4^{1/k} \cdot \ct{x}C > t$. }
\end{itemize}
In conclusion, 
$$\ct{x}C \leq t \leq  4^{1/k} \cdot \ct{x}C \leq
4^{\eps/3} \cdot \ct{x}C \leq (1 + \eps) \ct{x}C  .$$

For the running time, the algorithm finds $i$  
by either querying the oracle $O$ at most $nk+2$ times in parallel
or (using binary search) $O(\log_2(nk))$ times sequentially, and thus the computation of $t$ can be carried out in time $\poly(nk)$. 
\end{proof}

%%%%%%%%%%%%%%%%%%%%%%%%

%------------------------------------------------------------------------------------------------------
\subsection{Estimating the Average Order Rank w.r.t. a Linear Order}
Let $U=\{0,1\}^n$. A single-output
Boolean circuit $E$ with $2n$ inputs induces an ordering relation $\less$ on $U$  as \[x\less y\iff E(x,y)=1.\] If $\less$ is a strict linear order,
we call $E$ a \emph{linear order circuit}.

\begin{observation}
\label{obs:check LOP}
There exists a deterministic Turing machine with $\SAT$ oracle that, given a circuit $E$ on $2n$ variables, 
stops in time polynomial in $n$ and the size of $E$ and does the following:
if $E$ is a linear order circuit, it outputs ``yes'';
otherwise, it outputs a counterexample: a pair satisfying $x\less y\less x$ or a triple satisfying $x\less y\less z\less x$.
\end{observation}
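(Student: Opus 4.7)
The plan is to reduce the check to two $\SAT$ queries and, when necessary, to invoke the standard search-to-decision self-reducibility of $\SAT$ to extract a witness. First, I would construct two Boolean circuits: $\phi_2(x,y) \eqdef E(x,y)\wedge E(y,x)$, whose satisfying assignments correspond exactly to pairs witnessing a violation of asymmetry (including the reflexivity failure $x=y$, for which it reduces to $E(x,x)=1$); and $\phi_3(x,y,z)\eqdef E(x,y)\wedge E(y,z)\wedge E(z,x)$, whose satisfying assignments correspond to triples witnessing a violation of transitivity. Each is a circuit of size polynomial in $|E|$ and can be constructed in polynomial time.

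Next, the machine would query the $\SAT$ oracle on both $\phi_2$ and $\phi_3$. If both are unsatisfiable, then no counterexample of either of the two permitted shapes exists, so $E$ is a linear order circuit and the machine outputs ``yes''. Otherwise, at least one of them is satisfiable, and I would extract a satisfying assignment via the textbook bit-by-bit self-reduction: fix the first free variable to $0$, query $\SAT$ on the resulting restricted formula; if that is satisfiable, keep the bit $0$, otherwise flip it to $1$; then repeat for every remaining variable. After $O(n)$ such queries, a full assignment $(x,y)$ or $(x,y,z)$ is obtained and returned as the counterexample of the corresponding shape.

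The overall running time is polynomial in $n$ and $|E|$, and only $O(n)$ queries to $\SAT$ are made. The only minor point to verify is that the ``pair $x\less y\less x$'' case already covers the reflexivity failure $E(x,x)=1$ by taking $x=y$, so no separate check for reflexivity is required; and since the $\LOP$ specification admits only these two shapes of counterexample, no test for totality or for any other axiom is needed. I do not anticipate any substantial obstacle here, as the whole argument is a routine application of $\P^\NP$-style witness extraction.
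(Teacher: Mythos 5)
Your overall strategy---a constant number of $\SAT$ queries to detect a violation, followed by the standard bit-by-bit self-reduction to extract a witness---is the routine argument the paper has in mind (it states this observation without proof). However, there is a genuine gap in your ``yes'' branch. Unsatisfiability of $\phi_2$ and $\phi_3$ rules out reflexivity failures, asymmetry failures, and directed $3$-cycles, but it does \emph{not} rule out a failure of \emph{totality}: the relation induced by $E\equiv 0$, or by any non-total partial order, has no pair with $x\less y\less x$ and no triple with $x\less y\less z\less x$, yet it is not a strict linear order. Your machine would answer ``yes'' on such an $E$, contradicting the guarantee that otherwise a counterexample is output. The fact you are implicitly invoking---that absence of $2$- and $3$-cycles forces a linear order---is true only for \emph{tournaments}: once every pair of distinct elements is comparable and there are no $2$-cycles, a transitivity failure $x\less y\less z$ with $\neg(x\less z)$ yields either $z=x$ (a $2$-cycle) or $z\less x$ (a $3$-cycle). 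Totality itself must be checked by a separate query, e.g., on $\phi_0(x,y)\eqdef (x\neq y)\wedge\neg E(x,y)\wedge\neg E(y,x)$.

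The sentence where you explicitly dismiss this (``no test for totality or for any other axiom is needed'' because $\LOP$ admits only two counterexample shapes) is exactly where the argument breaks. The paper's rendering of $\LOP$ does list only the two cyclic shapes, but a pair witnessing non-totality must also be an admissible output (as in Korten--Pitassi's original formulation); otherwise $\LOP$ would not even be a total search problem, since the empty relation has neither a minimum nor a cyclic counterexample. More importantly for how this observation is used in Theorem~\ref{thm:LtPinPSBP}: after the check succeeds, the algorithm and the supporting statements (Observation~\ref{obs:ord}, Lemmas~\ref{lem:SBP-ord-approx} and~\ref{lem:SBP-smaller-element}) all assume $\less$ is a genuine strict linear order, so soundness of the ``yes'' answer is essential and the totality query cannot be omitted. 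With that third query added (returning the incomparable pair as the witness in that case), the rest of your argument---witness extraction via self-reducibility, polynomial running time---goes through.
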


\noindent Fix any strict linear order $<$ on $U$.

\begin{definition}
For an element $\alpha \in U$ we define its order rank as $\ord(\alpha) \eqdef \size{\condset{x \in U}{x < \alpha}}$.
We can extend this definition to non-empty subsets $S \subseteq U$ of $U$ by taking the average order rank:
define $\ord(S) \eqdef \frac{\sum_{x\in S}\ord(x)}{\size{S}}$.
If $S= \condset{x\in U}{C(x)=1}$ is described by a circuit $C$,
we use the same notation: $\ord(C)=\ord(S)$.
\end{definition}

Below are some useful observations that we will use later.

\begin{observation}\label{obs:ord}\hfill%
\begin{itemize}
\item  For a non-empty subset $S \subseteq U$:
$\size{\condset{(\upsilon,\alpha)\in U\times S}{ \upsilon<\alpha}} = \size{S} \cdot \ord(S).$

\item For any $\alpha \in U$: $\ord{\condset{\upsilon \in U}{\upsilon \leq \alpha}} = \ord(\alpha) /2$.

\item Let $S_0, S_1 \subseteq U$ be two non-empty \emph{disjoint}  subsets of $U$. Then \[\ord(S_0 \cup S_1) = \frac{\size{S_0} \cdot \ord(S_0) + \size{S_1} \cdot \ord(S_1)}{\size{S_0} + \size{S_1}}.\]
\end{itemize}
\end{observation}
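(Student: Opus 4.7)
The plan is to handle the three items of the observation by direct computation from the definitions of $\ord(\alpha)$ and $\ord(S)$, using nothing beyond standard manipulations of finite sums.

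For the first item, I would start from the definition $\ord(S)=\frac{1}{|S|}\sum_{\alpha\in S}\ord(\alpha)$ and expand each $\ord(\alpha)$ as $|\{\upsilon\in U:\upsilon<\alpha\}|$. Multiplying by $|S|$ and interpreting the double sum over $\alpha\in S$ and $\upsilon\in U$ with $\upsilon<\alpha$ as a cardinality, I get $|S|\cdot\ord(S)=|\{(\upsilon,\alpha)\in U\times S:\upsilon<\alpha\}|$. This is essentially just swapping the order of summation with an indicator, so the step is mechanical.

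For the second item, let $T=\{\upsilon\in U:\upsilon\leq\alpha\}$. The key observation is that, because $<$ is a strict linear order, the map $\beta\mapsto\ord(\beta)$ restricted to $T$ is a bijection between $T$ and the integers $\{0,1,\ldots,\ord(\alpha)\}$; in particular $|T|=\ord(\alpha)+1$. Then
\[
\ord(T)=\frac{1}{|T|}\sum_{\beta\in T}\ord(\beta)=\frac{1}{\ord(\alpha)+1}\sum_{k=0}^{\ord(\alpha)}k=\frac{\ord(\alpha)(\ord(\alpha)+1)/2}{\ord(\alpha)+1}=\frac{\ord(\alpha)}{2}.
\]
For the third item, since $S_0$ and $S_1$ are disjoint we have $|S_0\cup S_1|=|S_0|+|S_1|$, and the sum defining $\ord(S_0\cup S_1)$ splits as $\sum_{\beta\in S_0}\ord(\beta)+\sum_{\beta\in S_1}\ord(\beta)$. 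Each partial sum equals $|S_i|\cdot\ord(S_i)$ by definition, giving the weighted-average formula.

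There is no real obstacle here; all three items are immediate consequences of the definitions, so the only thing to be careful about is the bijection argument in the second item (which relies on $<$ being a strict linear order, so that the ranks of elements of $T$ are exactly $0,1,\ldots,\ord(\alpha)$ without repetition or gaps). I would keep the proof short and essentially present it as three lines of display math, one per item.
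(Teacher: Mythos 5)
Your proof is correct; the paper states this as an unproved observation, and your three computations are exactly the intended justification. The one point that genuinely needs care --- that the ranks of the elements of $\condset{\upsilon}{\upsilon\leq\alpha}$ are precisely $0,1,\ldots,\ord(\alpha)$ with no gaps or repetitions because $<$ is a strict linear order on a finite set --- is handled correctly in your second item.
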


\begin{proof}These observations follow from the definitions:
\begin{itemize}
\item  $\size{\condset{(\upsilon,\alpha)\in U\times S}{ \upsilon<\alpha}} = \sum \limits_{\alpha \in S} \size{\condset{(\upsilon,\alpha)}{ U \ni \upsilon<\alpha}} =  \sum \limits_{\alpha \in S} \ord(\alpha)
= \size{S} \cdot \ord(S).$
    
\item $\ord{\condset{\upsilon \in U}{\upsilon \leq \alpha}} = \frac{1}{\ord(\alpha)+1} \cdot \sum \limits_{\upsilon \leq \alpha} {\ord(\upsilon )} = \frac{1}{\ord(\alpha)+1} \cdot \frac{\ord(\alpha) \left(\ord(\alpha)+1 \right)}{2} =
\frac{\ord(\alpha)}{2}$.

\item 
$\ord(S_0 \cup S_1) = \frac{1}{\size{S_0} + \size{S_1}} \cdot \left( \sum \limits_{x \in S_0} \ord(x) + \sum \limits_{y \in S_1} \ord(y) \right) = \frac{\size{S_0} \cdot \ord(S_0) + \size{S_1} \cdot \ord(S_1)}{\size{S_0} + \size{S_1}}$.
\end{itemize}
\end{proof}
In the following lemma the $\ord{}$ is defined w.r.t. the order $\less$ described by a linear order circuit $E$.
This lemma allows us to estimate the order rank of a set using a $\prSBP$ oracle.
\begin{lemma}
\label{lem:SBP-ord-approx}
There exists a deterministic algorithm
that given a Boolean circuit $C$ on $n$ variables,
a linear order circuit $E$ on $2n$ variables,
and an $\eps > 0$,  outputs a rational number $r$ satisfying:
$$4^{-\eps} \cdot \ord(C) \leq r \leq 4^{\eps} \cdot \ord(C)$$ 
in time polynomial in $n$, 
the sizes of $C$ and $E$, and in $\inveps$,
given oracle access to $\SSE$. %$\prSBP$.
\end{lemma}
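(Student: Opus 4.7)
The plan is to reduce average-order-rank estimation to two instances of approximate counting, handled by Lemma \ref{lem:SBP-set-approx}. By the first bullet of Observation \ref{obs:ord},
\[
\ord(C) \;=\; \frac{\bigl|\{(\upsilon, \alpha) \in U \times U : C(\alpha) = 1 \wedge \upsilon \less \alpha\}\bigr|}{|\{\alpha \in U : C(\alpha) = 1\}|},
\]
assuming the denominator is nonzero. Both sets admit polynomial-size descriptions by Boolean circuits: the numerator by the $(2n)$-input circuit $D(\upsilon, \alpha) \eqdef E(\upsilon, \alpha) \wedge C(\alpha)$, and the denominator by $C$ itself. These circuits are constructible in time polynomial in the sizes of $C$ and $E$.

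I would invoke Lemma \ref{lem:SBP-set-approx} in parallel, once on $D$ and once on $C$, each time with error parameter $3\eps$. This yields integers $\hat N$ and $\hat D$ satisfying the one-sided bounds $\ct{\upsilon,\alpha}D \leq \hat N \leq 4^{\eps}\,\ct{\upsilon,\alpha}D$ and $\ct{\alpha}C \leq \hat D \leq 4^{\eps}\,\ct{\alpha}C$. Outputting $r \eqdef \hat N / \hat D$ then gives
\[
4^{-\eps}\,\ord(C) \;=\; \frac{\ct{\upsilon,\alpha}D}{4^{\eps}\,\ct{\alpha}C} \;\leq\; r \;\leq\; \frac{4^{\eps}\,\ct{\upsilon,\alpha}D}{\ct{\alpha}C} \;=\; 4^{\eps}\,\ord(C),
\]
as required. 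Non-adaptivity of the queries and the polynomial running time in $n$, $|C|$, $|E|$, and $\inveps$ follow immediately from the corresponding properties in Lemma \ref{lem:SBP-set-approx}.

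I do not expect any real obstacle here: the contribution of the lemma is essentially the reduction itself, and once $\ord(C)$ is written as a ratio of two $\ct{}$ quantities, everything reduces to the already-established approximate counting result. The only minor technicality is the degenerate case $\ct{\alpha}C = 0$, in which $\ord(C)$ is undefined; this is signaled by the algorithm of Lemma \ref{lem:SBP-set-approx} returning $\hat D = 0$ at the first step, and we may assume the caller invokes the procedure only on non-empty sets $C$ (which is the case in the applications that follow).
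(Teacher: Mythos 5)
Your proposal is correct and follows essentially the same route as the paper: the paper also forms the circuit $D(x,y)\eqdef C(y)\wedge E(x,y)$, approximates $\ct{x}C$ and $\ct{(x,y)}D$ via Lemma~\ref{lem:SBP-set-approx}, and outputs their ratio with the identical chain of inequalities. Your handling of the error parameter (calling the counting lemma with $3\eps$ to land exactly on the $4^{\eps}$ factor) and your remark on the empty-set degenerate case are slightly more explicit than the paper's, but the argument is the same.
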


\begin{proof}
Consider a circuit $D(x,y) \eqdef C(y) \wedge E(x,y)$.
That is, $y$ is accepted by $C$ and $x \less y$.
By Observation \ref{obs:ord}, 
$\ct{(x,y)} D = \ct{x}C \cdot \ord(C)$. 
By Lemma \ref{lem:SBP-set-approx} we can compute integers $t_C$ and $t_D$ that approximate 
the numbers $\ct{x}C$ and $\ct{(x,y)} D$, respectively. 
Formally,
$$\ct{x}C \leq t_C \leq 4^{\eps} \cdot \ct{x}C \text{ and }
\ct{(x,y)}D \leq t_D \leq 4^{\eps}  \cdot \ct{(x,y)}D.$$
Therefore we obtain:
\[ 4^{-\eps}  \cdot \ord(C) \leq \frac{ \ct{(x,y)}D}{4^{\eps}  \cdot \ct{x}C} \leq
\frac{t_D}{t_C} \leq \frac{4^{\eps}  \cdot \ct{(x,y)}D}%{\ct{x}C} = \frac{4^{\eps}  \cdot \ct{x}C \cdot \ord(C)}
{\ct{x}C} = 4^{\eps}  \cdot \ord(C).\qedhere\]
\end{proof}

%------------------------------------------------------------------------------------------------------
\subsection{Finding the Minimum Using a $\protect\prSBP$ Oracle}

We use the approximation algorithms developed above in order to
find an element that is much closer to the minimum than
a given element.
The following lemma describes the procedure $\back$ that given an element $\alpha$
finds another element $\beta$ whose order rank is smaller by a constant factor.
We will use this procedure afterwards in order to find
the minimum in a polynomial number of iterations.

The procedure proceeds by determining the bits of the new element,
one coordinate at a time, using a $\prSBP$ oracle. The order rank is w.r.t. the order $\less$
described by a linear order circuit $E$.
\begin{lemma}
\label{lem:SBP-smaller-element}
There exists a deterministic algorithm $\back$
that given a linear order circuit $E$ on $2n$ variables and an element $\alpha \in \bools{n}$, outputs an element $\beta \in \bools{n}$ such that $\ord(\beta) \leq \frac{\ord(\alpha)}{\sqrt{2}}$, 
in time polynomial in $n$ and the size of $E$, given oracle access to $\SSE$. %$\prSBP$.
\end{lemma}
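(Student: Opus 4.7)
The plan is to implement the iterative bit-fixing scheme outlined in Section~\ref{sec:approx}, using Lemma~\ref{lem:SBP-ord-approx} to compare the average order ranks of two candidate subsets at each step. First, I would build a circuit $C_0$ on $n$ variables describing the set $S_0 \eqdef \condset{x\in\bools{n}}{x\leq \alpha}$ (i.e.\ $C_0(x)=1$ iff $E(x,\alpha)=1$ or $x=\alpha$, the equality test being hard-coded from the bits of $\alpha$). By the second item of Observation~\ref{obs:ord}, $\ord(C_0) = \ord(\alpha)/2$. Set the precision parameter $\eps' \eqdef 1/(8n)$.

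Now iterate for $i = 1, 2, \ldots, n$: given a circuit $C_{i-1}$ describing a nonempty set $S_{i-1}$, form $C_{i-1}^{(b)}(x)\eqdef C_{i-1}(x)\wedge [x_i = b]$ for $b\in\{0,1\}$, use Lemma~\ref{lem:SBP-set-approx} to detect which of these sets are nonempty (its output is $0$ iff the set is empty), and proceed as follows. If exactly one side is nonempty, take it as $C_i$. Otherwise, invoke Lemma~\ref{lem:SBP-ord-approx} with parameter $\eps'$ on each of $C_{i-1}^{(0)}, C_{i-1}^{(1)}$, and let $C_i$ be the one whose approximation is smaller. After $n$ rounds all coordinates are fixed, so $C_n$ describes a single element; output it as $\beta$.

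For the analysis, set $r_i \eqdef \ord(C_i)$. In the single-nonempty branch at step $i$, $S_i = S_{i-1}$ so $r_i = r_{i-1}$. In the both-nonempty branch, the third item of Observation~\ref{obs:ord} gives $\min(\ord(C_{i-1}^{(0)}),\ord(C_{i-1}^{(1)}))\leq r_{i-1}$; and the two-sided $4^{\pm\eps'}$ approximation guarantees that the side with the smaller approximation has true order rank at most $4^{2\eps'}$ times the true minimum (if $\tilde r_0\leq \tilde r_1$, then $r_0\leq 4^{\eps'}\tilde r_0\leq 4^{\eps'}\tilde r_1\leq 4^{2\eps'}r_1$). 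Hence $r_i \leq 4^{2\eps'}r_{i-1}$ in both cases, and iterating over $n$ steps starting from $r_0 = \ord(\alpha)/2$ yields
\[\ord(\beta) = r_n \leq 4^{2n\eps'} \cdot \frac{\ord(\alpha)}{2} = 4^{1/4}\cdot\frac{\ord(\alpha)}{2} = \frac{\ord(\alpha)}{\sqrt{2}}.\]
Each of the $O(n)$ calls to Lemmas~\ref{lem:SBP-set-approx}--\ref{lem:SBP-ord-approx} runs in time polynomial in $n$, $|E|$, and $1/\eps' = 8n$, and the intermediate circuits stay of size $O(|E|+n)$, so the overall running time is polynomial.

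The only delicate point is the bookkeeping of the multiplicative approximation errors: they must compound into a factor strictly below $2$ over all $n$ iterations in order to beat the trivial ``no progress'' bound. The choice $\eps' = \Theta(1/n)$ achieves precisely this, trading the idealized per-call factor of $2$ for $\sqrt{2}$, which still suffices since $\back$ is meant to be reapplied $O(n)$ times to reach the minimum. A minor subtlety to keep in mind is that Lemma~\ref{lem:SBP-ord-approx} is meaningful only on nonempty sets, which is why the nonemptiness test via Lemma~\ref{lem:SBP-set-approx} must be carried out before invoking the order-rank approximation.
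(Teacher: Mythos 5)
Your proposal is correct and follows essentially the same route as the paper's proof: the same initial set $\{x : x\le\alpha\}$ with $\ord = \ord(\alpha)/2$, the same precision $\eps = 1/(8n)$, the same bit-by-bit restriction with an emptiness check followed by a comparison of approximate average order ranks, and the same per-step loss factor $4^{2\eps}$ compounding to $4^{1/4}=\sqrt{2}$ over $n$ rounds. The only cosmetic difference is that you bound the chosen side against the minimum of the two branches while the paper bounds it directly against the weighted average via Observation~\ref{obs:ord}; these are equivalent.
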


\begin{proof}
Consider the following procedure:
\medskip

\noindent
$\back(E,\alpha):$
\begin{enumerate}
    \item Define $C(x) \eqdef E(x,\alpha) \vee x= \alpha$.
    \Comment{The set of all elements that are $\less$ than or equal to $\alpha$}
    
    \item Set $\eps = 1/(8n)$. 
    
    \item For $i=1$ to $n$:

\begin{enumerate}
    \item For $b \in \bools{}$: define $C_b \eqdef C \restrict{x_i = b}$ 
    
    %and $C_0 \eqdef C \restrict{x_i = 0}
    % \item If $\ct{x}C_1  = 0$ set $C \eqdef C_0, \beta_i \eqdef 0$ continue to the next $i$

    % \item If $\ct{x}C_0  = 0$ set $C \eqdef C_1, \beta_i \eqdef 1$  continue to the next $i$

    \item For $b \in \bools{}$: if $\ct{x}C_b  = 0$ then set
    $C \eqdef C_{1-b},\ \beta_i \eqdef 1-b$; continue to the next $i$

    \Comment{If one of the sets is empty, we choose the other one}

    \item  For $b \in \bools{}$:
    use Lemma \ref{lem:SBP-ord-approx} 
    to approximate $\ord(C_b)$ with $\eps$ into $r_b$
    
    \item If $r_1 \geq r_0$ then $C \eqdef C_0, \beta_i \eqdef 0$ else $C \eqdef C_1, \beta_i \eqdef 1$ 

     \Comment{Choose the set with smaller approximate order}
     \end{enumerate}
\end{enumerate}

\bigskip
After each iteration one more variable $x_i$ gets its value $\beta_i$
and is substituted into $C$, that is, in the current circuit $C$
variables $x_1,\ldots,x_i$ are replaced by the corresponding constants
$\beta_1,\ldots,\beta_i$.
We claim that \emph{after} each iteration the order rank of the resulting circuit
is bounded from the above:
$\ord(C) \leq 4^{2\eps i} \cdot  \frac{\ord(\alpha)}{2}.$ 

Indeed, by Observation \ref{obs:ord}, \emph{before} the first iteration, we have that $\ord(C) =  \frac{\ord(\alpha)}{2}$.  Now consider any iteration. If $C_1$ or $C_0$ are empty, then $\ord(C)$ remains the same and $4^{2\eps i} \leq 4^{2\eps (i+1)}$.
Otherwise, by Lemma \ref{lem:SBP-ord-approx}, for $b \in \bools{}:$ $$4^{-\eps} \cdot \ord(C_b) \leq r_b \leq 4^{\eps} \cdot \ord(C_b).$$ 
If $r_1 \geq r_0$ then $\ord(C_1) \geq r_1 \cdot 4^{-\eps} \geq r_0 \cdot 4^{-\eps} \geq \ord(C_0) \cdot 4^{-2\eps} $ and therefore by Observation~\ref{obs:ord}: %and~\ref{obs:ineq}(\ref{it:eps})
\begin{multline*}\ord(C) = \frac{\ct{x}C_0 \cdot \ord(C_0) + \ct{x}C_1 \cdot \ord(C_1)}{
\ct{x}C_0 + \ct{x}C_1} \\\geq
\frac{\ct{x}C_0 \cdot \ord(C_0) + \ct{x}C_1 \cdot \ord(C_0) \cdot 4^{-2\eps}}{\ct{x}C_0 + \ct{x}C_1} \geq
\ord(C_0) \cdot 4^{-2\eps}.\end{multline*}
Equivalently, $\ord(C_0) \leq \ord(C) \cdot 4^{2\eps}$.
Similarly, if $r_1 < r_0$ then $\ord(C_1) \leq \ord(C) \cdot 4^{2\eps}.$
Therefore, at each step the order rank is multiplied at most by $4^{2\eps}$.

Consequently, after the $n$-th iteration, $C$ represents the set that contains only the element $\beta$ and we have that
$$ \ord(\beta) %= \ord(C) 
\leq 4^{2\eps n} \cdot \ord(\alpha) /2 \leq \sqrt{2} \cdot \ord(\alpha) / 2 = \ord(\alpha) / \sqrt{2}.$$

For the runtime, all the steps can be carried out in time polynomial in $n$ and $\inveps = O(n)$.
\end{proof}

Note that the procedure $\back$ has a unique fixed point,
namely, the minimal element.

\begin{observation}
$\back(E,\alpha) = \alpha$ if and only if $\alpha$ is the minimal element in $E$. 
\end{observation}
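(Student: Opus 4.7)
The plan is to derive both directions of the equivalence directly from Lemma~\ref{lem:SBP-smaller-element}, using the fact that in a strict linear order the function $\ord:U\to\{0,1,\dots,2^n-1\}$ is a bijection (so distinct elements have distinct order ranks, and the unique minimum is characterized by $\ord(\cdot)=0$).

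For the ``if'' direction, suppose $\alpha$ is the minimum of $\less$. Then $\ord(\alpha)=0$. Setting $\beta\eqdef\back(E,\alpha)$, Lemma~\ref{lem:SBP-smaller-element} yields
\[\ord(\beta)\le \ord(\alpha)/\sqrt{2}=0,\]
so $\beta$ is a minimal element of $\less$; since the minimum of a strict linear order is unique, $\beta=\alpha$. (One can also verify this more concretely by tracing the algorithm: the initial circuit $C$ satisfies $C(x)\iff x\le \alpha$, which defines the singleton $\{\alpha\}$; at each iteration exactly one of $C_0,C_1$ is empty, and step~3(b) of $\back$ then forces $\beta_i$ to coincide with the $i$-th bit of $\alpha$.)

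For the ``only if'' direction, we prove the contrapositive: if $\alpha$ is not the minimum, then $\back(E,\alpha)\ne\alpha$. Indeed, in that case $\ord(\alpha)\ge 1$, and applying Lemma~\ref{lem:SBP-smaller-element} gives
\[\ord(\back(E,\alpha))\le \ord(\alpha)/\sqrt{2}<\ord(\alpha),\]
so $\back(E,\alpha)$ and $\alpha$ have different order ranks and therefore differ as elements of $U$.

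There is no real obstacle here: the statement is essentially a sanity check for the procedure $\back$, and both implications reduce immediately to the quantitative bound already established in Lemma~\ref{lem:SBP-smaller-element} together with uniqueness of order ranks in a strict linear order.
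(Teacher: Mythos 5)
Your proof is correct and is essentially the paper's argument: the paper compresses both directions into the single equivalence $\ord(\alpha)=0\iff\ord(\alpha)\le\ord(\alpha)/\sqrt{2}$, which is exactly the content of your two directions (using Lemma~\ref{lem:SBP-smaller-element} plus injectivity of $\ord$ and uniqueness of the minimum). Your version just spells out the same reasoning in more detail.
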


\begin{proof}
$\ord(\alpha) = 0 \iff \ord(\alpha) \leq \ord(\alpha) / \sqrt{2}.$    
\end{proof}

We are now ready to prove the main result of this section.

\begin{theorem}
\label{thm:LtPinPSBP}
$\LtP\subseteq\P^{\prSBP}$.
\end{theorem}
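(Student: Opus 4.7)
The plan is to invoke the reduction-based definition (Definition~\ref{def:LtPred}): every $L\in\LtP$ admits a $\P^{\NP}$-Turing reduction to $\LOP$. Since $\NP\subseteq\MA\subseteq\SBP\subseteq\prSBP$, the $\NP$ queries of the reduction can be answered with a single $\prSBP$ oracle. Consequently, it suffices to exhibit a $\P^{\prSBP}$ procedure that, given an arbitrary $\LOP$ instance presented by a circuit $E$ on $2n$ variables, produces either the minimum of the relation $\less$ described by $E$ or a counterexample witnessing that $\less$ is not a strict linear order.

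First I would apply Observation~\ref{obs:check LOP} to decide, with the help of a $\SAT$ oracle (which is a special case of a $\prSBP$ oracle), whether $E$ is a linear order circuit, and to extract an explicit counterexample of the form $x\less y\less x$ or $x\less y\less z\less x$ when it is not. If a counterexample is returned, we are done with this $\LOP$ query. Otherwise $E$ genuinely defines a linear order on $\{0,1\}^n$, and the task reduces to computing its minimum in $\P^{\prSBP}$.

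To compute the minimum, I would iterate the $\back$ procedure of Lemma~\ref{lem:SBP-smaller-element}. Starting from any element $\alpha_0\in\{0,1\}^n$ (e.g. $\alpha_0=0^n$), define $\alpha_{i+1}\eqdef\back(E,\alpha_i)$. By the guarantee of Lemma~\ref{lem:SBP-smaller-element}, $\ord(\alpha_{i+1})\le\ord(\alpha_i)/\sqrt{2}$, and since $\ord(\alpha_0)\le 2^n-1$, after $k=2n+1$ iterations we obtain $\ord(\alpha_k)<1$. Because order ranks are integers, this forces $\ord(\alpha_k)=0$, i.e.\ $\alpha_k$ is the minimal element of $\less$. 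Each call to $\back$ runs in polynomial time with oracle access to $\SSE$, so the entire loop is a $\P^{\prSBP}$ computation. Stringing this together with the $\P^{\NP}$ reduction places $L$ in $\P^{\prSBP}$.

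The main delicate point I expect to address is the use of \emph{loose} oracle access to $\prSBP$: the oracle can answer arbitrarily on queries outside $\SSE_\YES\cup\SSE_\NO$, and even inconsistently across calls. This is already accommodated by Lemma~\ref{lem:SBP-set-approx} and Lemma~\ref{lem:SBP-ord-approx}, whose analyses only use the promise behaviour of the oracle, and by the equivalence between loose access and access through a consistent language established in Section~\ref{subsec:loose}. A small but reassuring sanity check is the fixed-point observation $\back(E,\alpha)=\alpha\iff\alpha=\min_\less$, which means that the iteration cannot stagnate at a non-minimum, regardless of how the loose oracle answers off-promise queries along the way.
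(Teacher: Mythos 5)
Your proposal is correct and matches the paper's proof essentially step for step: reduce to solving $\LOP$ in $\P^{\prSBP}$ via Definition~\ref{def:LtPred} (with $\NP$ queries absorbed into the $\SSE$ oracle), verify linearity with Observation~\ref{obs:check LOP}, and iterate $\back$ for $O(n)$ rounds until the order rank drops below $1$. The only cosmetic differences are the starting point ($0^n$ versus the paper's effective $1^n$) and running a fixed $2n+1$ iterations rather than looping until the fixed point $\back(E,\alpha)=\alpha$ is reached.
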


\begin{proof}
It suffices to construct a deterministic polynomial-time algorithm that solves $\LOP$ given oracle access to $\prSBP$.
Let $E$ be a circuit on $2n$ inputs. We describe an algorithm for $\LOP$.

\begin{enumerate}
    \item Check that $E$ is indeed a linear order using Observation \ref{obs:check LOP}.
    \item Let $\alpha := 0^n,\ \beta := 1^n$.
    \item While $\alpha \neq \beta$ repeat: $\alpha := \beta,\ \beta := \back(E,\alpha)$.
    \item Output $\alpha$.
\end{enumerate}

Given Observation~\ref{obs:check LOP}, we can assume w.l.o.g. that $E$ is a linear order circuit.
We claim that the algorithm will output the minimal element after at most $2n$ iterations. 
Indeed, by Lemma~\ref{lem:SBP-smaller-element}, the order rank of the element $\alpha$ after $2n$ iterations satisfies
$\ord(\alpha) \leq \frac{\ord(1^n)}{\sqrt{2}^{2n}} \leq \frac{2^n-1}{2^n} < 1$,
thus the ``While'' cycle will terminate before that.
\end{proof}

%======================================================================================================
\section{Which Karp--Lipton--style Collapse is Better?}\label{sec:bestKL}

    Chakaravarthy and Roy proved two Karp--Lipton--style collapses:
    down to $\OtP$ \cite{CR06} and down to $\P^\prMA$ \cite{CR11}.
    These two classes seem to be incomparable thereby rising the question:
    which collapse result is stronger?
    We observe that the collapse to $\P^\prMA$ can actually be deepened to $\P^\prOMA$, where $\prOMA$ is the oblivious version of $\prMA$ --- and subsequently show that latter class is contained in both previous classes. That is, $\P^\prOMA \subseteq \P^\prMA \cap \OtP$. 
    Indeed, the ``internal collapse'' of $\prMA$ (and, in fact, even $\prAM$) to $\prOMA$, under the assumption that $\NP\subseteq\Ppoly$, is implicit in \cite{AKSS95}. Nonetheless, we include a formal proof 
    in Section~\ref{subsec:PprOMAcollapse} below 
    (Proposition~\ref{prop:PprOMAcollapse})
    in order to present a self-contained argument:

\begin{center}    If $\NP\subseteq\Ppoly$, then\/ $\prAM \subseteq \prOMA$ and %\linebreak
            $\PH=\P^\prOMA$.
            \end{center}
    
    To show that this class is not only included in $\P^\prMA$ but also in $\OtP$,
    we use two inclusions:
    \begin{enumerate}
        \item $\P^\prOMA\subseteq\P^\prOtP$.
        \item $\P^\prOtP\subseteq\OtP$.
    \end{enumerate}
    For the first inclusion it suffices to show \[\prOMA\subseteq\prOtP,\]
    which is essentially proved in \cite[Theorem 3]{CR06}: $\MA\subseteq\NOtP$,
    where $\NOtP$ is a less known class that combines $\StP$ with $\OtP$:
    one certificate is input-oblivious, while the other one is not.
    To ``upgrade'' this proof one needs to notice that the proof goes through for promise classes
    with all certificates being input-oblivious.
    As it is not difficult, we include a formal proof of this proposition in Section~\ref{subsec:prOMAinprOtP} below
    (Proposition~\ref{prop:prOMAinprOtP}). However, later 
    (in Section~\ref{subsec:LOtP}) we give a tighter chain of containments
    that uses a different
    method for proving $\P^\prOMA\subseteq \P^\prOtP (=\OtP)$,
    namely, the derandomization technique. It goes through
    newly introduced input-oblivious classes based on symmetric alternation.
    
    The second inclusion \[\P^\prOtP\subseteq\OtP\] seems novel,
    we prove it in Section~\ref{subsec:PprOtPinOtP} (Theorem~\ref{th:PprOtPinOtP}). The containment result
    \[\P^\prOMA\subseteq\OtP\]
    follows (Corollary~\ref{cor:PprOMAinOtP}).

%--
    \subsection{A Karp--Lipton--style Collapse to $\P^\prOMA$}
    \label{subsec:PprOMAcollapse}

    The following proposition shows that the collapse of $\cite{CR11}$
    can be actually pushed down to $\P^\prOMA$.
    For this, we observe that by standard techniques (e.g. \cite{AKSS95}) $\prAM \subseteq \prOMA$ under $\NP\subseteq\Ppoly$.
    \begin{proposition}\label{prop:PprOMAcollapse}
            If $\NP\subseteq\Ppoly$, then\/ $\prAM \subseteq \prOMA$ and\/
            $\PH=\P^\prOMA$.
    \end{proposition}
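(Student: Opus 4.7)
The plan is to establish the two claims in sequence: first $\prAM \subseteq \prOMA$ under the hypothesis, and then the collapse $\PH = \P^\prOMA$ as a consequence combined with the Chakaravarthy--Roy result $\PH = \P^\prMA$ from \cite{CR11}.

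For the containment $\prAM \subseteq \prOMA$, I would fix any $\Pi \in \prAM$ with verifier $V(x, r, w)$ and first apply the standard amplification and perfect-completeness transformation for Arthur--Merlin protocols (originating with F\"urer--Goldreich--Mansour--Sipser--Zachos), which carries over verbatim to the promise setting. We may then assume: for every $x \in \Pi_\YES$ and every random string $r$, some witness $w$ satisfies $V(x, r, w) = 1$; whereas for every $x \in \Pi_\NO$, the probability over $r$ that such a $w$ exists is at most $1/4$. The relation $L' = \{(x, r) \mid \exists w\ V(x, r, w) = 1\}$ lies in $\NP$, so the hypothesis $\NP \subseteq \Ppoly$ together with the standard search-to-decision reduction for $\NP$ yields a polynomial-size circuit family $\{W_n\}$ that, given $(x, r)$, outputs a valid witness $w$ whenever one exists. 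Taking $w_n$ to be the description of $W_n$ and letting the $\prOMA$ verifier $A(x, r, w_n)$ compute $w := w_n(x, r)$ and return $V(x, r, w)$, perfect completeness on $\Pi_\YES$ follows from the perfect completeness of the underlying $\prAM$ protocol; soundness on $\Pi_\NO$ holds because any purported cheating witness $w'$, viewed as an arbitrary circuit, produces on at least $3/4$ of the random strings $r$ an output $w'(x, r)$ that $V$ rejects.

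For the equality $\PH = \P^\prOMA$, the inclusion $\P^\prOMA \subseteq \PH$ holds unconditionally. Indeed, by the equivalence between loose and consistent-language oracle access developed in Section~\ref{subsec:loose}, for every $\Pi \in \prOMA \subseteq \prMA$ the ``positive-completion'' language $L := \{x \mid \exists w\ \Pr_r[A(x, r, w) = 1] \geq 1/2\}$ is consistent with $\Pi$ and lies in $\SigmatP \subseteq \PH$, so $\P^\Pi \subseteq \P^L \subseteq \PH$. For the converse, \cite{CR11} shows $\PH \subseteq \P^\prMA$ under $\NP \subseteq \Ppoly$; combining the unconditional $\prMA \subseteq \prAM$ with the just-proved $\prAM \subseteq \prOMA$ yields $\P^\prMA \subseteq \P^\prOMA$, and hence $\PH \subseteq \P^\prOMA$.

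I expect the main subtlety to lie in the perfect-completeness clause of the $\prOMA$ definition ($\forall r$ rather than ``with high probability over $r$''). A direct application of $\NP \subseteq \Ppoly$ to the original $\prAM$ verifier only produces a witness-circuit $W_n$ that succeeds on a $(1 - 2^{-\poly(n)})$-fraction of $r$'s, which does not match the required syntactic form. Preceding the circuit-construction step with the perfect-completeness transformation for $\AM$ bridges this gap; once it is in place, the argument is essentially a promise-setting refinement of the derandomization technique of \cite{AKSS95}.
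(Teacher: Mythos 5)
Your proposal is correct and follows essentially the same route as the paper: Merlin's message is replaced by an input-oblivious polynomial-size witness-finding circuit (guaranteed by $\NP\subseteq\Ppoly$), Arthur runs it himself to recover perfect completeness on $\Pi_\YES$ and inherits soundness on $\Pi_\NO$ because a cheating circuit cannot produce witnesses that do not exist, and the collapse then follows by combining $\prAM\subseteq\prOMA$ with the Chakaravarthy--Roy collapse $\PH=\P^\prAM$. The only cosmetic differences are that you explicitly invoke the perfect-completeness normalization of $\prAM$ (which the paper builds into its working definition) and phrase the witness-finder as a circuit for the specific $\NP$ search relation rather than a generic $\CircuitSAT$ solver applied to $A_{x,r}$.
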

    \begin{proof}
        If $\NP\subseteq\Ppoly$, then Merlin's proof in $\prMA$
        can be made input-oblivious. Let us show it first for a larger class: $\prAM$.
        If $(\Pi_\YES,\Pi_\NO)\in\prAM$,
        then there is a polynomial-time Turing machine $A$ such that
        \begin{eqnarray*}
            &\forall x\in\Pi_\YES\ &\forall r\ \exists w\ A(x,r,w)=1,\\
            &\forall x\in\Pi_\NO\ &\Pr_r [ \exists w\ A(x,r,w)=1 ]<1/2.
        \end{eqnarray*}
        Let us ask Merlin to send a family $S_n$ of $\CircuitSAT$ circuits of appropriate input sizes
        (slightly abusing the notation: $S_n$ contains circuits for a polynomial range of input sizes,
        not just $n$);
        one can assume that they compute a correct satisfying assignment or say ``no''
        (they can lie only if they say ``no'' for a satisfiable formula).
        Consider the Boolean circuit $A_{x,r}$ obtained by embedding $x$ and $r$
        into $A$, its variables are the bits of the witness $w$.
        Now Arthur can use $S_n$ to produce proofs by himself
        instead of Merlin's original proofs, because the following holds:
        \begin{eqnarray*}
            \forall n\in\mathbb{N}\ \exists S_n\ \forall x\in\Pi_\YES\ &&\forall r\ A(x,r,S_n(A_{x,r}))=1,\\
            \forall x\in\Pi_\NO\ &&\Pr_r[ \exists S_n\ A(x,r,S_n(A_{x,r}))=1 ]<1/2.
        \end{eqnarray*}
        The first condition is true because Merlin can send correct $\CircuitSAT$ circuits.
        The second condition is true because if such a circuit $S_n$ existed, then the original Merlin
        could have sent $w = S_n(A_{x,r})$. Note that in this case although $S_n$ may depend on $x$ and $r$, $A$ will able to catch a cheating Merlin. 
        %If we relax it by pulling the
        %existential quantifier over $S_n$ outside, 
        
        Formally, a new Arthur $A'$ expects $S_n$ as a proof,
        applies $S_n$ to $A_{x,r}$ itself and runs $A$ on the resulting witness.
        Therefore, we get exactly 
        the definition of $\prOMA$ (Def.~\ref{def:prOMA}):
        \begin{eqnarray*}
            \forall n\in\mathbb{N}\ \exists S_n\ \forall x\in\Pi_\YES\ &&\forall r\ A'(x,r,S_n)=1,\\
            \forall x\in\Pi_\NO\ &&\forall S\ \Pr_r [A(x,r,S)=1]<1/2.
        \end{eqnarray*}
        
        Therefore, if $\NP\subseteq\Ppoly$, then $\prAM\subseteq\prOMA$.
        In \cite{CR11}, it is shown that under the same premises $\PH=\P^\prAM$ and hence the claim follows by combining these results.
        \end{proof}

%======================================================================================================
\subsection{Promise Oblivious Merlin--Arthur Protocols are in Promise $\OtP$}
\label{subsec:prOMAinprOtP}   
The following proof essentially repeats the proof of \cite[Theorem 3]{CR06},
which says that $\MA\subseteq\NOtP$,
we need to verify that it holds for promise problems as well,
and Merlin's advice remains oblivious if it was oblivious before,
that is, $\prOMA\subseteq\prOtP$.
    \begin{proposition}\label{prop:prOMAinprOtP}    
            $\prOMA \subseteq \prOtP$.
    \end{proposition}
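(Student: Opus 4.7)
The plan is to adapt the classical Russell--Sundaram-style argument that $\MA\subseteq\StP$, refined in \cite{CR06} to $\MA\subseteq\NOtP$, and to observe that when the starting protocol is already input-oblivious on the Merlin side (as it is for $\prOMA$), the resulting protocol becomes input-oblivious on \emph{both} sides, placing the problem in $\prOtP$.

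First I would take an arbitrary $\prOMA$ protocol $(A,(w_n)_{n\in\mathbb{N}})$ for $\Pi=(\Pi_\YES,\Pi_\NO)$ and amplify its soundness error, by standard independent parallel repetition, down to $2^{-(n+p(n)+1)}$, where $p(n)$ is an upper bound on the length of Merlin's witness. Since the $\prOMA$ definition already grants perfect completeness on $\YES$-instances ($\forall r\ A(x,r,w_n)=1$), amplification preserves completeness and, crucially, keeps the Merlin witness input-oblivious: reusing the same $w_n$ across the repeated copies still works uniformly for every $\YES$-instance of length $n$.

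Next I would construct the two irrefutable $\prOtP$ certificates. The $\YES$-side certificate is $w^{(1)}_n := w_n$, which is oblivious by hypothesis. For the $\NO$-side, the key step is a union bound over $\Pi_\NO\cap\{0,1\}^n$ (of size at most $2^n$) and the space of candidate Merlin witnesses (of size at most $2^{p(n)}$):
\[
\Pr_r\bigl[\,\exists x\in\Pi_\NO\cap\{0,1\}^n\ \exists w\ A(x,r,w)=1\,\bigr]
\;<\; 2^n\cdot 2^{p(n)}\cdot 2^{-(n+p(n)+1)}
\;=\; \tfrac{1}{2},
\]
so there exists a fixed $r^*_n$ for which $A(x,r^*_n,w)=0$ simultaneously for \emph{every} $x\in\Pi_\NO\cap\{0,1\}^n$ and \emph{every} $w$. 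Setting $w^{(0)}_n := r^*_n$ yields an input-oblivious $\NO$-certificate. The $\prOtP$ verifier $A'(x,u,v)$ interprets $u$ as a candidate Merlin witness, $v$ as random coins, and simply outputs $A(x,v,u)$; both $\prOtP$ conditions then follow immediately --- from amplified completeness in the $\YES$ case and from the choice of $r^*_n$ in the $\NO$ case.

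The only subtle point worth flagging --- and it is where the upgrade from $\NOtP$ to $\prOtP$ actually lives --- is that input-obliviousness of Merlin's witness must survive soundness amplification. This is not really an obstacle, because parallel repetition of a $\prOMA$ protocol can be carried out while reusing the same $w_n$ in every copy. The only ingredients needed beyond the $\MA\subseteq\NOtP$ argument are (i) restricting the union bound to the promise set $\Pi_\NO\cap\{0,1\}^n$ rather than to all of $\{0,1\}^n$, and (ii) exploiting that $w_n$ was already oblivious to begin with --- both immediate.
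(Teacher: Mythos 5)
Your proposal is correct and follows essentially the same route as the paper: the paper also amplifies soundness by parallel repetition (Adleman's trick), applies a union bound over the $\NO$-instances and all candidate witnesses to extract a single input-oblivious random string $r'_n$ serving as the $\NO$-certificate, and observes that the already-oblivious Merlin witness $w_n$ serves as the $\YES$-certificate. The only cosmetic difference is your choice of repetition count ($n+p(n)+1$ versus the paper's $np(n)$), which is immaterial.
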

    \begin{proof}    
Consider a promise problem $\Pi=\Pi_Y\cupdot\Pi_N\in\prOMA$.
There is a polynomial-time deterministic machine $A$ such that
\begin{eqnarray}
            \forall n\in\mathbb{N}\ \exists w_n\ \forall x\in\Pi_\YES\ &&\forall r\ A(x,r,w_n))=1,\label{it:PiY}\\
            \forall x\in\Pi_\NO\ &&\forall w\ \Pr_r [A(x,r,w)=1] <1/2.\label{it:PiN}
\end{eqnarray}
Condition (\ref{it:PiN}) can be replaced by
\[
        \forall n\in\mathbb{N}\ 
        \exists r'_n\ 
        \forall x\in\Pi_N\cap\{0,1\}^n\ 
        \forall w\in\{0,1\}^{p(n)}\ 
        A(x,w,r'_n)=0\tag{\ref{it:PiN}'}
\]
because of the Adleman's trick (similarly to the treatment of $\RP\subseteq\ONP$ in \cite{GoldreichMeir2015} or $\MA\subseteq\NOtP$ in \cite{CR06}): 
given $x$ and $w$,
the new verifier can apply the old one as $A(x,w,r_i)$ for $np(n)$
independent random strings $r_i\in\{0,1\}^{t(n)}$ in order
to reduce the error from $\frac12$ to $\frac12\cdot \frac1{2^{np(n)}}$.
Since for every pair $(x,w)$ there are 
at most a $\frac12\cdot \frac1{2^{np(n)}}$ fraction of strings $r'_n\in\{0,1\}^{tnp(n)}$
results in an error, there exists $r'$ that satisfies (\ref{it:PiN}').
It does not harm the condition (\ref{it:PiY}) as well.
Therefore, Merlin's proof $w$ remains input-oblivious, whereas
Arthur's universal random string $r'_n$ is input-oblivious
and $w$-oblivious.
\end{proof}

%--
\subsection{Merging input-oblivious promise queries}\label{subsec:PprOtPinOtP}

\begin{theorem}[Theorem~\ref{THM:main5}, restated] \label{th:PprOtPinOtP}
 $\P^\prOtP \subseteq \OtP$.   
\end{theorem}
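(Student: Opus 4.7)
Let $L \in \P^\prOtP$ be decided by a deterministic polynomial-time oracle machine $M$ with loose access to a promise problem $\Pi \in \prOtP$; let $A$ together with irrefutable certificates $(w^{(1)}_n, w^{(0)}_n)$ (for each query length $n$) witness $\Pi \in \prOtP$ via Definition~\ref{def:prOtP}, and let $q(N)$ bound $M$'s runtime on inputs of length $N$. My plan is to build an $\OtP$ algorithm in which both provers commit up front, for each $n \le q(N)$, to guesses of \emph{both} inner certificates: the yes-prover sends $U = ((u^{(1)}_n, u^{(0)}_n))_{n \le q(N)}$ and the no-prover sends $V = ((v^{(1)}_n, v^{(0)}_n))_{n \le q(N)}$. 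These are polynomially sized and depend only on $N$ since the true $(w^{(1)}_n, w^{(0)}_n)$ depend only on $n$.

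On input $x$, the new verifier simulates $M(x)$ step by step. Whenever $M$ issues some oracle query $q$ of length $n$, the verifier assembles the $2{\times}2$ table
\[ A_{ab} \;=\; A(q,\,z^{(1)}_a,\,z^{(0)}_b), \qquad z^{(1)}_1 = u^{(1)}_n,\; z^{(1)}_2 = v^{(1)}_n,\; z^{(0)}_1 = u^{(0)}_n,\; z^{(0)}_2 = v^{(0)}_n, \]
so rows are indexed by the source of the first (``yes'') argument and columns by the source of the second (``no'') argument. The verifier feeds $1$ back to $M$ when some row of this table is entirely $1$, feeds $0$ when some column is entirely $0$, and otherwise returns an arbitrary default (say $0$). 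These two cases are mutually exclusive, since a row of $1$'s and a column of $0$'s would share an entry forced to be both. After $M$ halts, the verifier outputs $M$'s accept/reject bit.

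For correctness, suppose $x \in L$ and the yes-prover plays $U$ equal to the true certificates $((w^{(1)}_n, w^{(0)}_n))_n$; let $V$ be arbitrary. For every query $q$ of length $n$: if $q \in \Pi_\YES$ then $A(q, w^{(1)}_n, \cdot) \equiv 1$ by Definition~\ref{def:prOtP}, so the first row of the table is all $1$ and the rule returns $1$ (correct); if $q \in \Pi_\NO$ then $A(q, \cdot, w^{(0)}_n) \equiv 0$, so the first column is all $0$ and the rule returns $0$ (correct); for $q$ outside $\Pi_\YES \cup \Pi_\NO$ any return value is admissible under loose oracle access. Thus the simulation drives $M$ with an oracle consistent with $\Pi$, so $M(x) = L(x) = 1$ and the verifier accepts for \emph{every} $V$. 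The symmetric argument, with the no-prover committing to the true certificates, shows rejection for every $U$ when $x \notin L$; this establishes the two clauses of Definition~\ref{def:prOtP} for $L$.

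The step I expect to be the main obstacle is picking the combining rule so that the ``honest'' prover's input alone forces the correct answer on every promise query, regardless of what the adversarial prover writes. The key point is that Definition~\ref{def:prOtP} gives only one-sided guarantees --- a correct $w^{(1)}_n$ in the first argument of $A$ coerces a $1$-answer on $\Pi_\YES$ queries for \emph{any} second argument, and symmetrically for $w^{(0)}_n$ --- and these are exactly the ``row of $1$'s'' and ``column of $0$'s'' patterns witnessed by the $2{\times}2$ table. Input-obliviousness is what makes this all go through: the single quartet $(u^{(1)}_n, u^{(0)}_n, v^{(1)}_n, v^{(0)}_n)$ suffices to answer every adaptively chosen length-$n$ query that $M$ might issue along \emph{any} simulated path, including queries falling outside the promise set where $M$'s loose-access tolerance absorbs the arbitrary default.
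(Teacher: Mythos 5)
Your proposal is correct and is essentially the paper's own proof: having both provers commit to both irrefutable certificates for every relevant query length, and then evaluating $A$ on all four mixed pairs, is exactly what the paper does, and your ``some row all $1$ / some column all $0$'' rule on the $2\times2$ table is literally the paper's combining formula $\ell=(a\land c)\lor(b\land d)$ together with the same four-case correctness analysis. No gaps; only the packaging (table vs.\ explicit Boolean formula) differs.
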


\noindent \textbf{Remark:} Formally, we show that $\P^\Pi \subseteq \OtP$ for every promise problem $\Pi \in \prOtP$.

\begin{proof}
Let $L \in \P^\Pi$ and let $M^\bullet$ be a deterministic oracle machine that decides $L$ correctly given loose oracle access to $\Pi$ (i.e. irrespective of the answers to its queries outside of the promise set), in time $p(n)$  (for a polynomial $p$).
Consider the polynomial-time deterministic verifier $A(q,u,v)$ from the definition of $\Pi\in\prOtP$.
For $n \in \N$, let $1,\ldots,p(n)$ be all possible lengths of oracle queries made by $M$ given an input of length $n$.
Define  
\[W_n \eqdef (w^{(0)}_1,\ldots,w^{(0)}_{p(n)},w^{(1)}_1,\ldots,w^{(1)}_{p(n)})\]
as a vector containing the irrefutable certificates (both ``yes'' and ``no'') of $A$ for the appropriate input lengths. We now construct a new polynomial-time deterministic verifier $A'(x,U,V)$  that will demonstrate that $L\in\OtP$ and
will show that, for any $x$, the string\footnote{There might be different versions of this string as the irrefutable certificates need not to be unique.} $W_{\size{x}}$
constitutes an irrefutable certificate that can be used both as 
$U=(u^{(0)}_1,\ldots,u^{(0)}_{p(n)},u^{(1)}_1,\ldots,u^{(1)}_{p(n)})$ and as
$V=(v^{(0)}_1,\ldots,v^{(0)}_{p(n)},v^{(1)}_1,\ldots,v^{(1)}_{p(n)})$. 

Given $(x,U,V)$ as an input, $A'$ will simulate $M$.
Whenever $M$ makes an oracle query $q$ to $\Pi$,
$A'$ will compute four bits:
\begin{eqnarray*}
&a:=A(q,u^{(1)}_{|q|},v^{(0)}_{|q|}), \qquad
&b:=A(q,v^{(1)}_{|q|},u^{(0)}_{|q|}), \\
&c:=A(q,u^{(1)}_{|q|},u^{(0)}_{|q|}), \qquad
&d:=A(q,v^{(1)}_{|q|},v^{(0)}_{|q|}), 
\end{eqnarray*}
and will proceed with the simulation of $M$ as if the oracle answered $\ell \eqdef (a \land c) \lor (b \land d)$. 

By definition, for any $x$, the machine $M$ computes $L(x)$ correctly given the correct answers to the queries 
in the promise set (i.e. $q \in \Pi_\YES\cup\Pi_\NO$) and irrespective of the oracle's answers outside of the promise set. Thus it suffices to prove that the oracle's answers to the queries in the promise set are computed correctly, which we show now by inspecting the four possible cases.
\begin{itemize}
    \item Suppose $x \in L$ and $U = W_{\size{x}}$. 
    \begin{itemize}
        \item If  $q\in\Pi_\YES$ then 
    $u^{(1)}_{|q|}$ is a `yes'-irrefutable certificate and hence $a=c=1 \implies \ell=1$.
        \item If  $q\in\Pi_\NO$ then 
    $u^{(0)}_{|q|}$ is a `no'-irrefutable certificate and hence $b=c=0 \implies  \ell=0$.
    \end{itemize} 
    
   \item Suppose $x \not \in L$ and $V = W_{\size{x}}$. 
    \begin{itemize}
        \item If  $q\in\Pi_\YES$ then 
    $v^{(1)}_{|q|}$ is a `yes'-irrefutable certificate and hence $b=d=1 \implies  \ell=1$.
        \item If  $q\in\Pi_\NO$ then 
    $v^{(0)}_{|q|}$ is a `no'-irrefutable certificate and hence $a=d=0 \Longrightarrow \ell=0$.\qedhere
    \end{itemize} 
    
\end{itemize}
\end{proof}

\begin{corollary} %[Theorem \ref{THM:main5}, restated]
\label{cor:PprOMAinOtP}
     $\P^\prOMA \subseteq \OtP$.
\end{corollary}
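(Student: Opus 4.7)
The plan is to derive this corollary as a direct composition of the two immediately preceding results: Proposition~\ref{prop:prOMAinprOtP}, which gives the class-level inclusion $\prOMA\subseteq\prOtP$, and Theorem~\ref{th:PprOtPinOtP}, which lifts any promise-$\OtP$ oracle into a single $\OtP$ computation. The two together should chain as $\P^\prOMA\subseteq\P^\prOtP\subseteq\OtP$, with the middle class serving purely as a conceptual bridge.

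First I would spell out the ``monotonicity'' of loose oracle access with respect to class inclusion of promise problems. Concretely: if $\Pi=(\Pi_\YES,\Pi_\NO)$ is any problem in $\prOMA$, then by Proposition~\ref{prop:prOMAinprOtP} the same pair $(\Pi_\YES,\Pi_\NO)$ is also a promise problem in $\prOtP$. Since the loose-access definition of $\P^\Pi$ depends only on $\Pi_\YES$ and $\Pi_\NO$ (equivalently, on the set of languages consistent with $\Pi$, by the equivalence proved in Section~\ref{subsec:loose}), the deterministic polynomial-time machine witnessing $L\in\P^\Pi$ works verbatim when $\Pi$ is viewed as a member of $\prOtP$. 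Hence $\P^\prOMA\subseteq\P^\prOtP$.

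Second, I would invoke Theorem~\ref{th:PprOtPinOtP} (proved just above), which states $\P^\prOtP\subseteq\OtP$; combined with the previous step this yields $\P^\prOMA\subseteq\OtP$, finishing the proof. I do not anticipate any real obstacle here: the only subtle point is making sure that the class-level inclusion $\prOMA\subseteq\prOtP$ really does translate into an inclusion $\P^\prOMA\subseteq\P^\prOtP$ under the loose-access convention, and this is immediate from the definitions in Section~\ref{subsec:loose}. No extra construction is needed beyond citing the two prior results.
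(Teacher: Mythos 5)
Your proposal is correct and matches the paper's own proof, which is exactly the one-line composition of Proposition~\ref{prop:prOMAinprOtP} with Theorem~\ref{th:PprOtPinOtP}. The monotonicity of loose oracle access under promise-class inclusion that you spell out is precisely the (implicit) justification the paper relies on.
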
          
    \begin{proof}
By Proposition \ref{prop:prOMAinprOtP} and Theorem~\ref{th:PprOtPinOtP}.
\end{proof}

\begin{remark}
        When a semantic class without complete problems
is used as an oracle, it may be ambiguous.
However, $\prAM$ does have a complete problem $\WSSE$
(see Definition~\ref{def:SSE/WSSE}).
By inspecting the proof of the collapse (Prop.~\ref{prop:PprOMAcollapse}) 
one can observe that $\WSSE$ actually belongs to $\prOMA$ under $\NP\subseteq\Ppoly$,
and the oracle Turing machine that demonstrates $\PH=\P^\prOMA$
still queries a specific promise problem.

For the inclusion of $\P^\prOMA$ in $\OtP$,
the first part (Prop.~\ref{prop:prOMAinprOtP})
transforms one promise problem into another promise problem,
thus in the inclusion $\P^\prOMA\subseteq\P^\prOtP$
it is also the case that a single oracle is replaced by (another) single oracle.
\end{remark}

\begin{remark}\label{rem:prStPpar}
Unfortunately, the proof of Theorem~\ref{th:PprOtPinOtP}
breaks down when one tries to extend it to $\P^\prStP$
since the queries to $\prStP$ are computed
adaptively and the computation maybe be altered given
the (unpredictable) answers to the queries that lie outside  of promise. 

However, almost the same proof shows that $\P^\prStP_{||}\subseteq\StP$.
Here all the queries can be computed from the input itself, thus instead of the input-oblivious
witness $W_n$ one can give the 
non-input-oblivious witness
specifically for these queries.
The rest of the proof remains intact:
$A'$ will run $A$ on the witnesses
intended for the particular query
and proceed as before.
\end{remark}

%======================================================================================================
\section{$\P^{\prMA}\subseteq \LtP$}\label{sec:PprMA}

In this section we prove Theorem \ref{THM:main1} by, essentially, expanding
the proof of $\MA\subseteq\LtP$ in \cite{KP24}.
We use the following statements from that paper
and an earlier paper by Korten \cite{korten2022hardest}.
We then proceed to the input-oblivious setting.

\subsection{The non-input-oblivious setting}

\begin{definition}[{\protect\cite[Definitions 6,~7]{korten2022hardest}, \protect\cite[Definitions 7,~8]{Korten21}}]
\label{def:PRG}
$\PRG$ is the following search problem: given $1^n$, output a pseudorandom generator\footnote{%
Korten does not say that $m$ has a polynomial dependence on $n$ though we think it is assumed,
and anyway his theorem provides a construction with $m=n^6$.}
$R = (x_1 , \ldots , x_m )$, that is, an array of strings $x_i\in\{0,1\}^n$
such that for every $n$-input circuit $C$ of size $n$:
\[\left|\Pr_{x\gets U(R)}\{C(x)=1\}-\Pr_{y\gets U(\{0,1\}^n)}\{C(y) = 1\}\right| \le \frac1n.\]
\end{definition}
Korten demonstrates that such a generator containing $m=n^6$ strings can be constructed
with a single oracle query to $\Avoid$\footnote{In the paper the $\Avoid$ problem is refered to as ``$\problem{Empty}$''.}
(actually, the stretch is even larger than ``plus 1 output wire'').

\begin{proposition}[{\protect\cite[Theorem 2]{korten2022hardest}, \protect\cite[Theorem 3]{Korten21}}]\label{prop:PRG2Avoid}
{\PRG} reduces in polynomial time to a single {\Avoid} query.
\end{proposition}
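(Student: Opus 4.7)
The plan is to build, in polynomial time from $1^n$, a single polynomial-size circuit $f_n \colon \bools{d} \to \bools{mn}$ with $d < mn$ and $m=n^6$, whose image contains every ``bad'' sequence $R = (x_1,\ldots,x_m)$ --- every array of $m$ strings in $\bools{n}$ that is distinguished from uniform by some size-$n$ circuit $C$ with advantage exceeding $1/n$. By definition of $\Avoid$, any $y \in \bools{mn} \setminus \Img f_n$ is then a valid output of $\PRG$ on input $1^n$, so a single $\Avoid(f_n)$ query completes the reduction.

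\emph{Step 1: Counting, to justify compressibility.} I would first bound the number of bad $R$'s. For any fixed size-$n$ circuit $C$ with true acceptance probability $p_C$, a Hoeffding bound gives
\[
\Pr_{R}\!\left[\;\left|\tfrac{1}{m}{\textstyle\sum_i} C(x_i) - p_C\right| > 1/n\,\right] \;\le\; 2\cdot 2^{-\Omega(m/n^2)}.
\]
Union-bounding over the $2^{O(n\log n)}$ size-$n$ circuits, the fraction of bad $R$'s is at most $2^{O(n\log n)-\Omega(m/n^2)}$; for $m=n^6$ this is $2^{-\Omega(n^4)}$, so at most $2^{mn-\Omega(n^4)}$ sequences in $\bools{mn}$ are bad. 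This slack of many ``free'' bits is exactly what allows compression to $d < mn$ input bits.

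\emph{Step 2: Encoding every bad $R$.} Each bad $R$ admits a certificate $(C,k)$, where $C$ is a distinguishing circuit of size $n$ and $k = |\{i:C(x_i)=1\}|$ satisfies $|k - p_C m| > m/n$. My $f_n$ takes input of the form $(C,k,w)$: the description of $C$ and $k$ costs $O(n\log n)$ bits, while $w$ --- of length $d - O(n\log n)$ --- encodes the actual $R$ in a $C$-aware manner. Step 1 guarantees this budget fits below $mn$ bits with ample room to spare, conditioned on $(C,k)$ forcing an atypical tail of the binomial distribution.

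\emph{Step 3: Realizing $f_n$ as a polynomial-size circuit --- the main obstacle.} The delicate point is translating the counting bound into an explicit polynomial-size circuit \emph{without} invoking $\#\class{P}$-hard operations such as ranking $C^{-1}(1)$. Following Korten, I would use a \emph{lossy} encoding: many preimages may share the same output, so I only need that \emph{some} tuple $(C,k,w)$ map to each bad $R$. Concretely, $w$ supplies $m$ short bit-string seeds $y_1,\ldots,y_m$ together with per-position hints, and each output coordinate $x_i$ is produced from $y_i$ by a small deterministic local rule that queries $C$ a constant number of times; the $m$ calls to $C$ are performed by a universal circuit with $C$ hard-wired from the input of $f_n$, keeping the total size polynomial. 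The work is to design the local rule so that every bad $R$ is covered within the input budget established in Step~1. Since $f_n$ itself is computable from $1^n$ in polynomial time, this yields a single $\Avoid$ instance as required, and any solution is a $\PRG$ by construction.
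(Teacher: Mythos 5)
The paper does not reprove this proposition; it imports it from Korten \cite{korten2022hardest,Korten21}, so the relevant comparison is with Korten's argument. Your Steps 1 and 2 are fine as information theory: the union bound over size-$n$ circuits does show that at most $2^{mn-\Omega(n^4)}$ of the $2^{mn}$ sequences are bad, and the certificate $(C,k)$ together with the atypical empirical bias does leave enough slack to encode a bad $R$ in fewer than $mn$ bits (the saving is essentially $m\cdot D(\hat p\,\|\,p_C)=\Omega(m/n^2)$ bits). But a counting/compression bound is not a reduction to $\Avoid$: you must exhibit a \emph{polynomial-size circuit} $f_n$ whose image covers every bad $R$, i.e.\ an efficiently computable \emph{decoder}. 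That is exactly your Step 3, and it is not established. The natural decoder for your encoding must map an index $j$ to the $j$-th element of $C^{-1}(b)$ (unranking a circuit's satisfying assignments), which is $\#\P$-hard; and no ``local rule that queries $C$ a constant number of times per coordinate'' can reconstruct an arbitrary $x_i\in C^{-1}(b_i)$ from a short seed, since saving bits on coordinate $i$ requires global information about the set $C^{-1}(b_i)$. You correctly name this as the main obstacle, but the sketch you give does not overcome it, and I do not see how it could be repaired within this framework.

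Korten's actual proof avoids direct compression entirely. From $1^n$ he builds the truth-table generator $TT_s$ mapping descriptions of size-$s$ circuits (on $\ell=O(\log n)$ inputs) to their $2^{\ell}$-bit truth tables; this circuit has stretch, so a single $\Avoid$ query returns the truth table of a function $f$ of circuit complexity exceeding $s$. He then runs the Nisan--Wigderson/Impagliazzo--Wigderson generator on $f$ to produce $x_1,\ldots,x_m$ deterministically in polynomial time. The ``decoding'' difficulty that blocks your Step 3 is absorbed into the NW reconstruction argument: any size-$n$ circuit $C$ that distinguishes the output list from uniform can be converted (with short advice) into a small circuit computing $f$, contradicting the hardness guaranteed by the $\Avoid$ answer. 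In other words, the efficient covering circuit you are trying to build by hand is, in disguise, the hardness-to-randomness machinery, and without it (or some genuinely new idea) your argument only proves that a good $R$ \emph{exists}, not that one is obtainable from a single $\Avoid$ query.
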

Korten and Pitassi demonstrate that $\Avoid$ (which they call $\problem{Weak Avoid}$)
can be solved with one oracle query to $\LOP$.
\begin{proposition}[{\protect\cite[Theorem 1]{KP24}}]\label{prop:Avoid2LOP}
$\Avoid$ is polynomial-time many-one reducible to $\LOP$.
\end{proposition}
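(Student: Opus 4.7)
The plan is to construct, given an $\Avoid$ instance $C:\{0,1\}^n\to\{0,1\}^{n+1}$, a polynomial-size circuit $E$ on $2m$ inputs (with $m=\poly(n)$) that computes a relation $\prec$ on $\{0,1\}^m$, together with a polynomial-time decoding, such that any output of $\LOP(E)$---the true minimum if $\prec$ is a strict linear order, or a $2$- or $3$-cycle otherwise---is decoded into an element of $\{0,1\}^{n+1}\setminus\Img C$. The naive attempt of placing $\prec$ directly on $\{0,1\}^{n+1}$ and sending non-image elements to the bottom fails immediately, because ``$y\notin\Img C$'' is not a polynomial-time checkable predicate, whereas its negation has a short certificate (a preimage~$x$).

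My first step would therefore be to enrich the domain to tuples $(b,y,x)\in\{0,1\}\times\{0,1\}^{n+1}\times\{0,1\}^n$, where the role bit $b=1$ marks an \emph{image certificate} $(y,x)$ whose validity $C(x)=y$ is checkable by $E$, and $b=0$ marks an \emph{avoider claim} for $y$ (with $x$ playing only a tiebreaker role). The order $\prec$ would be designed so that every valid certificate $(1,y,x)$ with $C(x)=y$ forces all avoider claims $(0,y,\cdot)$ to sit above it, while every invalid triple $(1,y,x)$ with $C(x)\neq y$ sits above $(0,y,\cdot)$. If this succeeds, the minimum of $\prec$ must be an avoider claim $(0,y^*,\cdot)$ for which no certificate refutes $y^*$, and decoding it to $y^*$ is then immediate. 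Within each role I would additionally impose a lexicographic-like tiebreak to ensure uniqueness of the minimum.

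The step I expect to be the main obstacle is transitivity. The intended behaviour ``\emph{if $y$ has any valid certificate then its avoider claim loses}'' is a global property of $y$, not a pairwise property, and pieced together from purely local rules the resulting $\prec$ will in general admit $3$-cycles (e.g., $(0,y_1,\cdot)\prec(1,y_2,x_2)\prec(0,y_1',\cdot)\prec(0,y_1,\cdot)$ type patterns). I see two complementary routes to cope with this. Route (i): find a clever local priority scheme---for instance, a composite key of the form (role$_b$, lex on $y$, lex on $x$) interleaved so that the ``a valid certificate defeats a same-$y$ avoider claim'' rule falls out of the key comparison itself---so that transitivity holds structurally on every triple. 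Route (ii): accept that $\prec$ may fail to be a strict linear order, and exploit the second branch of $\LOP$'s output by a case analysis on which roles appear in the returned cycle, extracting either a collision $C(x)=C(x')$ with $x\neq x'$ (which yields an avoider by pigeonhole, since $C$ then hits at most $2^n-1$ of the $2^{n+1}$ outputs, and a specific avoider can be found by, e.g., inverting on the lex-smallest half) or a direct decoding of the cycle into an avoider.

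The bulk of the proof would then be (a) writing down the comparison rule and checking antisymmetry/totality carefully, (b) enumerating the finitely many structural shapes a $2$- or $3$-cycle can take given the role-bit pattern, and (c) verifying that in each case decoding produces a valid $y\in\{0,1\}^{n+1}\setminus\Img C$. If route (i) succeeds then (c) trivializes because no cycle ever occurs; if only route (ii) is viable then (c) becomes the technical heart. Either way, the size of $E$ is dominated by one evaluation of $C$ plus a constant number of lexicographic comparisons, keeping the reduction polynomial in $|C|$ and many-one.
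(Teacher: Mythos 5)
Your plan correctly isolates the crux --- ``$y\notin\Img C$'' is a $\class{coNP}$ predicate, so a pairwise comparison rule cannot directly push non-image points to the bottom --- but it does not overcome it, and the fallback you offer is broken. First, neither route is actually carried out: route~(i) is a hope for a ``clever local priority scheme,'' yet any scheme of the kind you describe provably admits $3$-cycles. Concretely, with the rule ``an avoider claim beats a certificate for a different $y$, but a \emph{valid} certificate beats the avoider claim for its own $y$, and avoider claims are tie-broken lexicographically,'' take $y_1\notin\Img C$ and $y_2=C(x_2)$ with $y_2<_{\mathrm{lex}}y_1$: then $(0,y_1,\cdot)\prec(1,y_2,x_2)\prec(0,y_2,\cdot)\prec(0,y_1,\cdot)$. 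The obstruction is structural: whether $(0,y^*,\cdot)$ should lie below a valid certificate $(1,y,x)$ with $y\neq y^*$ depends on whether $y^*$ is in the image, which the comparator cannot decide; whichever uniform choice you make either creates cycles or makes the global minimum a valid certificate (hence an image point). Second, route~(ii)'s rescue step is not polynomial time: exhibiting a collision $C(x)=C(x')$ with $x\neq x'$ only certifies that $\abs{\Img C}<2^{n+1}$ (which was already known, since $\abs{\Img C}\le 2^n$); it gives no way to \emph{name} a point outside the image, and ``inverting on the lex-smallest half'' amounts to $2^n$ $\NP$-queries. Finding an avoider from a collision is essentially the original problem.

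For comparison, the cited proof of Korten and Pitassi does not try to repair a cyclic tournament of local claims at all. It builds, from $C$, a relation that is a \emph{strict linear order for every input circuit} --- transitivity holds by construction, so the counterexample branch of $\LOP$ is never exercised by the reduction's own instances --- over a domain whose elements carry more global information than a single (certificate, claim) pair, and it proves that the minimum decodes to a point outside $\Img C$ by a counting/rank argument over the whole order rather than by pairwise defeat rules. That global viewpoint is precisely what is missing from your sketch: the property ``some valid certificate exists for $y$'' is inherently non-local, and the known reduction encodes enough of the global structure of $C$ into each element of the ordered set that minimality alone forces non-membership in the range. As written, your proposal identifies the right difficulty but does not constitute a proof.
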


The main result of this section is the following theorem.

\begin{theorem}[Theorem \ref{THM:main1}, restated]
\label{thm:PMAinLtP}
    $\P^{\prMA}\subseteq \LtP$.
\end{theorem}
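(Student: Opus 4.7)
The plan is to extend the proof of $\MA\subseteq\LtP$ from \cite{KP24} by constructing a single pseudorandom generator with one $\LOP$ query and then using it to derandomize every $\prMA$ oracle query into an $\NP$ query.

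Let $L\in\P^{\prMA}$ be decided by a polynomial-time deterministic oracle machine $M$ that uses loose access to a promise problem $\Pi=(\Pi_\YES,\Pi_\NO)\in\prMA$ with underlying $\MA$-verifier $V(q,r,w)$ of perfect completeness (any $\MA$-verifier can be amplified to this form). For an input $x$ of length $n$, let $s(n)$ be a polynomial upper bound on the size of every Boolean circuit obtained from $V$ by hardwiring an admissible query $q$ and witness $w$. First, apply Proposition~\ref{prop:PRG2Avoid} followed by Proposition~\ref{prop:Avoid2LOP}: a single $\LOP$ query on input $1^{s(n)}$ yields a pseudorandom generator $R_n$ of polynomial size that $(1/s(n))$-fools every Boolean circuit of size $s(n)$. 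Since $R_n$ depends only on $n$, it can be reused across all (adaptive) oracle calls made by $M$.

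Next, replace $\Pi$ by the derandomized language
\[\widetilde\Pi \;:=\;\Bigl\{\,q\;\Bigm|\;\exists\,w\colon\ \Pr_{r\gets U(R_n)}[\,V(q,r,w)=1\,]\ge\tfrac{3}{4}\,\Bigr\},\]
whose membership is decidable in $\NP$: a nondeterministic machine guesses $w$ and computes the empirical mean of $V(q,\cdot,w)$ over the polynomially many strings of $R_n$. Perfect completeness together with the fooling guarantee makes $\widetilde\Pi$ \emph{consistent} with $\Pi$ in the sense of Section~\ref{subsec:loose}: for $q\in\Pi_\YES$ some $w$ makes $V(q,\cdot,w)$ accept on every $r$, hence on every string of $R_n$, so $q\in\widetilde\Pi$; for $q\in\Pi_\NO$ no $w$ makes $V(q,\cdot,w)$ accept with true probability at least $1/2$, so over $R_n$ the acceptance rate cannot exceed $1/2+1/s(n)<3/4$, and $q\notin\widetilde\Pi$. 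By the equivalence of loose access and access through a consistent language established in Section~\ref{subsec:loose}, $M$ still computes $L$ correctly when $\widetilde\Pi$ replaces $\Pi$ as its oracle.

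Finally, the resulting simulation of $M$ on $x$ performs exactly one $\LOP$ query (to produce $R_n$) together with polynomially many adaptive $\NP$ queries (to evaluate $\widetilde\Pi$, which is parameterized by the polynomial-size string $R_n$). This is a $\P^{\NP}$-Turing reduction to $\LOP$, so $L\in\LtP$ by Definition~\ref{def:LtPred}. The main subtlety is that a single PRG must simultaneously fool the circuits arising from every adaptive query $q$; this is automatic because Proposition~\ref{prop:PRG2Avoid} delivers a PRG that fools \emph{every} circuit of the prescribed size, and we simply pick $s(n)$ polynomial and large enough to dominate the size of $V(q,\cdot,w)$ for all admissible $q$ and $w$. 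The second, more conceptual, subtlety is that $\prMA$ is a class of promise problems while $\LtP$ is a class of languages; this gap is bridged precisely by the observation that $\widetilde\Pi\in\NP\subseteq\LtP$ is a \emph{language} consistent with the promise $\Pi$, together with the loose-access formalism of Section~\ref{subsec:loose}.
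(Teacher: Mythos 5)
Your proposal is correct and follows essentially the same route as the paper: construct Korten's pseudorandom generator via $\Avoid$ and $\LOP$, then replace each $\prMA$ query by an $\NP$ query for a language consistent with the promise problem, invoking the loose-access/consistent-language equivalence and $\P^{\NP}\subseteq\P^{\LtP}=\LtP$. The only cosmetic difference is that you accept when the empirical acceptance rate over the PRG exceeds $3/4$, whereas the paper takes the conjunction $\forall i\,A(q,w,g_i)=1$ (i.e.\ threshold $1$, justified by perfect completeness); both choices work for the same reason.
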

\begin{proof}
    We show how to replace the oracle $\prMA$ by $\LtP$. 
    % Caution: there may be many oracles depending on each other.
    % In fact in our application this is not the case, but...
    Since $\P^\LtP=\LtP$ by Definition~\ref{def:LtPred}, the result follows.

    One can assume that calls to the $\prMA$ oracle are made for input lengths such that
    Arthur can be replaced by a circuit $A(x,w,r)$ of size at most $s(n)$ for a specific polynomial $s$.
    %We defined $\prMA$ with perfect completeness
    One can assume perfect completeness for $A$, 
    that is, for $x$ in the promise set ``$\problem{YES}$'',
    there is $w$ such that $\forall r\ A(x,w,r)=1$.
    
    Before simulating the $\prMA$ oracle, our deterministic polynomial-time Turing machine
    will make oracle calls % in fact, one call, as they can be combined
    to $\LtP$ in order to build a pseudorandom generator
    sufficient to derandomize circuits of size $s(n)$.
    By Proposition \ref{prop:PRG2Avoid}, such a pseudorandom generator $G$, which is a sequence $G(1^{s(n)})$ of pseudorandom strings $g_1,\ldots,g_m\in\{0,1\}^{s(n)}$
    for $m$ bounded by a polynomial in $s(n)$, can constructed (for $m=s(n)^6$
    and error $\frac1{s(n)}$) using a reduction to {$\Avoid$}.
    Subsequently, by Proposition \ref{prop:Avoid2LOP},   %and Korten and Pitassi prove that 
    $\Avoid$ is reducible to {$\LOP$}. %in its turn.
    As a result, $\{g_i\}_{i=1}^m$ can be computed in deterministic polynomial time by querying an {$\LtP$} oracle.

    After $G$ is computed, each call to the $\prMA$ oracle can be replaced
    by an $\NP\subseteq\LtP$ query
    $\exists w\ C(w)$ for the circuit $C$ that computes the conjunction of the circuits $A(x,w,g_i)$ 
    with hardwired $x$ and $g_i$, for every $i$.
    Note that such queries constructed for $x$ outside of the promise set are still valid $\NP$ queries
    even if Arthur does not conform to the definition of $\MA$ in this case.
    These oracle answers are irrelevant, because the original $\P^{\prMA}$ machine must return 
    the correct (in particular, the same) answer irrespectively of the oracle's answer.
    %for every language $L$ that is consistent with the promise problem,
    %in particular, the language $\{x\colon \exists w\ \bigwedge_{i=1}^m A(x,w,g_i),\ \{g_i\}_{i=1}^m=G(1^{s(|x|)})\}$, where $s(n)$ is the size of $A$ for inputs $x$ of length $n$ and $m=(s(n))^6$.
\end{proof}

\subsection{The Input-Oblivious Setting}\label{subsec:LOtP}
Korten proves (Prop.~\ref{prop:PRG2Avoid}) that {\PRG} (Def.~\ref{def:PRG}) reduces 
to $\Avoid$, and Korten and Pitassi \cite{KP24} compute $\Avoid$
in $\LtP=\P^\LtP$. Since {\PRG} has a unary input,
one can observe that {\PRG} can be computed using an input-oblivious oracle.
This gives raise to tighter containments.
Indeed, in the non-input oblivious setting 
this containments become equalities.
However, input-oblivious classes lack some of the nice
closure properties and therefore require a special treatment.

\begin{proposition}\label{prop:PRG2OLtP}
{\PRG} can be computed in deterministic polynomial time
with an $\OLtP$ oracle. 
\end{proposition}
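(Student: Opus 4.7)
The plan is to compose the reductions of Propositions~\ref{prop:PRG2Avoid} and~\ref{prop:Avoid2LOP}: $\PRG$ on input $1^n$ becomes the problem of finding the minimum $\mu_n \in \{0,1\}^{q(n)}$ (for some polynomial $q$) of a strict linear order circuit $E_n$ that is computable in polynomial time from $1^n$, followed by a polynomial-time post-processing step that yields the $\PRG$ output. The crucial observation is that because the input to $\PRG$ is unary, both $E_n$ and $\mu_n$ depend only on $n$, so the minimum can be extracted bit-by-bit using input-oblivious oracles.

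Concretely, for each bit index $i \in [q(n)]$, I would introduce the language
\[L_i \eqdef \{x \in \{0,1\}^* : \mu_{|x|}[i] = 1\},\]
which depends only on $|x|$, and argue $L_i \in \OLtP$. The intended verifier $V$ interprets each of its arguments as a pair $(b, w)$ with $b \in \{0,1\}$ and $w \in \{0,1\}^{q(|x|)}$, computes $E_{|x|}$ in polynomial time from $|x|$, and declares $(b_u, w_u) <_x (b_v, w_v)$ iff either $E_{|x|}(w_u, w_v) = 1$, or $w_u = w_v$ together with $b_u = w_u[i]$ and $b_v \neq w_v[i]$. Routine checks (irreflexivity, totality, transitivity) confirm that $<_x$ is a strict linear order: among pairs sharing a $w$-component the tie-breaker selects the one whose leading bit equals $w[i]$, while among differing $w$-components the order follows $E_{|x|}$. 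Hence the unique minimum of $<_x$ is $(\mu_n[i], \mu_n)$. Setting $y_n = z_n = \mu_n$ then satisfies both conditions of Definition~\ref{def:OLtP}: when $\mu_n[i] = 1$, every $x$ of length $n$ lies in $L_i$ and the minimum equals $1 y_n$; when $\mu_n[i] = 0$, no such $x$ lies in $L_i$ and the minimum equals $0 z_n$.

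The final algorithm then queries $L_i$ on input $1^n$ for each $i \in [q(n)]$, assembling $\mu_n$ from $q(n)$ $\OLtP$ oracle calls, and applies the post-processing of the two reductions to output the pseudorandom generator. The only subtle point I anticipate is ensuring that the circuit $E_n$ produced by the composed reduction is a genuine strict linear order for every $n$, as required by Definition~\ref{def:OLtP}; I expect this to follow from how the $\Avoid$-to-$\LOP$ reduction of \cite{KP24} is designed (indeed, the symmetric-alternation characterization of $\LtP$ in Definition~\ref{def:LtPalt} already demands a linear order on every input), but if it were to fail on some $n$ the verifier could simply fall back to the lexicographic order on the augmented strings.
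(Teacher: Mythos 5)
Your order construction itself is sound: the order on pairs $(b,w)$ that follows $E_{|x|}$ on the $w$-components and tie-breaks equal $w$'s by whether $b=w[i]$ is indeed a strict linear order whose minimum is $(\mu_n[i],\mu_n)$, and taking $y_n=z_n=\mu_n$ meets Definition~\ref{def:OLtP}. Your route is also genuinely different from (and arguably cleaner than) the paper's: you compose with the many-one reduction of $\Avoid$ to $\LOP$ and tie-break on the claimed bit, whereas the paper goes through a truth-table reduction to an $\LtP$ language and builds one big order on tuples of sub-certificates with syntactic-correctness checks. However, there is a real gap in how you access the oracle. You introduce a \emph{separate} language $L_i$ for every bit position $i$, and on input $1^n$ your algorithm queries $L_1,\dots,L_{q(n)}$. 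Since $q(n)\to\infty$, this is an unbounded family of distinct oracle languages; an oracle Turing machine (and the statement ``with an $\OLtP$ oracle'') permits only a fixed oracle, so as written this is not a legitimate reduction.

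The missing idea is precisely the one the paper's proof is organized around: collapse the family $\{L_i\}$ into a single $\OLtP$ language by encoding the pair $(n,i)$ into the length of the unary query --- e.g.\ query $1^{n^d+i}$ for a suitable constant $d$, and have the verifier recover $n=\lfloor\sqrt[d]{t}\rfloor$ and $i=t-n^d$ from the input length $t$ before running your construction, with a harmless default order for out-of-range $i$. With that modification your argument goes through. Your remaining worry --- that the composed circuit $E_n$ be a genuine strict linear order for every $n$ --- is discharged by the design of the $\Avoid$-to-$\LOP$ reduction of \cite{KP24}, and your lexicographic fallback keeps the verifier well-defined in any case (note that Definition~\ref{def:LtPalt} concerns the class $\LtP$, not the $\LOP$ instance produced by the reduction, so the guarantee really does have to come from the reduction itself).
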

\begin{proof} %[Proof (Full proof for the sake of self-completeness).]
    It is shown in \cite{korten2022hardest} (see Prop.~\ref{prop:PRG2Avoid}) that {\PRG}
    consisting of $m=n^6$ strings can be computed in deterministic polynomial time 
    with a single oracle query to $\Avoid$;
    denote the algorithm generating this query by $K$,
    let us assume w.l.o.g. that it outputs a circuit of bit size $n^d$ 
    with $n^c$ inputs,
    for certain integer constants $d>c>2$.
    Since the input to this problem is unary, 
    for each $n$,
    a single instance $C_n$ of $\Avoid$ is to be solved,
    and its solution (i.e. any string outside of the image of $C_n$) solves $\PRG$ (in fact, without further processing).    
    A specific solution to $\Avoid$ can be computed using a deterministic polynomial-time 
    truth-table reduction to a language $L\in\LtP$ \cite{KP24}, 
    that is, there is a polynomial $p$,
    a $p$-time DTM $R$ computing the reduction, 
    a $p$-time DTM $T$ computing the truth table,
    a language $L\in\LtP$ with a polynomial-time verifier $V$ computing an order $<_V$
    such that $T[a_1,a_2,\ldots a_r]$ is a correct solution to $\Avoid$,
    where $r$ is the number of queries made by $R$ (w.l.o.g. they are of the same size),
    and for each $i$, $a_i$ is the first bit of the minimum certificate wrt $<_V$
    applied to the $i$-th query.
    
    Our oracle language $L'\in\OLtP$ is verified by
    the following verifier $U$ (defining its order $<_U$), which
    merges queries using a construction similar to \cite{KP24}. \\

\newcommand\cmt{/$\!\!$/}
\newcommand\lex{\mathrel{<_{lex}}}

For two bit strings $\alpha, \beta$ we define the relation: $\alpha \lex \beta$ to be 1 iff $\alpha$ is lexicographically smaller than $\beta$ (and 0, otherwise). \\
 
\medskip\noindent
    Input: $w$\\
    Certificates: $y$, $z$\\[+3pt]
    Algorithm:
    \begin{enumerate}
        \item If $y=z$ as bit strings, then return $0$. \emph{{\cmt} $y \mathrel{\nless _U} z$}
        \item Compute $t:=|w|$.
        \item Compute $n:=\lfloor \sqrt[d]{|w|}\rfloor$.
        \item Compute $i:=t-w$.

        \item Let $a$ be the first bit of $y$, and $b$ be the first bit of $z$.\\
        \emph{{\cmt} Informally, this bit is a claim for the value of the $i$-th bit of {\PRG}.}
        \item
        If $i>n^c$ then \emph{{\cmt} Out of range.}\\
        \phantom{WWW}if $a\neq b$, then return the result of comparison $a<b$, 
        otherwise return $y \lex z$.

        \item Run $K(1^n)$, denote the resulting circuit by $C_n$.\\
              \emph{{\cmt} Note that the length of its description is $n^d$.}

        \item Run $R(C_n)$ to compute queries $q_1,q_2,\ldots,q_r$.
        
        \item Parse $y$ as $a x y_1\ldots y_{r}$ and $z$ as $b x' z_1\ldots z_{r}$,
              where $a,b\in\{0,1\}$, $x,x'\in \{0,1\}^{n^d}$,\linebreak$y_1,\ldots,y_{r},z_1,\ldots,z_{r}\in\{0,1\}^s$, where $s$ is the bit size of
              elements of $<_V$ for the queries computed by $R$.\\
              \emph{{\cmt} $x,x'$ are candidates for the solution of $\Avoid$ computed by $R,T$.}

        \item Call $y$ syntactically incorrect if either $x[i]$ (the $i$-th bit of $x$) differs from $a$ or $T[y_1[1],\ldots,y_{r}[1]]\neq x$.\\
        \emph{{\cmt} That is, $y$ does not claim that the answer is $x[i]$, or its
        sub-certificates do not yield $x$ as $T$'s answer.}\\[+2mm]
        Define syntactically incorrect $z$ similarly.

        \item If both $y$ and $z$ are syntactically incorrect,
        return $y \lex z$. 

        \item If exactly only one of $y,z$ is syntactically incorrect,
        state that it is greater than the other certificate, return $1$ or $0$ accordingly.

        \item \emph{Now $y,z$ are syntactically correct.}\\
        If $y_1\ldots y_{r}\neq z_1\ldots z_{r}$
        then find the first $j$ such that $y_j\neq z_j$
        and return $V(q_j,y_j,z_j)$.

        \item \emph{We are done and never get to this point:
        the certificates are different,
        but the sub-certificates are equal, 
        thus $x\neq x'$ so one of the certificates 
        had to be recognized as syntactically incorrect by $T$ lookup.}
    \end{enumerate}
    Our verifier $U$ is input-oblivious (it simply ignores the input
    and only uses its length), it computes a linear order,
    and its minimal element starts with the bit
    equal to the $i$-th bit of the pseudorandom generator
    computed by the composition of $K$, $R$, and $T$.
    Thus $U$ defines a language in $\OLtP$,
    and consequent queries $1^{n^d+i}$ to it reveal the bits of
    the pseudorandom generator.
\end{proof}

\begin{corollary}\label{cor:OLtP}\phantom{.}
\begin{enumerate}
    \item 
    $\P^\prMA\subseteq\P^{\OLtP,\NP}\subseteq\LtP$.
    \item 
    $\P^\prOMA\subseteq\P^{\OLtP,\prONP}\subseteq\OtP\cap\LtP$.
\end{enumerate}
\end{corollary}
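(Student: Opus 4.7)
The proof plan is to mimic the argument of Theorem~\ref{thm:PMAinLtP} but, crucially, to build the pseudorandom generator using Proposition~\ref{prop:PRG2OLtP} (an $\OLtP$ oracle), so that the PRG itself is obtained from an input-oblivious oracle and depends only on the input length. For the left containment of item~(1), I would simulate the given $\P^{\prMA}$ machine step by step; before answering any $\prMA$-query, I would first issue the input-length-only $\OLtP$-queries supplied by Proposition~\ref{prop:PRG2OLtP} to assemble pseudorandom strings $g_1,\dots,g_m$ of the required length. Assuming perfect completeness for Arthur's circuit $A$ (wlog), each $\prMA$-query $q$ is then replaced by the single $\NP$-query ``$\exists w\ \bigwedge_{i=1}^{m} A(q,w,g_i)=1$''. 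On promise yes-instances this accepts because Arthur has a universal witness; on promise no-instances it rejects by the PRG fooling $A$; outside the promise, loose-oracle access permits any answer, which the base machine must tolerate anyway.

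For item~(2), I would run the identical derandomization on a $\P^{\prOMA}$ machine. The only observation to add is that Merlin's witness $w_n$ is now input-oblivious by Definition~\ref{def:prOMA}, and the PRG strings $g_i$ depend only on $n$ by construction, so the derandomized query ``$\exists w\ \bigwedge_{i=1}^{m} A(q,w,g_i)=1$'' has an input-oblivious witness on yes-instances and no witness on no-instances. This makes it a $\prONP$-query rather than an $\NP$-query, yielding $\P^{\prOMA}\subseteq\P^{\OLtP,\prONP}$.

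For the right-hand containments, the $\LtP$ part is immediate: $\OLtP\subseteq\LtP$ (comparing Definitions~\ref{def:OLtP} and~\ref{def:LtPalt}, an input-oblivious verifier is a special case of a general one); $\NP\subseteq\LtP$; and for every $\prONP$-problem with verifier $V$, the $\NP$-language $\{x:\exists w\ V(x,w)=1\}$ is consistent with it, whence $\P^{\prONP}\subseteq\P^{\NP}$. Combined with $\P^{\LtP}=\LtP$ (Definition~\ref{def:LtPred}), this gives both $\P^{\OLtP,\NP}\subseteq\LtP$ and $\P^{\OLtP,\prONP}\subseteq\LtP$. For the $\OtP$ part of item~(2), I would verify $\OLtP\subseteq\OtP$ by setting $w^{(b)}_n$ to the minimum of the $\OLtP$-order (prepended with bit $b$) and letting $A(x,u,v)=1$ iff $u\leq_x v$ and $u$'s first bit is $1$, and $\prONP\subseteq\prOtP$ by ignoring $v$ and placing the $\NP$-verifier on $u$. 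Then $\P^{\OLtP,\prONP}\subseteq\P^{\prOtP}$, which by Theorem~\ref{THM:main5} is contained in $\OtP$.

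The main delicate point will be the bookkeeping of input-obliviousness through the derandomization step: one must ensure that the same witness $w$ works for every seed $g_i$ simultaneously, which is why the assumption of perfect completeness for $\prOMA$ and the universality of $w_n$ across all $x$ of length $n$ are both essential. A secondary subtlety is that the simulated $\prONP$-query on inputs outside the promise of the original $\prOMA$-oracle needs no special treatment: by loose-access semantics the base $\P^{\prOMA}$-machine is required to return the correct answer regardless of the oracle's behaviour there, and any choice of answer to the derandomized $\prONP$-query is consistent with some $\NP$-language, which suffices.
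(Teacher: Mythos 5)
Your proposal is correct and follows essentially the same route as the paper: the left containments are obtained by computing the PRG via the $\OLtP$ oracle (Proposition~\ref{prop:PRG2OLtP}) and derandomizing the $\prMA$ (resp.\ $\prOMA$) oracle into $\NP$ (resp.\ $\prONP$) queries exactly as in Theorem~\ref{thm:PMAinLtP}, and the right containments use the same chain $\OLtP\subseteq\LtP$, $\NP\subseteq\LtP$, $\P^{\LtP}=\LtP$, $\OLtP\subseteq\OtP$, $\prONP\subseteq\prOtP$, and $\P^{\prOtP}\subseteq\OtP$ (Theorem~\ref{th:PprOtPinOtP}). Your extra details (the explicit $\StP$-style verifier for $\OLtP\subseteq\OtP$ and routing $\prONP$ through a consistent $\NP$ language for the $\LtP$ branch) are just the elaborations the paper leaves implicit.
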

\begin{proof}
    The first inclusion follows 
    similarly to the proof of Theorem~\ref{thm:PMAinLtP}: 
    the lower DTM can compute $\PRG$ using its $\OLtP$ oracle
    (Prop.~\ref{prop:PRG2OLtP})
    and then use it to derandomize $\prMA$ (resp., $\prOMA$) using its $\NP$ (resp., $\prONP$) oracle. 
    
    The second inclusion in all the items follows from $\NP\subseteq\LtP$, $\prONP\subseteq\prOtP$, 
    $\OLtP\subseteq\LtP$ (syntactically), $\OLtP\subseteq\OtP$ (similarly to $\LtP\subseteq\StP$),
    $\P^\LtP=\LtP$ \cite{KP24}($=\P^\prLtP$, because $\LtP$ is a syntactic class), $\P^{\OtP}=\OtP$ ($=\P^\prOtP$ (Theorem~\ref{th:PprOtPinOtP})).
\end{proof}

\subsection{An even better Karp--Liption--style collapse?}\label{subsec:OprMA}
One can define a promise version of $\prOMA$ where
the machine satisfies the bounded-error promise
on the promise set but it has to satisfy
the input-oblivious promise everywhere.
\begin{definition}
\label{def:OprMA}
    A promise problem $\Pi=(\Pi_\YES,\Pi_\NO)$ belongs to\/ $\OprMA$
    if there is a polynomial-time deterministic Turing machine $A$
    and,    for every $n\in\mathbb{N}$, there exists $w_n$
    (a witness that serves for every positive instance of length $n$),
    that satisfy
    the following conditions:   
        \begin{itemize}
        \item If $x\in\Pi_\YES\cap\{0,1\}^n$, then 
            $\forall r\ A(x,r,w_n)=1$,
        \item If $x\in\Pi_\NO$, then
            $\forall w\ \Pr_r [A(x,r,w)=1]<1/2$,
        \item If $x\notin\Pi_\YES$, then
            $\forall w\ \Pr_r [A(x,r,w)=1]<1$.
        \end{itemize}
\end{definition}
We conjecture that by careful inspection of the proof of \cite{CR11}
one can check that the only reason to use promise problems is a lack
of guarantee for bounded probability of error, therefore:
\begin{conjecture}
    The Karp--Lipton--style collapse of Proposition~\ref{prop:PprOMAcollapse}
    to $\P^\prOMA$ is actually to $\P^\OprMA$.
\end{conjecture}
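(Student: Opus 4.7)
The plan is to substantiate the conjecture by tracing carefully through the two steps that together yield the collapse $\PH = \P^\prOMA$: the $\prAM$-collapse $\PH = \P^\prAM$ from \cite{CR11}, and the subsequent reduction $\prAM \subseteq \prOMA$ (assuming $\NP \subseteq \Ppoly$) established in Proposition~\ref{prop:PprOMAcollapse}. The key observation that I would argue is that the specific $\prAM$ verifier produced by the CR11 collapse has a ``semantic perfect-completeness'' property: the set of inputs accepted with probability $1$ by some witness equals the YES set $\Pi_\YES$, not some larger language extending beyond the promise.

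First, I would isolate the $\prAM$ protocol used by Chakaravarthy and Roy for an arbitrary $L \in \PH$. In the Sengupta-style framework underlying their proof, Merlin is asked to send polynomial-size $\CircuitSAT$ circuits, which Arthur then uses to unfold the polynomial-hierarchy quantifiers defining $L$; the random bits are consumed only in probabilistic consistency tests on the prover's circuits. The first technical step is to verify --- or, if needed, to enforce via an additional self-consistency check (forcing the circuit to answer correctly on hard-wired SAT and UNSAT instances) --- that perfect completeness on every random string, combined with these consistency tests, already certifies that the provided circuit behaves like a genuine $\CircuitSAT$ oracle on all formulas arising from the reduction, and hence pins down $x$ to be in $L$. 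Consequently, for every $x \notin \Pi_\YES$, we have $\Pr_r[\exists w\ A(x,r,w) = 1] < 1$, which is stronger than the $< 1/2$ guaranteed only on the promise set.

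Next, I would rerun the Sengupta-style transformation of Proposition~\ref{prop:PprOMAcollapse}. Under $\NP \subseteq \Ppoly$, the new verifier is $A'(x, r, S) \eqdef A(x, r, S(A_{x,r}))$, where $S$ is an oblivious $\CircuitSAT$ circuit of appropriate size. Since trivially
\[ \Pr_r[A'(x,r,S) = 1] \le \Pr_r[\exists w\ A(x,r,w) = 1], \]
the strengthened soundness of the previous step lifts verbatim: for every $x \notin \Pi_\YES$ (even outside the promise) and every candidate circuit $S$, $\Pr_r[A'(x,r,S) = 1] < 1$. Together with the perfect-completeness condition on $\Pi_\YES$ (coming from the correct family of circuits sent by an honest Merlin) and the standard $<1/2$ bound on $\Pi_\NO$, this matches exactly the three conditions of Definition~\ref{def:OprMA}, upgrading the collapse from $\P^\prOMA$ to $\P^\OprMA$.

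The main obstacle is the first step: although the CR11 protocol intuitively \emph{looks} as if its perfect-completeness side captures precisely the underlying $\PH$-language, this is never made explicit, and the proof is written tolerating arbitrary adversarial behavior outside the promise. Closing this gap requires auditing every component of the collapse --- in particular, any adaptive queries inside $\P^\prAM$ and any probabilistic amplification --- to confirm that no intermediate step introduces a spurious input outside $\Pi_\YES$ on which some adversarial witness achieves acceptance probability exactly $1$. If any such step is found, the remedy is to insert additional hard-wired consistency tests into the $\prAM$ protocol so that perfect completeness on all random strings forces the prover's circuits to behave globally like a correct $\CircuitSAT$ oracle, which by construction only certifies YES-membership.
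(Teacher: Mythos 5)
First, an important point of reference: the paper does not prove this statement. It is stated explicitly as a conjecture, supported only by the authors' remark that ``by careful inspection of the proof of [CR11] one can check that the only reason to use promise problems is a lack of guarantee for bounded probability of error.'' So there is no proof in the paper to compare against, and your proposal has to stand on its own. Your plan is aligned with the authors' intuition: reduce everything to showing that the specific promise problem arising in the collapse admits no input outside $\Pi_\YES$ that some witness gets accepted with probability~$1$, and observe that this property survives the $\prAM\subseteq\prOMA$ step of Proposition~\ref{prop:PprOMAcollapse} because $\Pr_r[A'(x,r,S)=1]\le\Pr_r[\exists w\ A(x,r,w)=1]$. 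That inheritance observation is correct, and it is the easy half.

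The gap is that you defer, rather than carry out, the decisive verification --- and you propose to audit the wrong protocol. The $\prAM$ oracle in the collapse $\PH=\P^{\prAM}$ is not a Sengupta-style ``Merlin sends $\CircuitSAT$ circuits'' protocol: the Sengupta circuits appear in the step $\PH=\StP$ and again inside Proposition~\ref{prop:PprOMAcollapse}, but the promise problem actually queried by the outer machine in $\StP\subseteq\P^{\prAM}$ is a set-size estimation problem of the $\WSSE$ type (cf.\ the Remark following Corollary~\ref{cor:PprOMAinOtP}, which says the oracle is literally $\WSSE$). For such a problem the inputs outside the promise are pairs $(C,m)$ with $m/2<\ct{x}C<m$, and the question becomes whether the Goldwasser--Sipser lower-bound protocol, after being converted to have perfect completeness, accepts some such input with probability exactly~$1$. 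The standard perfect-completeness transformation (Merlin supplies shifts) is precisely the kind of step that can create spurious probability-$1$ acceptances just below the threshold, and your proposed remedy --- hard-wiring SAT/UNSAT consistency tests so that a lying circuit is caught --- has no analogue for a set-size protocol, where Merlin sends hash preimages rather than a circuit whose global correctness can be spot-checked. Until this specific verification is done (or the oracle protocol is redesigned so that ``accepted with probability $1$'' provably implies $\ct{x}C\ge m$), the statement remains what the paper says it is: a conjecture.
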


Anyway, this definition gives rise to an even nicer
corollary than Corollary~\ref{cor:OLtP}.
\begin{corollary}\label{cor:OprMA}\phantom{.}
    $\P^\OprMA\subseteq\P^{\OLtP,\ONP}\subseteq\OtP\cap\LtP$.
\end{corollary}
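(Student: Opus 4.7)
The plan is to mirror the proof of Corollary~\ref{cor:OLtP}, refining the derandomization step so that the oblivious witness of $\OprMA$ certifies a total language in $\ONP$ rather than merely a promise problem in $\prONP$. For the first inclusion $\P^\OprMA\subseteq\P^{\OLtP,\ONP}$, I would begin with a polynomial-time deterministic oracle machine $M$ deciding a language in $\P^\OprMA$ with oracle $\Pi\in\OprMA$, given by a polynomial-time verifier $A$ and oblivious witness sequence $\{w_n\}$ as in Definition~\ref{def:OprMA}. Using the $\OLtP$ oracle, the simulating machine would first construct, via Proposition~\ref{prop:PRG2OLtP}, a pseudorandom generator producing strings $g_1,\dots,g_m$ that fool all Boolean circuits of the polynomial size needed to derandomize $A$. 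Each $\OprMA$ query on input $x$ would then be redirected to the language $L':=\{x : V(x,w_{|x|})=1\}$ with verifier $V(x,w):=\bigwedge_{i=1}^m A(x,g_i,w)$, using the very same sequence $\{w_n\}$ as the universal $\ONP$ witness.

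To see this is sound, $L'\in\ONP$ would be immediate: $V$ is polynomial-time and $|w_n|$ is polynomial. Consistency of $L'$ with $\Pi$ on the promise set would follow from the standard two-case derandomization argument: for $x\in\Pi_\YES$, perfect completeness of $\OprMA$ gives $A(x,r,w_{|x|})=1$ for every $r$, so $V(x,w_{|x|})=1$; for $x\in\Pi_\NO$, the soundness bound $\Pr_r[A(x,r,w_{|x|})=1]<1/2$ combined with $1/n$-fooling of the PRG forces some $g_i$ to reject, so $V(x,w_{|x|})=0$. Since $M$ uses loose oracle access, whatever $L'$ does off the promise set is irrelevant, and the simulation of $M$ by $L'$-queries goes through.

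For the second inclusion $\P^{\OLtP,\ONP}\subseteq\OtP\cap\LtP$, I would reuse the syntactic-containment argument of Corollary~\ref{cor:OLtP} with $\ONP$ in place of $\prONP$. From $\OLtP\subseteq\LtP$ (syntactically, by Definitions~\ref{def:LtPalt} and~\ref{def:OLtP}), $\ONP\subseteq\NP\subseteq\LtP$, and $\P^\LtP=\LtP$~\cite{KP24}, one obtains $\P^{\OLtP,\ONP}\subseteq\P^{\LtP,\NP}=\LtP$. From $\OLtP\subseteq\OtP$ (exactly as $\LtP\subseteq\StP$), $\ONP\subseteq\OtP$ (use the $\ONP$-witness as the yes-certificate; any string serves as the no-certificate by standard $\NP$ soundness), and $\P^\OtP=\OtP$ from~\cite{CR06}, one obtains $\P^{\OLtP,\ONP}\subseteq\P^\OtP=\OtP$.

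The step I would scrutinize most carefully is the claim that the oblivious $\OprMA$ witness $\{w_n\}$ genuinely certifies $L'\in\ONP$ rather than only an element of $\prONP$. The third clause of Definition~\ref{def:OprMA} (``$\Pr_r<1$ for every $x\notin\Pi_\YES$'') is morally why this strengthening is available: it pins down $\{w_n\}$ as a single global object meaningful on every input of length $n$, not merely on inputs in $\Pi_\YES$. I would verify that the construction above actually uses only this total behavior and does not smuggle in any promise-dependent step, which is precisely what enables the promotion from $\prONP$ in Corollary~\ref{cor:OLtP} to the language class $\ONP$ here.
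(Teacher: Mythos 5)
Your overall route is the paper's: compute the pseudorandom generator with the $\OLtP$ oracle via Proposition~\ref{prop:PRG2OLtP}, use it to derandomize the $\OprMA$ oracle into a nondeterministic query with the same oblivious witness sequence, and then obtain the second inclusion exactly as in Corollary~\ref{cor:OLtP} (that part of your argument, including $\ONP\subseteq\OtP$ and $\P^{\OtP}=\OtP$, is fine).

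The gap is the sentence ``$L'\in\ONP$ would be immediate: $V$ is polynomial-time and $|w_n|$ is polynomial.'' Membership in $\ONP$ is not just a matter of having a polynomial-time verifier and a polynomial-length oblivious string; it also requires $\NP$-style soundness, namely that every $x\notin L'$ admits \emph{no} accepting witness under $V$. Your $L'$ is defined by evaluating $V$ on the particular string $w_{|x|}$, so an $x\notin L'$ may still have some other $w$ with $\bigwedge_i A(x,g_i,w)=1$; a language of the form $\condset{x}{V(x,w_{|x|})=1}$ is in general only in $\Ppoly$, not in $\ONP$. The natural repair is to query the existentially defined language $L''=\condset{x}{\exists w\ \bigwedge_i A(x,g_i,w)=1}$, which is sound by construction; the entire burden then shifts to completeness with the \emph{oblivious} witness, i.e., to showing that whenever some $w$ passes all the $g_i$, the string $w_{|x|}$ passes them too. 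That is exactly where the third clause of Definition~\ref{def:OprMA} must be used quantitatively --- it is the only hypothesis constraining $A$ on inputs outside $\Pi_\YES$ --- and it is the one step you explicitly defer to ``scrutiny'' without carrying out: one must argue that the strings $g_1,\dots,g_m$ witness the condition $\forall w\ \Pr_r[A(x,r,w)=1]<1$ for every $x\notin\Pi_\YES$, which does not follow from the $1/n$-fooling guarantee alone. Until that step is supplied, what you have built is only a $\prONP$ oracle, i.e., a proof of the second item of Corollary~\ref{cor:OLtP} rather than of the present statement.
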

\begin{proof}
    The second inclusion is already proved in Corollary~\ref{cor:OLtP},
    we need to show only the first one. It's also analogous:
    similarly to the proof of Theorem~\ref{thm:PMAinLtP}: 
    the lower DTM can compute $\PRG$ using its $\OLtP$ oracle
    (Prop.~\ref{prop:PRG2OLtP})
    and then use it to derandomize $\OprMA$ using its $\ONP$ oracle
    (we do not need a $\prONP$ oracle here, because the promise
    in $\OprMA$ does not concern the input-obliviousness).
\end{proof}

%============================================================
\section{Discussion and Further Research}\label{sec:OQ}

\paragraph{Some Unresolved Containments}
\begin{enumerate}

\item
Can one strengthen our inclusion $\LtP\subseteq \P^\prSBP$
to $\StP\subseteq\P^\prSBP$? 
One can try combining our techniques with the proof
of $\StP\subseteq\P^{\prAM}$ by Chakaravarthy and Roy \cite{CR11}.

Note that in the other direction
it is open even whether $\SBP \subseteq \StP$.

\item Chakaravarthy and Roy \cite{CR11} asked
whether $\P^\prMA$ and $\P^\prStP$ are contained in $\StP$.
While we resolved the first question, the second one remains open
We note that, although in the input-oblivious world both inclusions hold
($\P^\prOMA\subseteq\P^\prOtP\subseteq \OtP$, Corollary~\ref{cor:PprOMAinOtP}),
the proof of the latter inclusion (Theorem~\ref{THM:main5}) is essentially input-oblivious
(one needs to give all the certificates for the oracle non-adaptively,
and queries cannot be predicted because oracle answers cannot be predicted
for promise problems, thus in the non-input-oblivious setting it works for parallel queries only (Remark~\ref{rem:prStPpar})).

\item As was mentioned, the $\FP^{\prSBP}$ procedure for approximate counting can be implemented in $\FP^{\prSBP}_\parallel$ --- that is, using parallel (i.e. non-adaptive) oracle queries. On the other hand the containment $\LtP \subseteq \P^\prSBP$, which uses approximate counting as a black-box subroutine, seems to require sequential, adaptive queries. Could one implement the latter containment using parallel queries 
(i.e. show that  $\LtP \subseteq \P^\prSBP_\parallel$)?
In particular, as $\class{PP}$ is consistent with $\prSBP$ \cite{BGM06} and is closed under non-adaptive Turing reductions \cite{FR96}, this would imply that $\LtP \subseteq \class{PP}$.
Note that it is unknown even whether $\P^{\NP}\subseteq\class{PP}$, while
$\P^{\NP}\subseteq \LtP$. Moreover, there is an oracle separating the former two classes \cite{Beigel94}.

\item  A recent work of Gajulapalli et al. \cite{GGLS25} places 
$\LtP$ in the class\footnote{$\class{UEOPL}$ consists of problems that are many-one polynomial-time reducible to $\problem{Unique-End-of-Potential-Line}$, see \cite{FGMS20,GGLS25}.} $\class{UEOPL}^\NP$, which appears to be incomparable to our result (Theorem \ref{THM:main3}). Can one determine the relative status of  $\class{UEOPL}^\NP$ and $\P^\prSBP$?

\item  Similarly to $\LtP$, one could define a class of languages reducible to $\Avoid$.
A similar class of search problems, $\class{APEPP}$, has been defined by \cite{KKMP21,korten2022hardest}
(and Korten \cite{korten2022hardest} proved that constructing a hard truth table
is a problem that is complete for this class under $\P^\NP$-reductions);
however, we are asking about a class of languages.
Korten and Pitassi have shown that $\LtP$ can be equivalently defined
using many-one, Turing, or $\P^\NP$-reductions,
thus there are several options.
One can observe that the containment $\P^\prMA\subseteq \LtP$ (Theorem~\ref{thm:PMAinLtP})
is essentially proved via the intermediate class $\P^{\Avoid,\NP}$
that uses both an oracle for \emph{Range Avoidance}
(a single-valued or an essentially unique \cite{KP24} version) and an oracle for $\SAT$.
Can one prove that one of the containments in
$\P^\prMA\subseteq \P^{\Avoid,\NP} \subseteq \LtP$
is in fact an equality?

\item We defined an input-oblivious version $\OLtP$ of $\LtP$,
it is not obvious whether $\ONP\subseteq\OLtP$ or even $\ONP\subseteq\P^\prOLtP$.

\end{enumerate}

\paragraph{Relations between collapse results and ``hard'' functions.}
Starting from Karp--Lipton's paper \cite{KarpLipton80}, Kannan's fixed-polynomial circuit complexity lower bounds \cite{Kannan82} were improving accordingly to new collapses: if a new collapse $\NP\subseteq\Ppoly \implies \PH=\mathcal{C}$ is shown for a class $\mathcal{C}$ containing $\NP$,
it immediately implies lower bounds for this class, because if $\NP\not\subseteq\Ppoly$, we are already done.

However, a collapse to $\OtP$ \cite{CR06} did not imply lower bounds for $\OtP$, because $\NP$ is unlikely to be contained in it (after all, $\OtP\subseteq\Ppoly$). It was not until nearly two decades later that lower bounds for $\OtP$ have been shown by Gajulapalli, Li, and Volkovich \cite{GLV24} building on recent progress for the range avoidance problem \cite{korten2022hardest,li2023symmetric}, thus matching the progress on the two questions again.

Korten and Pitassi \cite{KP24} have shown fixed-polynomial lower bounds for their new class $\LtP$ without showing a collapse result, thereby introducing a misalignment once again, yet this time in the opposite direction. Our paper's inclusion $\P^\prMA\subseteq\LtP$ restores the balance.

However, the observation that in the input-oblivious world the currently best collapse $\P^\prOMA$ reopens this question. Does this class possess fixed-polynomial circuit lower bounds?
One can 
observe that Santhanam's proof \cite{Santhanam09}
of $\bsize[n^k]$ lower bounds for promise problems in $\prMA$
is input-oblivious. Indeed, the presented hard promise problems are actually in $\prOMA$! 
However, these \textbf{promise problems} do not yield a \textbf{language} in $\P^\prOMA$ that is hard for $\bsize[n^k]$,
and we leave this question for further research.

The best class for which fixed-polynomial circuit lower bounds can be proved (trivially) is $\P^{\ehardtt}$ (for any particular fixed $\eps > 0$), where $\ehardtt$,  asks\footnote{Korten \cite{korten2022hardest} defines a smoother version of it where the input length is not necessarily a power of two.} given $1^{2^n}$,
to output a truth table of a function $\{0,1\}^n\to\{0,1\}$ of circuit complexity at least $2^{\varepsilon n}$.
Can one prove a collapse to this class (or at least to $\P^\Avoid$)? Note that for the purpose of fixed-polynomial lower bounds
even a limited version of $\ehardtt$ suffices where the truth table is non-empty for a
number of entries greater than any polynomial and its complexity is only superpolynomial.

%======================================================================================================

%======================================================================================================
\section*{Acknowledgment}
The authors are grateful to Yaroslav Alekseev for discussions,
to Dmitry Itsykson for discussing and proofreading 
a preliminary version of this paper,
and to Jan Kraj\'\i\v{c}ek for providing
multiple references and for useful discussions.
%-------------------------------------------------------
%\bibliographystyle{alpha}
%\bibliography{bibliography}
%\bibliography{new_KLOP}

\begin{thebibliography}{CMMW19}

\bibitem[AGHK11]{AGHK11}
B.~Aydinlio{\~{g}}lu, D.~Gutfreund, J.~M. Hitchcock, and A.~Kawachi.
\newblock Derandomizing arthur-merlin games and approximate counting implies
  exponential-size lower bounds.
\newblock {\em Comput. Complex.}, 20(2):329--366, 2011.

\bibitem[AKKT20]{AKKT20}
S.~Aaronson, R.~Kothari, W.~Kretschmer, and J.~Thaler.
\newblock Quantum lower bounds for approximate counting via {L}aurent
  polynomials.
\newblock In {\em 35th Computational Complexity Conference, {CCC} 2020, July
  28-31, 2020, Saarbr{\"{u}}cken, Germany (Virtual Conference)}, volume 169 of
  {\em LIPIcs}, pages 7:1--7:47. Schloss Dagstuhl - Leibniz-Zentrum f{\"{u}}r
  Informatik, 2020.

\bibitem[AKSS95]{AKSS95}
V.~Arvind, J.~K{\"{o}}bler, U.~Sch{\"{o}}ning, and R.~Schuler.
\newblock If {NP} has polynomial-size circuits, then {MA}={AM}.
\newblock {\em Theor. Comput. Sci.}, 137(2):279--282, 1995.

\bibitem[AR20]{AR20}
S.~Aaronson and P.~Rall.
\newblock Quantum approximate counting, simplified.
\newblock In Martin Farach{-}Colton and Inge~Li G{\o}rtz, editors, {\em 3rd
  Symposium on Simplicity in Algorithms, {SOSA} 2020}, pages 24--32. {SIAM},
  2020.

\bibitem[BCG{\etalchar{+}}96]{BCGKT96}
N.~H. Bshouty, R.~Cleve, R.~Gavald{\`{a}}, S.~Kannan, and C.~Tamon.
\newblock Oracles and queries that are sufficient for exact learning.
\newblock {\em J. Comput. Syst. Sci.}, 52(3):421--433, 1996.

\bibitem[Bei94]{Beigel94}
R.~Beigel.
\newblock Perceptrons, {PP}, and the polynomial hierarchy.
\newblock {\em Computational Complexity}, 4:339--349, 1994.

\bibitem[BF99]{BF99}
H.~Buhrman and L.~Fortnow.
\newblock One-sided versus two-sided error in probabilistic computation.
\newblock In {\em STACS}, pages 100--109, 1999.

\bibitem[BFL91]{BFL91}
L.~Babai, L.~Fortnow, and C.~Lund.
\newblock Non-deterministic exponential time has two-prover interactive
  protocols.
\newblock {\em Computational Complexity}, 1:3--40, 1991.

\bibitem[BFT98]{BFT98}
H.~Buhrman, L.~Fortnow, and T.~Thierauf.
\newblock Nonrelativizing separations.
\newblock In {\em Proceedings of the 13th Annual IEEE Conference on
  Computational Complexity (CCC)}, pages 8--12, 1998.

\bibitem[BGM06]{BGM06}
E.~B{\"{o}}hler, C.~Gla{\ss}er, and D.~Meister.
\newblock Error-bounded probabilistic computations between {MA} and {AM}.
\newblock {\em J. Comput. Syst. Sci.}, 72(6):1043--1076, 2006.

\bibitem[BH92]{BuhrmanHomer92}
H.~Buhrman and S.~Homer.
\newblock Superpolynomial circuits, almost sparse oracles and the exponential
  hierarchy.
\newblock In {\em Foundations of Software Technology and Theoretical Computer
  Science, 12th Conference, New Delhi, India, December 18-20, 1992,
  Proceedings}, pages 116--127, 1992.

\bibitem[Cai07]{Cai2007}
J.-Y. Cai.
\newblock ${S_2P} \subseteq {ZPP^{NP}}$.
\newblock {\em Journal of Computer and System Sciences}, 73(1):25--35, 2007.

\bibitem[Can96]{Canetti96}
R.~Canetti.
\newblock More on {BPP} and the polynomial-time hierarchy.
\newblock {\em Inf. Process. Lett.}, 57(5):237--241, 1996.

\bibitem[CHR24]{chen2023symmetric}
L.~Chen, S.~Hirahara, and H.~Ren.
\newblock Symmetric exponential time requires near-maximum circuit size.
\newblock In {\em Proceedings of the 56th Annual ACM SIGACT Symposium on Theory
  of Computing}, STOC 2024, to appear. Association for Computing Machinery,
  2024.

\bibitem[CK98]{CK98}
M.~Chiari and J.~Kraj\'{\i}\v{c}ek.
\newblock Witnessing functions in bounded arithmetic and search problems.
\newblock {\em The Journal of Symbolic Logic}, 63(3):1095--1115, 1998.

\bibitem[CMMW19]{CMMW19}
L.~Chen, D.~M. McKay, C.~D. Murray, and R.~R. Williams.
\newblock Relations and equivalences between circuit lower bounds and
  {K}arp-{L}ipton theorems.
\newblock In {\em CCC-2019, LIPICS}, pages 30:1--21, 2019.

\bibitem[CR06]{CR06}
V.~T. Chakaravarthy and S.~Roy.
\newblock Oblivious symmetric alternation.
\newblock In {\em STACS}, pages 230--241, 2006.

\bibitem[CR11]{CR11}
V.~T. Chakaravarthy and S.~Roy.
\newblock {A}rthur and {M}erlin as oracles.
\newblock {\em Comput. Complex.}, 20(3):505--558, 2011.

\bibitem[FGMS20]{FGMS20}
J.~Fearnley, S.~Gordon, R.~Mehta, and R.~Savani.
\newblock Unique end of potential line.
\newblock {\em Journal of Computer and System Sciences}, 114:1--35, 2020.

\bibitem[FR96]{FR96}
L.~Fortnow and N.~Reingold.
\newblock {PP} is closed under truth-table reductions.
\newblock {\em Inf. Comput.}, 124(1):1--6, 1996.

\bibitem[GGLS25]{GGLS25}
K.~Gajulapalli, S.~Ghentiyala, Z.~Li, and S.~Saraogi.
\newblock Downward self-reducibility in the total function polynomial
  hierarchy.
\newblock {\em Electron. Colloquium Comput. Complex.}, {TR25-121}, 2025.

\bibitem[GLV24]{GLV24}
K.~Gajulapalli, Z.~Li, and I.~Volkovich.
\newblock Oblivious complexity classes revisited: Lower bounds and hierarchies.
\newblock In {\em 44th {IARCS} Annual Conference on Foundations of Software
  Technology and Theoretical Computer Science, {FSTTCS} 2024}, volume 323 of
  {\em LIPIcs}, pages 23:1--23:19. Schloss Dagstuhl - Leibniz-Zentrum f{\"{u}}r
  Informatik, 2024.

\bibitem[GM15]{GoldreichMeir2015}
O.~Goldreich and O.~Meir.
\newblock Input-oblivious proof systems and a uniform complexity perspective on
  {P}/poly.
\newblock {\em ACM Trans. on Comput. Theory}, 7(4):1--13,
  2015.

\bibitem[GS86]{GS86}
S.~Goldwasser and M.~Sipser.
\newblock Private coins versus public coins in interactive proof systems.
\newblock In {\em Proceedings of the 18th Annual {ACM} Symposium on Theory of
  Computing (STOC)}, pages 59--68, 1986.

\bibitem[GS88]{GS88}
J.~Grollmann and A.~L. Selman.
\newblock Complexity measures for public-key cryptosystems.
\newblock {\em {SIAM} J. Comput.}, 17(2):309--335, 1988.

\bibitem[HHT97]{HHT97}
Y.~Han, L.~A. Hemaspaandra, and T.~Thierauf.
\newblock Threshold computation and cryptographic security.
\newblock {\em {SIAM} J. Comput.}, 26(1):59--78, 1997.

\bibitem[HS15]{HS15}
E.~A. Hirsch and D.~Sokolov.
\newblock On the probabilistic closure of the loose unambiguous hierarchy.
\newblock {\em Inf. Process. Lett.}, 115(9):725--730, 2015.

\bibitem[IKV23]{IKV23a}
R.~Impagliazzo, V.~Kabanets, and I.~Volkovich.
\newblock Synergy between circuit obfuscation and circuit minimization.
\newblock In {\em Approximation, Randomization, and Combinatorial Optimization.
  Algorithms and Techniques, {APPROX/RANDOM} 2023}, volume 275 of {\em LIPIcs},
  pages 31:1--31:21. Schloss Dagstuhl - Leibniz-Zentrum f{\"{u}}r Informatik,
  2023.

\bibitem[IKW02]{IKW02}
R.~Impagliazzo, V.~Kabanets, and A.~Wigderson.
\newblock In search of an easy witness: exponential time vs. probabilistic
  polynomial time.
\newblock {\em J. of Computer and System Sciences}, 65(4):672--694, 2002.

\bibitem[Je{\v{r}}04]{JerdWPHP}
E.~Je{\v{r}}{\'{a}}bek.
\newblock Dual weak pigeonhole principle, boolean complexity, and
  derandomization.
\newblock {\em Annals of Pure and Applied Logic}, 129:1--37, 2004.

\bibitem[Je{\v{r}}07]{Jer2007}
E.~Je{\v{r}}{\'{a}}bek.
\newblock Approximate counting in bounded arithmetic.
\newblock {\em Journal of Symbolic Logic}, 72(3):959--993, 2007.

\bibitem[JVV86]{JVV86}
M.~Jerrum, L.~G. Valiant, and V.~V. Vazirani.
\newblock Random generation of combinatorial structures from a uniform
  distribution.
\newblock {\em Theor. Comput. Sci.}, 43:169--188, 1986.

\bibitem[Kan82]{Kannan82}
R.~Kannan.
\newblock Circuit-size lower bounds and non-reducibility to sparse sets.
\newblock {\em Information and Control}, 55(1-3):40--56, 1982.

\bibitem[KKMP21]{KKMP21}
R.~Kleinberg, O.~Korten, D.~Mitropolsky, and C.~Papadimitriou.
\newblock {Total Functions in the Polynomial Hierarchy}.
\newblock In James~R. Lee, editor, {\em 12th Innovations in Theoretical
  Computer Science Conference (ITCS 2021)}, volume 185 of {\em Leibniz
  International Proceedings in Informatics (LIPIcs)}, pages 44:1--44:18,
  Dagstuhl, Germany, 2021. Schloss Dagstuhl--Leibniz-Zentrum f{\"u}r
  Informatik.

\bibitem[KL80]{KarpLipton80}
R.~M. Karp and R.~J. Lipton.
\newblock Some connections between nonuniform and uniform complexity classes.
\newblock In {\em Proceedings of the 12th Annual {ACM} Symposium on Theory of
  Computing, April 28-30, 1980, Los Angeles, California, {USA}}, pages
  302--309, 1980.

\bibitem[Kor21]{Korten21}
O.~Korten.
\newblock The hardest explicit construction.
\newblock {\em CoRR}, abs/2106.00875, 2021.

\bibitem[Kor22]{korten2022hardest}
O.~Korten.
\newblock The hardest explicit construction.
\newblock In {\em 2021 IEEE 62nd Annual Symposium on Foundations of Computer
  Science (FOCS)}, pages 433--444. IEEE, 2022.

\bibitem[KP24]{KP24}
O.~Korten and T.~Pitassi.
\newblock {S}trong vs. {W}eak {R}ange {A}voidance and the {L}inear {O}rdering
  {P}rinciple.
\newblock {\em Electron. Colloquium Comput. Complex.}, {TR24-076}, 2024.

\bibitem[Kra25]{KraGen25}
J.~Kraj\'{\i}\v{c}ek.
\newblock {\em Proof Complexity Generators}.
\newblock London Mathematical Society Lecture Note Series. Cambridge University
  Press, 2025.

\bibitem[KW98]{KW98}
J.~K{\"{o}}bler and O.~Watanabe.
\newblock New collapse consequences of {NP} having small circuits.
\newblock {\em {SIAM} J. Comput.}, 28(1):311--324, 1998.

\bibitem[Li24]{li2023symmetric}
Z.~Li.
\newblock Symmetric exponential time requires near-maximum circuit size:
  Simplified, truly uniform.
\newblock In {\em Proceedings of the 56th Annual {ACM} Symposium on Theory of
  Computing, {STOC} 2024}, pages 2000--2007. {ACM}, 2024.

\bibitem[LR94]{LR94}
K.-J. Lange and P.~Rossmanith.
\newblock Unambiguous polynomial hierarchies and exponential size.
\newblock In {\em Structure in Complexity Theory Conference}, pages 106--115.
  IEEE Computer Society, 1994.

\bibitem[MAD25]{MAD25}
S.~C. Marshall, S.~Aaronson, and V.~Dunjko.
\newblock Improved separation between quantum and classical computers for
  sampling and functional tasks.
\newblock In {\em 40th Computational Complexity Conference, {CCC} 2025, August
  5-8, 2025, Toronto, Canada}, volume 339 of {\em LIPIcs}, pages 5:1--5:14.
  Schloss Dagstuhl - Leibniz-Zentrum f{\"{u}}r Informatik, 2025.

\bibitem[MVW99]{MVW99}
P.~B. Miltersen, N.~V. Vinodchandran, and O.~Watanabe.
\newblock Super-polynomial versus half-exponential circuit size in the
  exponential hierarchy.
\newblock In {\em COCOON}, pages 210--220, 1999.

\bibitem[OS18]{OS18}
R.~O'Donnell and A.~C.~C. Say.
\newblock The weakness of {CTC} qubits and the power of approximate counting.
\newblock {\em {ACM} Trans. Comput. Theory}, 10(2):5:1--5:22, 2018.

\bibitem[PWW88]{PWW88}
J.~Paris, A.~Wilkie, and A.~R. Woods.
\newblock Provability of the pigeonhole principle and the existence of
  infinitely many primes.
\newblock {\em J. Symb. Log.}, 53(4):1235--1244, 1988.

\bibitem[RS98]{RS98}
A.~Russell and R.~Sundaram.
\newblock Symmetric alternation captures {BPP}.
\newblock {\em Comput. Complex.}, 7(2):152--162, 1998.

\bibitem[San09]{Santhanam09}
R.~Santhanam.
\newblock Circuit lower bounds for {M}erlin--{A}rthur classes.
\newblock {\em SIAM J. Comput.}, 39(3):1038--1061, 2009.

\bibitem[Sch83]{Sch83}
U.~Sch{\"{o}}ning.
\newblock A low and a high hierarchy within {NP}.
\newblock {\em Journal of Computer and System Sciences}, 27:14--28, 1983.

\bibitem[Sip83]{Sipser83}
M.~Sipser.
\newblock A complexity theoretic approach to randomness.
\newblock In {\em Proceedings of the 15th Annual {ACM} Symposium on Theory of
  Computing, 25-27 April, 1983, Boston, Massachusetts, {USA}}, pages 330--335.
  {ACM}, 1983.

\bibitem[Sto85]{Stockmeyer85}
L.~J. Stockmeyer.
\newblock On approximation algorithms for {\#}{P}.
\newblock {\em {SIAM} J. Comput.}, 14(4):849--861, 1985.

\bibitem[SU06]{ShaltielU06}
R.~Shaltiel and C.~Umans.
\newblock Pseudorandomness for approximate counting and sampling.
\newblock {\em Comput. Complex.}, 15(4):298--341, 2006.

\bibitem[Tod91]{Toda91}
S.~Toda.
\newblock {PP} is as hard as the polynomial time hierarchy.
\newblock {\em {SIAM} J. Comput.}, 20(5):865--877, 1991.

\bibitem[Vin05]{Vinodchandran05}
N.~V. Vinodchandran.
\newblock A note on the circuit complexity of {PP}.
\newblock {\em Theor. Comput. Sci.}, 347(1-2):415--418, 2005.

\bibitem[Vol14]{Volkovich14c}
I.~Volkovich.
\newblock On learning, lower bounds and (un)keeping promises.
\newblock In {\em Proceedings of the 41st ICALP}, pages 1027--1038, 2014.

\bibitem[Vol20]{Volkovich20}
I.~Volkovich.
\newblock The untold story of {SBP}.
\newblock In Henning Fernau, editor, {\em The 15th International Computer
  Science Symposium in Russia, {CSR}}, volume 12159 of {\em Lecture Notes in
  Computer Science}, pages 393--405. Springer, 2020.

\end{thebibliography}
%=======================================================

\newcommand{\etalchar}[1]{$^{#1}$}

\end{document}